\documentclass[sigconf]{acmart}

\usepackage{graphicx}
\usepackage{subcaption}
\usepackage{tikz}
\usepackage{calc}

\usepackage{array}
\usepackage{tabularx}

\usepackage{url}

\usepackage{balance}

\usepackage{listings}

\usepackage{xspace}

\usepackage{cleveref}

\usepackage{paralist}

\usepackage{amsmath}

\usepackage{amssymb}
\usepackage{amsthm,mathtools}

\usepackage{comment}
\excludecomment{omit}

\newcommand{\cloudone}[0]{Amazon\xspace}
\newcommand{\sysname}[0]{\textsc{RNG}\xspace}

\newcommand{\change}[1]{#1}
\newcommand{\OMIT}[1]{}
\newcommand{\newtext}[1]{#1}

\newcommand{\parab}[1]{\vspace{2pt}\noindent\textbf{#1}.\xspace}

\makeatletter
\renewcommand{\sectionautorefname}{\S\@gobble}
\renewcommand{\subsectionautorefname}{\S\@gobble}
\renewcommand{\subsubsectionautorefname}{\S\@gobble}
\renewcommand{\appendixautorefname}{\S\@gobble}
\makeatother

\renewcommand\footnotetextcopyrightpermission[1]{}
\setcopyright{none}

\makeatletter
\def\@authorfont{\large}
\def\@affiliationfont{\large\sffamily}
\def\@typeset@author@bx{\bgroup\hsize=\author@bx@wd\def\and{\par}%
  \global\setbox\author@bx=\vtop{\if@ACM@sigchiamode\else\centering\fi
    \@authorfont\@currentauthors\par\vspace{4pt}\@affiliationfont
    \@currentaffiliation}\egroup
  \box\author@bx\hspace{\author@bx@sep}%
  \gdef\@currentauthors{}%
  \gdef\@currentaffiliation{}}
\makeatother

\acmDOI{}

\acmISBN{}

\acmPrice{}

\title{RNG: Flat Datacenter Networks at Scale}

\author{\Large
  {Giacomo Bernardi \qquad 
  Ratul Mahajan}\textsuperscript{\dag} \qquad C.~Seshadhri\textsuperscript{\ddag}\vspace{-6pt}}
  \author{\Large
  \setlength{\tabcolsep}{14pt}%
  \begin{tabular}{@{}cccc@{}}
    Enrico Carlesso & Chinchu Merine Joseph & Saurabh Kumar    & Pavan Manikonda\\[4pt]
    Luiza Popa      & Randy Ram             & Steven Robinson  & Elizabeth Tennent\\
  \end{tabular}}
\affiliation{
  \Large\institution{Amazon Web Services}}

\renewcommand{\shortauthors}{}

\begin{document}

\newtheorem{claim}[theorem]{Claim}
\newtheorem{fact}[theorem]{Fact}
\newtheorem{assumption}[theorem]{Assumption}
\newtheorem{observation}[theorem]{Observation}
\newtheorem{remark}[theorem]{Remark}
\newtheorem{model}{Model}[section]

\newcommand{\ignore}[1]{}

\newcommand{\cA}{{\cal A}}
\newcommand{\cB}{\mathcal{B}}
\newcommand{\cC}{{\cal C}}
\newcommand{\cD}{\mathcal{D}}
\newcommand{\cE}{\mathcal{E}}
\newcommand{\cF}{\mathcal{F}}
\newcommand{\cG}{\mathcal{G}}
\newcommand{\cH}{{\cal H}}
\newcommand{\cI}{{\cal I}}
\newcommand{\cJ}{{\cal J}}
\newcommand{\cL}{{\cal L}}
\newcommand{\cM}{{\cal M}}
\newcommand{\cP}{\mathcal{P}}
\newcommand{\cQ}{\mathcal{Q}}
\newcommand{\cR}{{\cal R}}
\newcommand{\cS}{\mathcal{S}}
\newcommand{\cT}{{\cal T}}
\newcommand{\cU}{{\cal U}}
\newcommand{\cV}{{\cal V}}
\newcommand{\cW}{{\cal W}}
\newcommand{\cX}{{\cal X}}

\newcommand{\HH}{\mathbb H}
\newcommand{\R}{\mathbb R}
\newcommand{\N}{\mathbb N}
\newcommand{\F}{\mathbb F}
\newcommand{\Z}{{\mathbb Z}}
\newcommand{\eps}{\varepsilon}
\newcommand{\lam}{\lambda}
\newcommand{\var}{\mathrm{var}}
\newcommand{\sgn}{\mathrm{sgn}}
\newcommand{\poly}{\mathrm{poly}}
\newcommand{\polylog}{\mathrm{polylog}}
\newcommand{\littlesum}{\mathop{{\textstyle \sum}}}
\newcommand{\half}{{\textstyle \frac12}}
\newcommand{\la}{\langle}
\newcommand{\ra}{\rangle}
\newcommand{\wh}{\widehat}
\newcommand{\wt}{\widetilde}
\newcommand{\calE}{{\cal E}}
\newcommand{\calL}{{\cal L}}
\newcommand{\calF}{{\cal F}}
\newcommand{\calW}{{\cal W}}
\newcommand{\calH}{{\cal H}}
\newcommand{\calN}{{\cal N}}
\newcommand{\calO}{{\cal O}}
\newcommand{\calP}{{\cal P}}
\newcommand{\calV}{{\cal V}}
\newcommand{\calS}{{\cal S}}
\newcommand{\calT}{{\cal T}}
\newcommand{\calD}{{\cal D}}
\newcommand{\calC}{{\cal C}}
\newcommand{\calX}{{\cal X}}
\newcommand{\calY}{{\cal Y}}
\newcommand{\calZ}{{\cal Z}}
\newcommand{\calA}{{\cal A}}
\newcommand{\calB}{{\cal B}}
\newcommand{\calG}{{\cal G}}
\newcommand{\calI}{{\cal I}}
\newcommand{\calJ}{{\cal J}}
\newcommand{\calR}{{\cal R}}
\newcommand{\calK}{{\cal K}}
\newcommand{\calU}{{\cal U}}
\newcommand{\barx}{\overline{x}}
\newcommand{\bary}{\overline{y}}

\newcommand{\bone}{{\bf 1}}
\newcommand{\ba}{\mathbf{a}}
\newcommand{\bb}{\mathbf{b}}
\newcommand{\bc}{\mathbf{c}}
\newcommand{\bd}{\mathbf{d}}
\newcommand{\be}{\mathbf{e}}
\newcommand{\bff}{\mathbf{f}}
\newcommand{\bg}{\mathbf{g}}
\newcommand{\bh}{\mathbf{h}}
\newcommand{\bi}{\mathbf{i}}
\newcommand{\bj}{\mathbf{j}}
\newcommand{\bk}{\mathbf{k}}
\newcommand{\bl}{\mathbf{l}}
\newcommand{\bmm}{\mathbf{m}}
\newcommand{\bn}{\mathbf{n}}
\newcommand{\bo}{\mathbf{o}}
\newcommand{\bp}{\mathbf{p}}
\newcommand{\bq}{\mathbf{q}}
\newcommand{\br}{\mathbf{r}}
\newcommand{\bs}{\mathbf{s}}
\newcommand{\bt}{\mathbf{t}}
\newcommand{\bu}{\mathbf{u}}
\newcommand{\bv}{\mathbf{v}}
\newcommand{\bw}{\mathbf{w}}
\newcommand{\bx}{\mathbf{x}}
\newcommand{\by}{\mathbf{y}}
\newcommand{\bz}{\mathbf{z}}

\newcommand{\bA}{\boldsymbol{A}}
\newcommand{\bD}{\boldsymbol{D}}
\newcommand{\bG}{\boldsymbol{G}}
\newcommand{\bH}{\boldsymbol{H}}

\newcommand{\bR}{\boldsymbol{R}}
\newcommand{\bS}{\boldsymbol{S}}
\newcommand{\bX}{\boldsymbol{X}}
\newcommand{\bY}{\boldsymbol{Y}}
\newcommand{\bZ}{\boldsymbol{Z}}

\newcommand{\NN}{\mathbb{N}}
\newcommand{\RR}{\mathbb{R}}
\newcommand{\mbone}{\mathbbm{1}}

\newcommand{\abs}[1]{\left\lvert #1 \right\rvert}
\newcommand{\norm}[1]{\left\lVert #1 \right\rVert}
\newcommand{\ceil}[1]{\lceil#1\rceil}
\newcommand{\Exp}{\EX}
\newcommand{\floor}[1]{\lfloor#1\rfloor}
\newcommand{\cei}[1]{\lceil#1\rceil}

\newcommand{\EX}{\mathbf{E}}
\newcommand{\prob}{{\rm Prob}}
\newcommand{\supp}{\mathrm{supp}}
\newcommand{\eqdef}{:=}

\newcommand{\gset}{Y}
\newcommand{\gcol}{{\cal Y}}

\newcommand{\otilde}{\widetilde{O}}

\newcommand{\Pois}{\mathrm{Pois}}

\newcommand{\nbr}{{{Nbr}}}
\newcommand{\wayp}{{{WP}}}
\newcommand{\ring}{{{Ring}}}
\newcommand{\inner}{{\iring}}
\newcommand{\out}{{\oring}}
\newcommand{\iring}{{{IR}}}
\newcommand{\oring}{{{OR}}}

\newcommand{\spg}[2]{\cS_{#1, #2}}
\newcommand{\point}[1]{\cP_{#1}}
\newcommand{\desc}[1]{D_{#1}}

\newcommand{\Sec}[1]{\hyperref[sec:#1]{\S\ref*{sec:#1}}}
\newcommand{\Eqn}[1]{\hyperref[eq:#1]{(\ref*{eq:#1})}}
\newcommand{\Fig}[1]{\hyperref[fig:#1]{Fig.\,\ref*{fig:#1}}}
\newcommand{\Tab}[1]{\hyperref[tab:#1]{Tab.\,\ref*{tab:#1}}}
\newcommand{\Thm}[1]{\hyperref[thm:#1]{Theorem\,\ref*{thm:#1}}}
\newcommand{\Fact}[1]{\hyperref[fact:#1]{Fact\,\ref*{fact:#1}}}
\newcommand{\Lem}[1]{\hyperref[lem:#1]{Lemma\,\ref*{lem:#1}}}
\newcommand{\Prop}[1]{\hyperref[prop:#1]{Prop.~\ref*{prop:#1}}}
\newcommand{\Cor}[1]{\hyperref[cor:#1]{Corollary~\ref*{cor:#1}}}
\newcommand{\Conj}[1]{\hyperref[conj:#1]{Conjecture~\ref*{conj:#1}}}
\newcommand{\Def}[1]{\hyperref[def:#1]{Definition~\ref*{def:#1}}}
\newcommand{\Alg}[1]{\hyperref[alg:#1]{Alg.~\ref*{alg:#1}}}
\newcommand{\Ex}[1]{\hyperref[ex:#1]{Ex.~\ref*{ex:#1}}}
\newcommand{\Clm}[1]{\hyperref[clm:#1]{Claim~\ref*{clm:#1}}}
\newcommand{\Step}[1]{\hyperref[step:#1]{Step~\ref*{step:#1}}}
\newcommand{\Mod}[1]{\hyperref[mod:#1]{Model~\ref*{mod:#1}}}

\newcommand{\Omgt}{\widetilde{\Omega}}

\begin{abstract}

We design and deploy 
in production
the first flat datacenter networks. 
Our design, called \sysname, is based on quasi-random graphs. 
While the cost and fault-tolerance benefits of such topologies have been long known, their practical realization has been hampered by a lack of scalable routing and cabling approaches. 
\sysname has a new distributed routing protocol that exploits the properties of random graphs to find a large number of edge disjoint paths between pairs of endpoints. 
It uses a novel passive optical device that internally shuffles cables, which 
makes its cabling complexity similar to that of fat trees. 
We show that \sysname matches or exceeds the performance of fat trees for a range of traffic patterns, despite being up to 45\% cheaper.
\sysname is now the default datacenter network for most workloads at \cloudone.

\end{abstract}

\maketitle

\begingroup
\renewcommand\thefootnote{\dag}
\footnotetext{Also affiliated with the University of Washington}
\endgroup

\begingroup
\renewcommand\thefootnote{\ddag}
\footnotetext{Also affiliated with the University of California, Santa Cruz}
\endgroup

\vspace{12pt}

\section{Introduction}
\label{sec:introduction}

The simplicity of fat tree topologies has made them the workhorse of datacenter networks. But they offer a stark trade-off between cost and performance. Operators either build non-blocking fabrics in which any endpoint can transmit/receive at its full capacity; or they build oversubscribed fabrics that congest when a small, unfavorable set of endpoints transmit/receive at a high rate. Non-blocking fabrics are prohibitively expensive at scale, so organizations tend to build oversubscribed fabrics and risk congestion even for performance-sensitive workloads~\cite{meta-ml,alibaba-hpn2024,jupiter-rising2015}.

The root
problem is that fat trees lack {\em capacity fungibility}. The strict hierarchical structure means traffic between pairs of endpoints is limited to small subsets of links in the topology. These links can congest while most others lie idle.
The lack of capacity fungibility increases the cost of providing a consistent, congestion-free experience. One must provision ample capacity everywhere even if a small subset of endpoint pairs need high capacity at any given time.

Researchers have proposed reconfigurable topologies for this problem~\cite{projector2016,flyways2011,firefly2014,c-through2010,helios2010,mirror-mirror2012,rotornet2024, opera2020,lightwave2023}. These designs use hardware such as optical switches, steerable wireless antennas, and free-space optical devices to dynamically change capacity between endpoints. Such hardware is not (yet) proven at scale. Most designs also use a centralized control plane to infer fabric-wide demand and configure the hardware. Such control planes are difficult to realize at scale for bursty, latency-sensitive workloads such as Web services. 

A promising alternative that provides capacity fungibility is a flat topology with commodity routers connected as an expander graph~\cite{jellyfish2012,xpander2016,throughput-centric2021,beyond-fat-trees2017,jyothi2016measuring}. 
In addition to their lower cost, such topologies are fault tolerant because the blast radius of any failure is small, unlike upper-layer failures in fat trees. 
These benefits of expander network topologies have been known for over a decade~\cite{jellyfish2012}, but these topologies are not a reality because of three unsolved challenges~\cite{physical-deployability2023,jupiter-rising2015}.

\parab{1. Routing} 
Shortest-path routing, commonly used for fat trees, is a poor fit for expanders because it cannot exploit the diversity of paths that exist. Some pairs of nodes have only one shortest path between them, so shortest path routing can cause congestion. Prior works recommend $k$-shortest paths routing which spreads traffic between the source and destination across $k$ shortest paths~\cite{jellyfish2012,xpander2016}. But $k$-shortest paths routing cannot be realized for large networks using commodity switches. It needs an order of magnitude more (fast, expensive) memory than what is available today (\autoref{sec:requirements}).

\parab{2. Cabling}
Expanders connect pairs of devices that can be far away in the physical space. Such cabling is complex and expensive~\cite{physical-deployability2023}. Worse, when more racks land, the process of incrementally adding new routers requires breaking existing connections. If routers have $d$ links to other routers, each time a router lands, we need to break $d/2$ existing links whose endpoints are physically spread over the datacenter. This type of work risks collateral damage and slows down rack installs, a key concern for datacenter operators. 

\parab{3. Performance predictability} Prior work on benchmarking expanders focus on concrete topologies with fixed parameters. It is unclear how a fabric might perform for other parameters, which makes it difficult for operators to design topologies that meet specific performance targets. They would need to search the combinatorial parameter space using simulations, which is infeasible for large networks. In contrast, fat tree designs can be automatically generated based on performance targets~\cite{fat-tree2008,condor2015,taming2011}.

We solve these challenges and deploy (to our knowledge) the first production datacenter fabrics based on expanders. 
Our design, called \sysname{}, connects routers using a mix of randomized and deterministic cabling segments. This construction yields quasi-random graphs that mimic the statistical properties of truly random graphs~\cite{ChGrWi89}. 
Our routing algorithm called {\em Spraypoint} is purpose-built for random graphs and permits a fully distributed implementation on commodity hardware. 
It finds a large number---close to the node degree---of edge disjoint paths between endpoint pairs and these paths minimally overlap across different pairs. These properties lead to capacity fungibility and high throughput.

We develop a cabling approach based on a new passive optical device called a {\em ShuffleBox} that mixes connections between routers internally. We place shuffle ShuffleBoxes at a small number of planned locations in the datacenter. This allows the number of physically-connected location pairs, a key driver of cabling complexity, to match that of fat trees.

The concentration properties of random graphs and the decorrelation of Spraypoint routing enable accurate modeling of \sysname{} fabric performance as a function of topology parameters (e.g., graph size, node degree, etc.). We develop models for path length, number of edge disjoint paths, and oversubscription. The models make the impact of various parameters and design trade-offs (e.g., path length versus oversubscription) transparent. Operators can use them to design topologies that meet their performance targets.

We deploy two \sysname-based production fabrics.
Transport and application layer benchmarks confirm that \sysname's performance is on par with fat trees, unimpacted by the peculiarities of expanders such as variable path lengths between endpoint pairs.
Such end-to-end validation is a first for expander topologies. 

Our analysis reveals that \sysname topologies are 9--45\% cheaper than fat trees with equivalent oversubscription ratio. The extent of cost reduction depends on the oversubscription ratio and is independent of network size and the number of switch ports. 
We also find that, for the same oversubscription ratio, \sysname{} offers higher throughput than fat trees across a range of traffic patterns.

\section{The potential of \change{flat} expanders}
\label{sec:background}

The salient property of expander graphs is edge expansion~\cite{wikipedia_expander_graph}. For every minority subset $S$ of nodes, the set of edges that lead out of $S$, called the cut, is large.
Edge expansion
 provides capacity fungibility because every $S$ has large bandwidth toward other nodes. In contrast, in hierarchical topologies such as fat trees, the cut of a subtree only leads to parents, so the expansion is poor. We illustrate with an example.

\begin{figure}[t!]
    \centering
    \begin{subfigure}[b]{0.77\columnwidth}                 
        \includegraphics[width=\columnwidth]{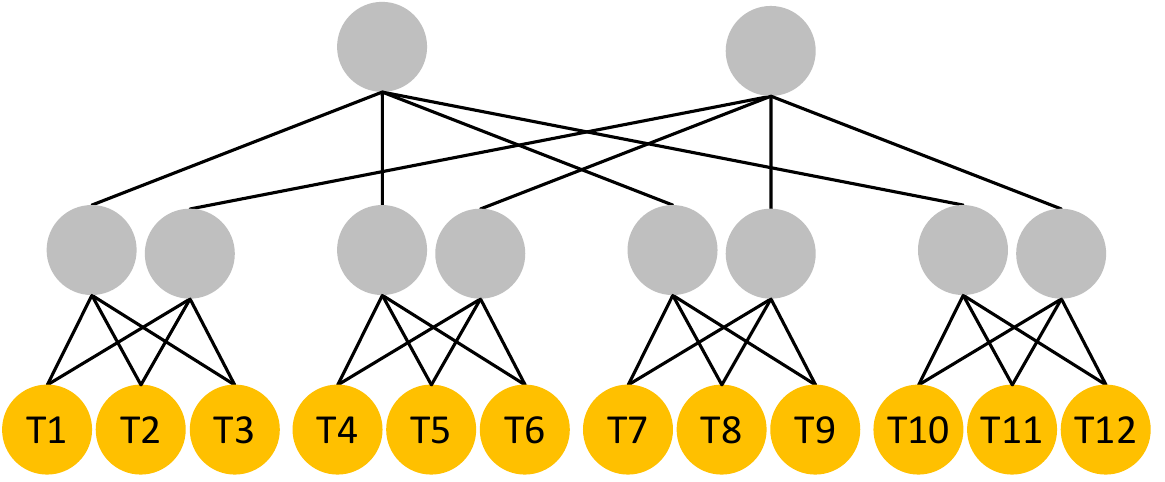}
        \caption{}
        \label{fig:example-hierarchical}
    \end{subfigure}
    \hfill
    \begin{subfigure}[b]{0.22\columnwidth}
        \includegraphics[width=\columnwidth]{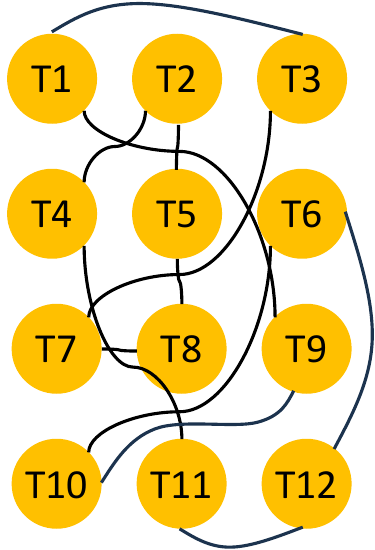}
        \caption{}
        \label{fig:example-random}
    \end{subfigure}
    \caption{Example fat tree (a) and expander topologies (b). Each switch has four ports. Two ports of each ToR (T1--T12) connect to servers (not shown).}
    \label{fig:example}
\end{figure}

\autoref{fig:example-hierarchical} shows a generalized fat tree~\cite{generalized1995}. T1--T12 are ToR (top-of-rack) switches, connected via aggregation (middle) and spine (top) layers. The oversubscription ratio of this network is 3:1 since the capacity of the ToR-aggregation layer is thrice that of the aggregation-spine layer.
The cut for subset of nodes with T1--T3 and their two parents is two links connecting to the spines. These nodes cannot send more than two units of traffic to T4-T12 even though most other links in the fabric are unused. Thus, capacity is stranded.

While skewed traffic patterns with a few heavy transmitters have received the most attention~\cite{beyond-fat-trees2017,jyothi2016measuring}, fat trees can strand capacity with uniform traffic as well.  Consider an all-to-all pattern where each ToR wants to send the same data to each other ToR. The transmission rate between pairs of ToRs will be $\frac{8}{12 \times 9}$ of link capacity. This calculation follows from the $12 \times 9$ flows crossing the $8$ aggregation-spine links. Given two units of per-ToR capacity, this traffic pattern strands roughly 60\% ($2 - 11\frac{8}{12 \times 9}$) of the ToR uplink capacity. 

Expanders, such as the random graph between T1--T12 in \autoref{fig:example-random}, can do significantly better. 
If T1--T3 are the only senders, they will be limited largely by their local capacity when other nodes are silent. Because of the expansion property, there is no small cut that constrains traffic. Even for the all-to-all traffic pattern, given a capable routing algorithm, we can effectively use all links in the topology.

Capacity fungibility lowers cost because the deployed capacity is more efficiently used, allowing smaller deployments.
A simplistic view of the lower cost of flat expanders is that all upper-layer routers are removed.
But an expander that is performance-equivalent to a fat tree may need more ToR uplinks (fabric-facing ports) because some uplink capacity is consumed by traffic relayed for other ToRs. It may thus need more ToRs to support the same number of servers, given a fixed number of ports per ToR. Despite this effect, expanders can reduce cost by up to 45\% (\autoref{sec:eval-cost}).

Expander topologies are also more fault-tolerant. Edge expansion makes it difficult to partition the network, and flat structure lowers the impact of failures.
In a fat tree, \newtext{upper-layer router failures have a large impact because they carry traffic for many endpoint pairs. Even a single spine router failure in \autoref{fig:example-hierarchical} halves the available capacity for most ToR pairs. In \autoref{fig:example-random}, there are no such special routers.}

\newtext{
We focus on multi-tenant datacenters with heterogeneous workloads. In the future, we will investigate using expanders for specialized workloads such as large-scale AI training. These workloads may demand specific topologies such as Rail-optimized fat trees and may want local capacity islands such as those at the aggregation layer in \autoref{fig:example-hierarchical}~\cite{rail-optimized,fat-tree-collectives}. Flat, random topologies do not have aggregation islands.\footnote{\sysname{} does have rack-level islands like fat trees. The lack of aggregation islands is not a problem for general workloads in multi-tenant datacenters. Coordinated application deployment across topologically related racks is hard to realize here; tenants come and go continuously, with each needing different server counts and types. That is why hyperscalers aim for fabric-wide, uniform capacity pools~\cite{vl2}.}}

\section{Requirements and Challenges}
\label{sec:requirements}

Expander topologies are theoretically promising. But to be practical at scale at \cloudone and supplant fat trees, they must meet several requirements. Prior expander-based designs~\cite{jellyfish2012,xpander2016,slimfly2014} do not meet these requirements.

\parab{Realizable using commodity switches} The network should be realizable using commodity switches and forwarding ASICs which have limited memory. While custom hardware can be developed, it increases cost and poses substantial technical and supply-chain risks. 

Prior designs cannot be realized for large networks using commodity switches. 
To counter the limitations of shortest paths in expander graphs, they usually propose $k$-shortest paths routing.
Its typical implementation requires tunnels (e.g., MPLS, VLAN~\cite{spain2010}) or forwarding based on source (in addition to destination) addresses. Suppose we want to build a network with $n$=10K routers and use the recommended value of $k$=8~\cite{jellyfish2012,xpander2016}. If each path traverses $4$ routers, we need 320K forwarding entries per router on average ($4kn^2$ entries spread across $n$ routers). 
Current switches support only 4-16K such entries,\footnote{This table is different from the much larger table for destination-based lookup which uses longest-prefix matching.} which is 20--80x lower. State reduction techniques~\cite{full-label2005,asymmetric-tunnels2005} cannot bridge this gap and adding so much memory is prohibitively expensive. 
\newtext{Harsh et al. cleverly use VRFs (virtual routing and forwarding) to create non-shortest paths, but this proposal too does not scale given limits on the number of VRFs in commodity switches~\cite{spineless2020}.}

\parab{Deployment and operational simplicity} Deployment and operational simplicity are critical for network reliability. 
Simplicity is multi-faceted; we focus on control plane and physical operations to deploy and incrementally expand the network.
For the control plane, we prefer demand-oblivious, fully distributed routing (like today). 
For deployment and expansion, we want the number of cabling steps to be small and fast to minimize risk and expand faster. 

Physical cabling is challenging for expanders because they need to connect pairs of devices uncorrelated to physical space. 
Jellyfish, which is based on truly random graphs, proposes to simplify cabling by removing routers (ToRs) from server racks and placing them centrally~\cite{jellyfish2012}, but this approach increases latency between servers that share a rack, an important consideration for some applications. It also increases cabling cost because cheap, copper-based intra-rack connectivity would need to be replaced with expensive, optical connectivity. 
All current approaches also require changing many existing connections each time a new rack lands, a frequent operation in datacenters.

\parab{Predictable performance} \cloudone operators deploy network fabrics to meet specific capacity (number of servers) and performance (e.g., oversubscription) targets. For expanders to be acceptable, they must be able to easily and confidently create topologies that meet their targets.
Prior work benchmarks topologies with specific parameters and does not show how to create topologies for specific performance targets.

\sysname meets these three requirements via: (1) A new routing algorithm called {\em Spraypoint} that finds a large number of edge-disjoint paths. Like OSPF and BGP, it is demand-oblivious and permits a fully-distributed operation on commodity hardware. 
(2) A cabling approach based on a passive optical device called a {\em ShuffleBoxes}. It limits the number of endpoints that are physically cabled/re-cabled when the datacenter expands. 
(3) High-fidelity models for key performance measures such as throughput and path length. We describe these elements in more detail below.

\section{\sysname{} Overview}
\label{sec:overview}

\sysname{} is based on a flat graph where routers interconnect through a mix of deterministic and randomized choices. 
Given their controlled randomness, our graphs are not truly random; they are quasi-random graphs that behave like 
truly random graphs and are optimal expanders~\cite{Fr08, expansion2014,ChGrWi89}\footnote{Further, unlike structured constructions like Xpander \cite{xpander2016} and Slim fly \cite{slimfly2014}, random graphs can support routers with different degrees~\cite{jellyfish2012}. \protect\sysname{} supports such routers, but we omit this discussion for lack of space.}.
\sysname{}  routers connect to each other and to external devices (e.g., servers, other fabrics). We break out each fabric-facing physical port into individual lanes (e.g., a 400 Gbps port into 4x100 Gbps lanes), each of which uses a separate fiber pair and can establish adjacency with a different remote router. Breakouts increase graph degree, lowering hop count and oversubscription. 
In the rest of the paper, a router uplink refers to a breakout lane.

The control plane and load balancing in \sysname{} follows today's common paradigm. The routing protocol, Spraypoint, computes next hops at each router for all destinations; and routers use ECMP (equal cost multipath) to spread traffic across all next hops for the destination. \newtext{\autoref{fig:params} summarizes the key design parameters of \sysname{}.}

\begin{figure}[t!]
\begin{tabular}{c|c|l}
   \textbf{Graph} & 
   $n$, $d$  &  \# of routers, router uplinks \\ \hline
   \textbf{Spraypoint} &
   $p$, $h$ & \# of waypoints, next hops \\
   & $\ell$ & \# of levels \\ \hline
   \textbf{ShuffleBoxes} &
   $d_r$, $d_c$ & \# of r-ports, c-ports \\
   &
   $f_r$, $f_c$ & \# of fiber pairs in r-port, c-port
\end{tabular}
\caption{Key design parameters of \sysname{}\label{fig:params}}
\end{figure}

\section{Spraypoint routing}
\label{sec:spraypoint}
Spraypoint constructs a large number of edge disjoint paths between endpoint pairs, which increases network throughput by offering more independent options to load balancing mechanisms like ECMP. While we analyze and deploy it on random graphs, Spraypoint works on any expander graph. 
It is based on the observation that high fan-out at the source and high fan-in at the destination suffice for creating many edge-disjoint paths in an expander.  The middle has many edges because of the expansion property.
Spraypoint achieves high fan-out by \emph{spraying} packets at the source---\newtext{all neighbors are eligible next hops and one is selected based on ECMP hashing.} 
It achieves high fan-in by channeling traffic via \emph{waypoints} that are spread all around the destination.

\parab{Forwarding paths} 
Spraypoint has two parameters $p$ and $h$ \newtext{which control the number of waypoints per destination and the number of eligible next hops after the spraying step.}
These parameters offer a trade-off between path length and the number of edge disjoint paths. There is an auxiliary 
parameter $\ell$ which depends on these parameters and the graph size ($n$). 
To compute paths to a destination $t$, Spraypoint partitions all nodes into the following {\em levels}. 
\begin{asparaenum}
    \item $\wayp_0(t)$: Base waypoint level with all neighbors of $t$. 
    \item $\wayp_{l \in [1, \ell]}(t)$: Higher-level waypoints. $\wayp_l(t)$ has $p$ randomly-selected neighbors
    of each node in $\wayp_{l-1}(t)$. Neighbors in earlier waypoint levels and $t$ itself are not eligible for selection.
    \item $\iring(t)$: {\em Inner ring} with all neighbors of $\wayp_\ell(t)$ not in previous levels.
    \item $\oring(t)$: {\em Outer ring} with all nodes not in any level above.
\end{asparaenum}

\smallskip

\begin{figure}[t!]
    \centering
    \begin{tikzpicture}[
    scale=0.5,
    every node/.style={font=\small},
    v/.style={circle, draw=black, fill=white, minimum size=5mm,inner sep=0pt},
    w/.style={circle, draw=black, fill=yellow!50 , minimum size=5mm,inner sep=0pt},
    r/.style={circle, draw=black, fill= cyan!20, minimum size=5mm,inner sep=0pt},
    o/.style={circle, draw=black, fill=lightgray, minimum size=5mm,inner sep=0pt},
    t/.style={circle, draw=orange!80!black, fill=orange!80, text=black, minimum size=5mm,inner sep=0pt},
    edge/.style={thick}
]

\def\rT{0}
\def\rV{1.5}
\def\rW{2.6}
\def\rR{3.8}
\def\rO{5.0}

\draw[gray!40] (0,0) circle (\rV);
\draw[gray!40] (0,0) circle (\rW);
\draw[gray!40] (0,0) circle (\rR);
\draw[gray!40] (0,0) circle (\rO);

\node[t] (t) at (0,0) {$t$};

\foreach \i/\ang in {1/90,2/0,3/-90,4/180} {
    \node[v] (v\i) at (\ang:\rV) {$v_{\i}$};
    \draw[edge] (t) -- (v\i);
}

\foreach \i/\ang in {
    1/90,2/30,3/5,4/-30,
    5/-65,6/-120,7/-170,8/150
} {
    \node[w] (w\i) at (\ang:\rW) {$w_{\i}$};
}

\foreach \i/\ang in {
    1/60,2/40,3/20,4/0,5/-25,
    6/-60,7/-85,8/-105,9/-125,
    10/-145,11/-175,12/170,13/145,14/125,15/100
} {
    \node[r] (r\i) at (\ang:\rR) {$r_{\i}$};
}

\node[o] (o1) at (-15:\rO) {$o_1$};
\node[o] (o2) at (30:\rO) {$o_2$};
\node[o] (o3) at (-90:\rO) {$o_3$};

\draw[edge] (v1) -- (w1);
\draw[edge] (v1) -- (w2);
\draw[edge] (v1) -- (w8);
\draw[edge] (v2) -- (w3);
\draw[edge] (v2) -- (w4);
\draw[edge] (v3) -- (w5);
\draw[edge] (v3) -- (w6);
\draw[edge] (v3) -- (r8);
\draw[edge] (v4) -- (w7);
\draw[edge] (v4) -- (w8);
\draw[edge] (v4) -- (r12);

\draw[edge] (w1) -- (r1);
\draw[edge] (w1) -- (r14);
\draw[edge] (w1) -- (r15);
\draw[edge] (w2) -- (r1);
\draw[edge] (w2) -- (r2);
\draw[edge] (w3) -- (r3);
\draw[edge] (w3) -- (r4);
\draw[edge] (w4) -- (r4);
\draw[edge] (w4) -- (r5);
\draw[edge] (w4) -- (w5);
\draw[edge] (w5) -- (r6);
\draw[edge] (w5) -- (r7);
\draw[edge] (w6) -- (r8);
\draw[edge] (w6) -- (r9);
\draw[edge] (w6) -- (r11);
\draw[edge] (w7) -- (r10);
\draw[edge] (w7) -- (r11);
\draw[edge] (w8) -- (r12);
\draw[edge] (w8) -- (r13);
\draw[edge] (w8) -- (r14);

\draw[edge] (v2) -- (o1);
\draw[edge] (r6) -- (o1);
\draw[edge] (o1) -- (o2);
\draw[edge] (r2) -- (o2);
\draw[edge] (r3) -- (o2);
\draw[edge] (w5) -- (r6);
\draw[edge] (r7) -- (o3);
\draw[edge] (r8) -- (o3);

\begin{scope}[xshift=6cm,yshift=1cm]
\node[v] at (0,2.4) {$v_i$};
\node[right] at (0.5,2.4) {$WP_0(t)$};

\node[w] at (0,1.2) {$w_i$};
\node[right] at (0.5,1.2) {$WP_1(t)$};

\node[r] at (0,0) {$r_i$};
\node[right] at (0.5,0) {$IR(t)$};

\node[o] at (0,-1.2) {$o_i$};
\node[right] at (0.5,-1.2) {$OR(t)$};
\end{scope}

\end{tikzpicture}
    \caption{Example graph with Spraypoint levels for $t$.}
    \label{fig:rings}
\end{figure}
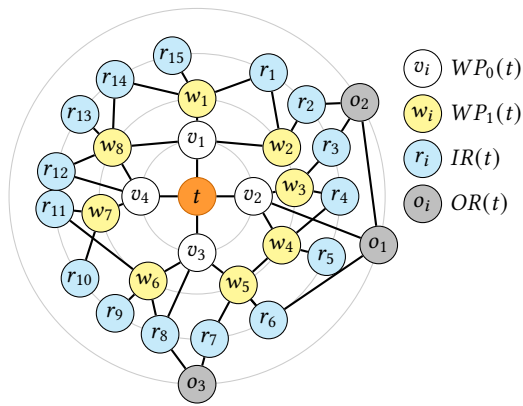

\autoref{fig:rings} shows an example graph for $\ell$=1 and $p$=2. $\wayp_0(t)$ has all neighbors of $t$: $\{v_1, v_2, v_3, v_4\}$. $\wayp_1(t)$ has two randomly selected neighbors of each $v_i$. 
In the example, $w_1, w_2$ are selected from $v_1$, $w_3, w_4$ are selected from $v_2$, and so on. Node $v_1$ ends up with three neighbors in $\wayp_1(t)$, because an additional neighbor ($w_8$) was selected via $v_4$.
The neighbors of $v_i$ that are not selected in $\wayp_1(t)$ drop into the rings, along with neighbors of $\wayp_1(t)$. 

The levels control traffic flow between any source $s$ to $t$.
The first step is \emph{spraying}, where $s$ forwards to a randomly chosen neighbor (i.e., ECMP). Spraying is independent of $t$ and happens even if $s$ and $t$ are adjacent.\footnote{Spraying is reminiscent of Valiant Load Balancing~\cite{vlb1981}. But unlike VLB, which uses arbitrary nodes as intermediaries, it uses only neighboring nodes. Its primary goal is high fan-out at the source, not load balancing, though high fan-out does aid load balancing.}
Post spraying, each intermediate node $u$ follows the "pointing" rules. 

\begin{asparaenum}
    \item If $u \in \wayp_0(t)$, forward to $t$.
    \item If $u \in \wayp_{l \in [1, \ell]}(t)$, forward (ECMP) to one of $h$ neighbors randomly selected from $\wayp_{l-1}(t)$. 
    \item If $u \in \iring(t)$, forward to one of $h$ neighbors randomly selected from $\wayp_\ell(t)$. 
    \item If $u \in \oring (t)$, forward to one of $h$ neighbors randomly selected from those with shortest paths to $\iring(t)$ nodes. 
\end{asparaenum}

\smallskip

Let us see Spraypoint forwarding in action for a packet from $v_2$ to $t$ in \autoref{fig:rings}. 
Assume $h=1$.
First, $v_2$ sprays to one of $t$, $w_3$, $w_4$, or $o_1$. If the packet is sent to $t$, its journey is complete.
If it is sent to $w_3$ or $w_4$, it will go back to $v_2$ per the second rule above, and then to $t$. Spraying happens only at the source; if a packet happens to return to the source, it follows the pointing rules. If the packet was sprayed to $o_1$,
it reaches $t$ via $r_6$, $w_5$, and $v_3$.

\parab{Why spraying} Spraying exploits the expansion property to generate many paths between source-destination pairs. 
In general, there may be only a few {\em shortest} paths from a source $s$ to a destination $t$. But in an expander, there are many edge disjoint \emph{short} paths between all pairs. A simple heuristic to use these paths is to follow shortest paths to $t$ from all neighbors of $s$. Spraying implements this heuristic.

\parab{Why waypoints} Spraying alone fails in some cases. Suppose $s$ is a neighbor of $t$. After $s$ sprays, its neighbors use shortest paths to $t$. Unfortunately, unless a neighbor is directly connected to $t$, which will be rare in a large graph, the shortest paths
go via $s$. So almost all neighbors route the traffic back to $s$, and the $s$$\rightarrow$$t$ link will congest. Higher-level waypoints ($\wayp_{l \in [1, \ell]}(t)$) draw out traffic further, preventing a collapse on this link. Further, selecting $p$ of $d$ nodes builds a $p$-ary forwarding graph rooted at each neighbor of $t$, which helps spread traffic across all neighbors (high fan-in). Otherwise, some neighbors may not be used. 

\parab{Setting $\ell$}
To achieve the goals of spraying and waypointing, we set $\ell$=$\max(1, \lceil \log_p(n/2d^2) \rceil)$.
The size of the set $\wayp_\ell(t) \cup \inner(t)$ will be at least $n/2$ for this choice of $\ell$ (\autoref{sec:modeling}), which reduces path lengths because each node in $\oring(t)$ will connect directly to this set with high probability. 

\parab{Path length variability} Like other proposals for routing in expander graphs~\cite{spineless2020,jellyfish2012,xpander2016}, Spraypoint computes paths of different lengths between a pair of endpoints. 
This variability will not impact single-path transport protocols like TCP which sample only one path (based on ECMP), but it could confuse multipath protocols~\cite{mptcp,srd,falcon} that spread traffic across multiple paths (by varying packet headers used by ECMP) if they use latency differential as a congestion or load balancing signal. 
In a datacenter that spans 300 meters and uses 100 Gbps links, a 2-hop path length difference creates a maximum latency difference of 4.4 $\mu s$ (0.7 $\mu s$ transmission time for a 9 KB packet, 1.5 $\mu s$ propagation delay per hop). This differential is low compared to endhost latencies and a small amount of queuing (6 packets) will wipe it out.
Benchmarking on production fabrics confirms the absence of performance issues for multipath protocols (\autoref{sec:implementation}). 

\parab{Distributed implementation} 
Spraypoint permits a fully distributed implementation as a link-state protocol, where every node has a full view of the topology. Nodes can compute levels for each destination and then compute local next hops by applying the pointing rules. All nodes make identical waypoint selections, via deterministic hashing of a shared key, ensuring that all pointing graphs are loop free.

We implement spraying using VRFs (virtual routing and forwarding), which enable different routing logic for different interfaces. 
\newtext{Unlike prior work that uses VRFs to create non-shortest paths~\cite{spineless2020} (which does not scale), we need only two VRFs.} 
All server-facing interfaces of a router use a VRF that sprays incoming traffic to all fabric-facing interfaces. (One exception is traffic destined to servers connected to the same router, which is forwarded directly to those servers.)  All fabric-facing interfaces use a second VRF that follows the pointing rules. 
This setup allows a packet to revisit its source (at most once) without looping indefinitely.

\parab{Resource requirements} Spraypoint uses two types of router memory. The first is longest-prefix matching (LPM) memory, which maps the destination address of an incoming packet to an ECMP group id. LPM memory use depends on the number of network prefixes and is similar to fat trees.\footnote{An exception is when hierarchy is leveraged to aggregate some prefixes. Such aggregation is not feasible in flat expanders. Because LPM memory is huge, it can support even the largest \cloudone DCs without aggregation.} 

The second type of memory is to map ECMP group id to next hop set.
Its required size depends on how ECMP groups are setup. 
The two possibilities are: (1) a separate ECMP group for each destination node, which consumes $O(nh)$ memory;
and (2) pre-define all possible $d^h$ groups and use the corresponding one for each destination, which consumes $O(hd^h)$ memory. Based on $n$ and $d$, we pick the method that supports the larger $h$ while fitting in the memory. 
For 128-port switches, the value of $d$ is around 64, so a minimum of $h$=2 is always feasible independent of fabric size ($hd^h \approx 8K)$. 

The computational complexity of Spraypoint is $O(n^2d)$ per node because each node inspects $nd$ edges per destination to compute levels. The CPUs of modern commodity switches can handle such complexity~\cite{dsdn}. For comparison, the complexity of k-shortest-paths routing is $O(kn^2(d + \log n))$~\cite{wikipedia_k_shortest_paths}.

\section{Cabling quasi-random graphs}
\label{sec:cabling}

We explain how quasi-random graphs are realized in \sysname after a brief background on datacenter cabling.

\parab{Datacenter cabling} 
Datacenter space is typically partitioned into O(10) {\em rooms} (not necessarily with dividing walls). Each room hosts some number of racks. 
Before server racks land, the room is {\em prepared} by installing power, cooling, and basic cabling infrastructure including patch panels and inter-room trunk cables. 
When server racks land, additional cables are installed to connect them. 

The cost and complexity of cabling stems from three main factors: (1) the number of connectors and endpoint pairs, where a patch panel is an endpoint and so is a ToR; (2) the length of cables and the fraction that are intra- versus inter-room---inter-room cabling is more expensive because it needs extra infrastructure; and (3) the amount of cabling that can be done during room preparation versus when racks land---later cabling is riskier and slower.

\begin{figure}[t!]
    \centering
    \includegraphics[width=0.5\columnwidth]{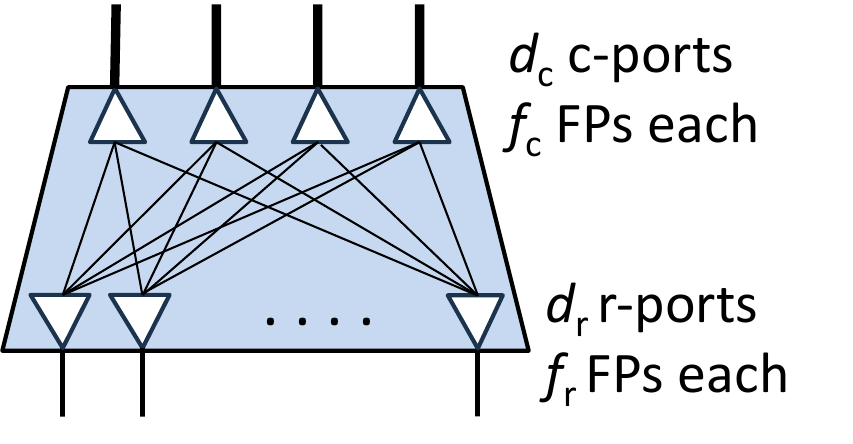}
    \caption{A ShuffleBox.}
    \label{fig:ShuffleBoxes}
\end{figure}

\parab{Physical connectivity in \sysname{}} 
\label{sec:physical-dublin}
We build a \change{quasi-}random graph using ShuffleBoxes (\autoref{fig:ShuffleBoxes}).
\newtext{Each ShuffleBox has $d_r$ {\em r-ports} that connect to routers and $d_c$ {\em c-ports} that connect to other ShuffleBoxes.} 
R-ports and c-ports terminate, respectively, $f_r$ and $f_c$ fiber pairs (FPs) each. An FP enables duplex communication between routers. ShuffleBoxes connect FPs between r-ports and c-ports ($d_r{\times}f_r{=}d_c{\times}f_c$). We describe these parameter values later. 

\begin{figure}[t!]
    \centering
    \includegraphics[width=0.8\columnwidth]{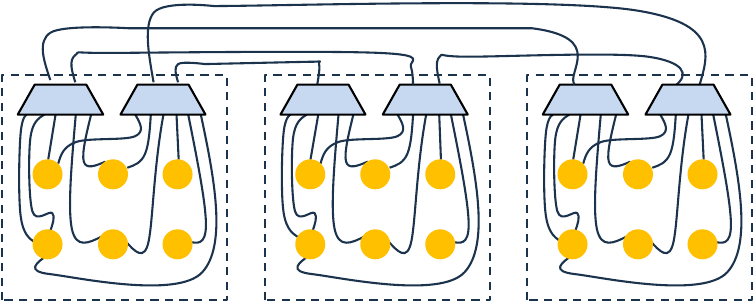}
    \caption{Cabling in a datacenter with three rooms. Routers connect to r-ports of ShuffleBoxes in the room, and ShuffleBoxes inter-connect via c-ports.}
    \label{fig:datacenter}
\end{figure}

To connect routers via ShuffleBox, we deploy a set of ShuffleBox, called a {\em shuffle panel}, in each room (\autoref{fig:datacenter}). Router uplinks connect to randomly selected r-ports in the panel.
The c-ports of panels in different rooms are also randomly connected. If there are $R$ panels, each with $C$ ShuffleBoxes, we connect approximately $\frac{(R-1)Cd_c}{R}$ c-ports of each panel to other panels, essentially building a random graph between panels.  
Unconnected c-ports of a panel have a {\em ShuffleBack}, a special connector that bridges pairs of FPs coming into the c-port (from r-ports). It bridges FP1 to FP2, FP3 to FP4, and so on. 
Panel r-ports that are not connected to routers also have a ShuffleBack that bridges c-port FP pairs. 
\newtext{Routers connect only to the ShuffleBoxes (never directly), and topology modifications are achieved by changing only connections between the ShuffleBoxes (as explained below).}

\autoref{fig:connectivity-patterns} shows three  ways in which two routers can logically connect: (a) routers in the same room connect to the same ShuffleBoxe and their FPs are shuffled to a c-port with a ShuffleBack;\footnote{With random mapping of router uplinks to ShuffleBox r-ports, there is a chance that two uplinks of the same router land on the same ShuffleBox and are bridged via a c-port ShuffleBack. For realistic router counts, the probability of self-edges is negligible since each ShuffleBox is small.} 
(b) routers in different rooms whose FPs are shuffled to connected c-ports; (c) routers whose FPs end up at the same ShuffleBacked r-port in another room. In the final pattern, the two ToRs may be in the same room.

\begin{figure}[t!]
    \centering
    \includegraphics[width=0.7\columnwidth]{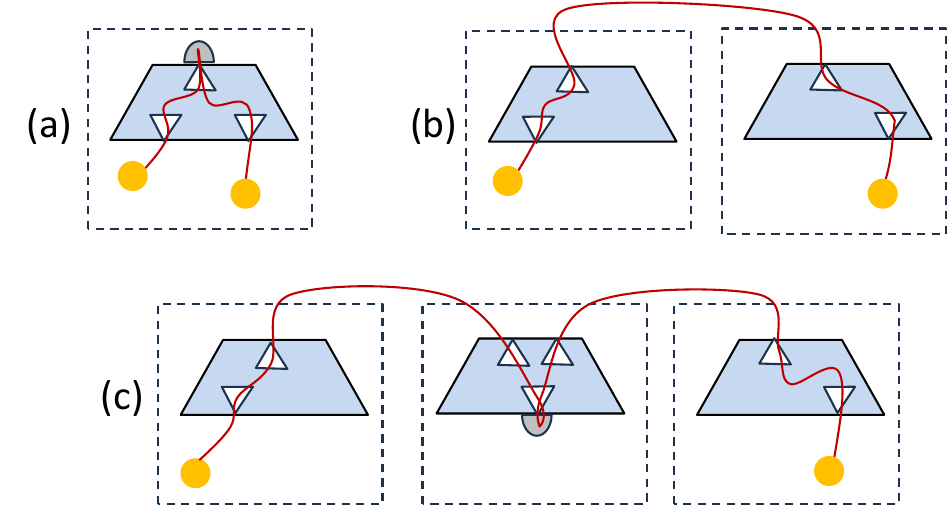}
    \caption{Physical connectivity patterns that enable logical connectivity. The hats denote ShuffleBacks.}
    \label{fig:connectivity-patterns}
\end{figure}

Connecting via ShuffleBoxes has higher optical loss than direct connectivity, but this loss is within the margins of commodity transceivers. Our construction can produce even longer end-to-end paths, e.g., if the right r-port in \autoref{fig:connectivity-patterns}c had a ShuffleBack instead of connecting to a router, the path will extend and go back out via a c-port. 
We disable paths with over seven connectors to maintain optical signal quality.

\parab{Incrementally deploying rooms and racks} \sysname's physical connectivity can be done incrementally. \autoref{sec:app:incremental} describes the process. To summarize the key steps: (1) install a new shuffle panel when a new  room is prepared; (2) if it is not the first room, rebalance inter-panel connectivity by removing some existing (randomly selected) c-port connections and ShuffleBacks and using the opened c-ports to connect to the new panel; and (3) when racks land in the room, connect router uplinks to randomly selected r-ports.

\parab{ShuffleBox configuration} 
We use $f_r$=4, the number of breakout lanes for a 400 Gbps physical port. So that each FP coming into an r-port from a router can reach different places via different c-ports, $d_c{=}f_r{=}4$. We use $f_c$=32 to balance availability of commodity connectors (larger values are less common) and effort to rebalance inter-panel connectivity (small values increase effort). Finally, since $d_r{\times}f_r{=}d_c{\times}f_c$, we get $d_r{=}32$. 
With this configuration, the shuffling pattern of ShuffleBoxes is simple: each c-port FP goes to a different r-port, and each r-port FP goes to a different c-port (full bipartite).

\parab{Cabling complexity} 
\sysname{}'s cabling has low complexity per the three factors outlined earlier. With $n$ routers and $R$ rooms, the number of physically-connected endpoint pairs is $n + \frac{R (R -1 )}{2}$ as opposed to $n^2$ logically-connected pairs. We count each shuffle panel as one endpoint because after the trunk cables reach the panel, the intra-panel distribution is simple. Inter-room cables are limited to $\frac{R (R -1 )}{2}$ trunks between shuffle panels. Other than router to r-port connectivity, all cabling can be done before racks land. 
Along these factors, cabling complexity of \sysname{} is on par with fat trees.
Rebalancing inter-panel connectivity is a unique activity in \sysname{}, but it is needed only a small number of times, when new rooms are prepared. 

\parab{Cost} ShuffleBoxes and ShuffleBacks are novel passive optical components  that \sysname uses to simplify cabling. Being passive, their cost is low relative to switches and transceivers and similar to traditional patch panels~\cite{example_patch_panel} and loopbacks~\cite{example_loopback} that are often used to simplify fat tree cabling. 

\parab{Resulting graphs} 
\change{The logical graph built using our physical construction is an optimal expander despite using deterministic router-to-ShuffleBox connections and constraining randomness (since c-port edges are balanced between rooms and bundle edges between multiple router pairs.} 
Because each bundle has random routers and c-port connections are random, our analysis confirms that \sysname topologies have the same spectral gap~\cite{wikipedia_expander_graph}, a measure of expansion, as unconstrained random graphs.

\section{Modeling Performance}
\label{sec:modeling}

\change{Following the standard practice for quasi-random graphs, our analysis assumes that the graphs are truly random.}
The combination of random graphs and spraying-induced decorrelation enables modeling of  \sysname's performance. Our models inevitably make simplifying assumptions but predict performance quite accurately (\autoref{sec:evaluation}). We summarize the key results in this section and include details in the appendix. Our models give approximate formulas that give accurate
predictions of the actual quantities. For convenience, we do not give formal statements here, and defer all
precise mathematical statements and proofs to the appendix.

We consider operating regimes in which $(i)$ $2(\ln(n) + 5) \leq d \ll n$, $(ii)$ $p \geq (n/d^2)^{1/\ell}$, and $(iii)$ $h \ll d$, where $\ell$ is the number of waypoint levels in Spraypoint. (We used the second criterion in \autoref{sec:spraypoint} for setting $\ell$). Performance is predictable in this regime because distributional tails due to randomness are trimmed. 
At the same time, this regime is broad. Since commodity switches support at least 128 breakout lanes~\cite{tomahawk3,tomahawk5}, we can easily pick values of $d$ that exceed the stated lower bound ($\approx 28$ for $n$=10K) for even large fabrics, and we can pick compliant values of $p$ and $h$ . Further, $\ell$=1 suffices for $n$=10K and typical values of $d$ and $p$. Our oversubscription model focuses on that regime for simplicity. 

\subsection{Edge disjoint paths}
\label{sec:model:edge-disjoint}

Spraypoint exploits the expansion property to compute many edge disjoint paths. The formal statement
and proof are given in \Sec{mincut}.

\begin{model} \label{mod:edge-disj} With high probability, the number of edge-disjoint paths between $s$ and $t$ are approximated by:
\[
\begin{cases}
d(1-\exp(-h)) & \text{if }s \notin \wayp_0(t) \\
\min[d-p, d(1-\exp(-(1-p/d)h))] & \text{if }s \in \wayp_0(t) \\
\end{cases}
\]
\end{model}

\noindent
A detailed explanation of this model is in \autoref{sec:mincut},
but we summarize here.
Consider $s \notin \wayp_0(t)$ (i.e., s is not a neighbor of $t$), and spraying a single packet to every neighbor of $s$. The number of edge disjoint paths is closely related to the number of neighbors of $t$ reached by these packets. Modeling this process as a balls and bins calculation~\cite{wikipedia_balls_into_bins}, where each ball is thrown $h$ times, leads to the formula above. 

When $s \in \wayp_0(t)$, $p$ neighbors of $s$ are in $\wayp_1(t)$. When traffic is sprayed by $s$, the traffic 
to these waypoints potentially comes back to $s$. For the other $d-p$ neighbors, we get a version of the balls and bins game. 

\parab{Takeaways} (1) For both types of sources, the reduction in the number of edge disjoint paths compared to the maximum possible value of $d$ is proportional to $\exp(-h)$. Thus, a low value of $h>1$ suffices for good performance. The change in $\exp(-h)$ from $h$=1 (0.37) to $h$=2 (0.13) is large, but it tapers off rapidly as $h$ increases.  (2) For neighboring sources, high values of $p$ reduce the number of edge disjoint paths. 

\subsection{Path length}
\label{sec:path-length}

Path lengths in \sysname{} depend on the sizes of various Spraypoint levels. A packet is sprayed to a random node, and the number of pointing hops depends on the node's level. We can bound the resulting path length distribution. 
\begin{model}
\label{mod:path-length} With high probability, the fraction of paths of length $i$ is approximated by:
\[
\begin{cases}
1/n  & i=1\\
p^{i-2}d/n & 2 \leq i \leq \ell+2 \\
\exp(-p^\ell d^2/n) & i = \ell + 4\\
\textrm{rest} & i = \ell + 3\\
\end{cases}
\]
\end{model}

\begin{figure}[t!]
    \centering
    \includegraphics[width=\columnwidth]{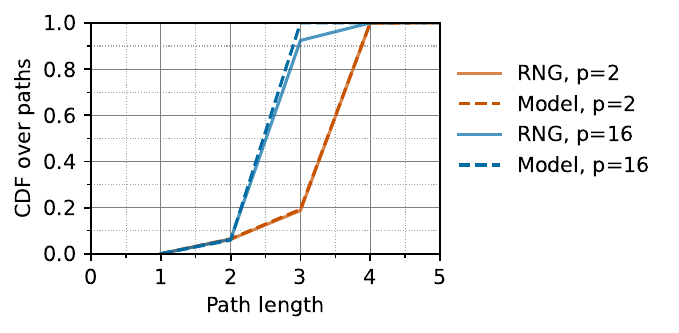}
    \caption{Path length distribution.}
    \label{fig:path-length}
\end{figure}

\Sec{level} has the proof, but \autoref{fig:path-length} compares the model to a simulated fabric. It considers two values of $p$, with $n$=1K and $d$=64. The model matches the simulation well, especially for $p=2$. When $pd > n$, as for $p$=16, lower-order terms that determine level sizes begin to matter. We ignore them for simplicity, but our analysis can be extended if needed. 

\parab{Takeaways} (1) When $\ell=1$ (practical regime), the maximum path length is $5$, and $5$-hop paths are a negligible fraction (e.g., $\exp(-pd^2/n) $ $\approx 0.0003$ for $n$=1K, $d$=64, $p$=2). (2) In this regime, the average path length of \sysname is less than that of 3-tier fat trees, where the vast majority of ToR-to-ToR paths have 4 hops. (3) \sysname paths have variable lengths even across the {\em same} endpoint pair, unlike fat trees where all paths between two endpoints are equal. (4) $h$ does not impact path length. 

\OMIT{
\subsection{Path length [old]} 

The first stage of the analysis is to estimate the sizes of various ``levels" of the Spraypoint
construction. Recall that for a destination $t$, we have a number of waypoint levels, an inner ring,
and an outer ring. Using probabilistic calculations for random graphs, we can precisely bound
the sizes of all of these levels (\Thm{level} of \Sec{level}). For convenience,
we ignore lower order terms in this section. Technically, all bounds hold with high probability,
and details are explained in \Sec{level}.

There are two key parameters of interest. The number of waypoint levels $\ell \eqdef \max(1,\lceil \log_p(n/2d^2) \rceil)$
and $\lambda = p^\ell d^2/n$. It is convenient to treat $\nbr(t)$ as the ``zeroth" waypoint level $\wayp_0(t)$.
For most parameters, these results state that the levels are of the following form.
\begin{asparaitem}
    \item $p$-ary growth of $\wayp_i(t)$: The waypoint levels grow exactly at a rate of $p$, starting from $\nbr(t)$.
    The $i$th waypoint level has size $p^i d$.
    \item Inner ring and $\wayp_\ell(t)$ involve most nodes.
    \item Every node in the outer ring has an edge to the inner ring, and the outer ring has at most $\exp(-\lambda) n$
    nodes.
\end{asparaitem}
\smallskip

For most parameters of interest, $\ell = 1$ and $\lambda \geq 4$. (For example, when $p = 4$, $d = 64$, $n = 1000$,
$\log_p (n/2d^2)$ is negative, so $\ell = 1$. Also, $pd^2/n \geq 16$.)
In that case $\exp(-\lambda) < 0.018$ and the outer ring is negligible. Since the waypoints are growing
at an exponential rate, the last waypoint level and the inner ring contain the most nodes.

Based on the level behavior, we can compute the path length distributions.
Details are given in \Thm{path-length} and \Sec{path}.
Fix any destination $t$ and let us consider the Spraypoint paths leading to this destination
(from all the sources).
\begin{asparaitem}
    \item There are $d$ paths of length $1$.
    \item For each $2 \leq i \leq \ell+2$, there are $p^{i-2}d^2$ paths of length $2$.
    \item There are at most $\exp(-\lambda) nd$ paths of length $\ell + 4$.
    \item All remaining paths have length $\ell + 3$.
\end{asparaitem}
The formal analysis gives the distribution of path lengths over all sources,
but we can consider the average distribution from a given source. 
Let us look at the common setting of $\ell = 1$.
The total number of spraying options (for a single destination) is $dn$,
since each edge can be used to spray. We get a $1/n$ fraction
of paths of length $1$, a $d/n$ fraction of length $2$ paths,
a $pd/n$ fraction of length $3$ paths, an $\exp(-pd^2/n)$ fraction
of length $5$ paths, and all remaining paths of length $4$.
}
\subsection{Throughput (oversubscription ratio)}
\label{sec:model:oversub}

A key measure of a network's throughput is oversubscription ratio. 
It is the minimum fraction of traffic that the network can deliver, without exceeding link capacities, across all possible traffic matrices. 
Unlike for fat trees, one cannot determine oversubscription for expander topologies through a simple analysis of the graph structure. 

Our oversubscription model uses two observations. First, the worst case traffic matrix is a {\em matching} where every node sends at full rate to exactly one other node and receives at full rate from exactly one other (possibly different) node~\cite{throughput-centric2021}. 
We consider a matching with randomly selected source-destination pairs, which can be viewed as modeling {\em stochastic} oversubscription. \newtext{This random traffic matrix might not be the worst case. While matchings with long paths can help find worse traffic matrices ~\cite{jyothi2016measuring,traffic-oblivious2011}, spraying in \sysname{} decorrelates path lengths.}

The stochastic analysis is common in random graph theory---argue that worst case for graph properties (e.g., cuts, expansion, matchings) is close to the random case, using a union bound argument~\cite{MoRa-book, FrMe12}. 
Second, the network throughput is maximized for a given traffic matrix, or the oversubscription ratio is minimized, when traffic takes the shortest possible paths (\Lem{oversub}).

We estimate oversubscription by estimating $\mu_i$ for each flow in a random matching, where $\mu_i$ is the maximum fraction of flow carried by $i$-hop paths. The details are in \autoref{sec:oversub-model}; we summarize below.

We consider $i \in [2, 5]$ because the probability of length 1 paths is negligible and the maximum path length is 5 when $\ell = 1$ (practical regime). We approximate $\mu_2$ as $d/n$
because each flow has $d/n$ fraction of 2-hop paths (\Mod{path-length}). With an assumption that paths do not overlap, we can show that almost every such path carries a single unit of flow.

Estimating $\mu_3$ is more challenging because we cannot assume that paths do not overlap. 
We first estimate $\phi_3$, the fraction of a source $s$'s capacity along 3-hop paths after removing the $d/n$ edges used by 2-hop paths, using the facts that 3-hop paths are of the form $s\rightarrow\wayp_1(t)\rightarrow\nbr(t)\rightarrow t$ and there are $pd$ waypoints. 
We then estimate $\kappa_3$, the amount of source-destination flow that such a path can carry, using the distribution of path overlaps.
\begin{eqnarray}
    & \phi_3 & = \min(pd/n,1-d/n) \cdot (1-d/n) \cdot (1-(4d/n)^h) \nonumber \\
    & \kappa_3 & = (1-\phi_3)^6/2 + (1-\phi_3^2)^3/6 +1/3 \nonumber
\end{eqnarray}

After similar polynomial calculations for $\mu_4$, $\mu_5$, we get:

\begin{model}
\label{mod:oversub} 
The oversubscription ratio is approximated by $(\mu_2 + \mu_3 + \mu_4 + \mu_5)^{-1}$. 
\begin{eqnarray}
\mu_2 & = & d/n \label{eq:mu2} \nonumber \\ 
\mu_3 & = & \phi_3 \kappa_3 \ \ \text{(see above)} \label{eq:mu3} \nonumber \\
\mu_4 & = & [1 - (p+1)d/n - \exp(-pd^2/n)] \nonumber \\
& & \cdot \Big(1 - [1- (1-2d/n)(1-(4d/n)^h)]^h\Big) \nonumber \\
& & \cdot (1-\mu_2 - 2\mu_3)/4 \label{eq:mu4} \nonumber\\
\mu_5 & = & \exp(-pd^2/n)(1-\mu_2 - 2\mu_3 - 3\mu_4)/5 \nonumber \label{eq:mu5}
\end{eqnarray}

\end{model}

This model characterizes the mesh-layer (between routers) oversubscription. For an end-to-end (server-to-server) analysis, oversubscription at the ToRs must also be factored (\autoref{sec:fabric-planning}).

\parab{Takeaways} Fabric oversubscription is a complex interplay of all parameters. By capturing it for a wide range of parameter values,\footnote{The formulas can be simplified for specific regimes, e.g., when a parameter is fixed. For $h$=2, $\log_d(n/p) + 2$ approximates \Mod{oversub} well.}  our model enables fabric design for specific performance targets, which we discuss next.

\begin{omit}
\subsection{Incremental expansion}
\label{sec:model:incremental}

Our model of datacenter expansion predicts the average degree of the fabric at each point in time. We model the expansion as occurring in \emph{stages}, where each stage corresponds to the landing of new panel of ShuffleBoxes and associated re-cabling. Routers of a stage only connect to ShuffleBoxes of that stage. A stage could be a room or a phase within a room. The model makes no assumption about the number or size of stages used to grow the datacenter.

We use ``time" $t \in [0,1]$ as the primary variable to model datacenter growth. It represents the fraction of routers currently deployed, relative to the full DC. $[t_1, t_2]$ denotes a stage that begins at $t_1$ and ends at $t_2$.
Abusing notation, we denote a router by its timestamp of arrival. 
The average number of uplinks per router is $d$.

For a growing network's average degree, we can prove:

\begin{theorem} \label{thm:hyp} The average degree during the stage $[t_1, t_2]$ is $d \Big[ \frac{t_1}{t} + \frac{t-t_1}{t_2}\Big]$
\end{theorem}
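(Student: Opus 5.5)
The plan is to prove the formula by counting uplink endpoints (``stubs''). I would take the average degree at time $t\in[t_1,t_2]$ to be the total number of live uplink endpoints, divided by the number of deployed routers $t$ (in the paper's normalized units). The numerator then splits into two pieces: routers from earlier stages, of total mass $t_1$, and routers of the current stage, of mass $t-t_1$. The target identity
\[
d\Big[\tfrac{t_1}{t}+\tfrac{t-t_1}{t_2}\Big]=\frac{d\,t_1 + (t-t_1)\cdot d\,\tfrac{t}{t_2}}{t}
\]
suggests two claims. First, every router from a completed stage has all $d$ uplinks live. Second, each current-stage router has expected degree $d\,t/t_2$, i.e.\ each of its lanes is live with probability $t/t_2$. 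Summing the two contributions and dividing by $t$ gives the statement, so the proof reduces to these two lemmas.

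For the first claim I would use the ShuffleBack mechanics of \autoref{sec:cabling}. At the start of the stage, the rebalancing step removes randomly chosen c-port connections and ShuffleBacks and redirects them to the new panel. A fiber pair of an old router routed toward the new panel either (a) lands on an r-port already occupied by a new router, or (b) lands on an empty r-port. In case (b) the ShuffleBack on that r-port bridges it back out via another c-port FP, which is connectivity pattern (c) of \autoref{fig:connectivity-patterns}, and so it ends at some router. I would argue by induction over stages that no lane of a router deployed before $t_1$ terminates at an empty port. The reason is that every port reachable from an old lane is either occupied or ShuffleBacked, and a ShuffleBack pairs it with another such lane. Hence old routers keep degree exactly $d$ throughout the stage.

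For the second claim I would model the current stage's panel as provisioned for all routers up to $t_2$. Its r-ports are therefore $d t_2$ lane slots (normalized) in total, of which $d t$ are occupied at time $t$. New routers connect to uniformly random r-ports, and the full-bipartite shuffle together with random inter-panel c-ports makes the far end of each lane a uniformly random slot among the $d t_2$ slots the stage exposes. That far end is a live router with probability $t/t_2$ and otherwise a dangling or ShuffleBacked slot not yet matched to a deployed router. Linearity of expectation then gives expected degree $d t/t_2$ per new router, and concentration over many lanes turns the expectation into the average.

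I expect the main obstacle to be making the second claim consistent with the first. One must explain why old routers' lanes are never wasted, while a new router's lane can dangle. This needs a careful accounting of which slots the stage's rebalancing actually exposes to new routers. I would first try to define an explicit random perfect matching on the $d t_2$ stage slots in which old-router stubs are already matched among themselves or to designated occupied slots, and only the remaining slots are uniformly random. If that fails, I would state the stub model as the modeling assumption the theorem is about (as is done for the other Models) and prove the formula under it.
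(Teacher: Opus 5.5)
Your reduction is the paper's proof. The paper takes from its stated expansion invariants that routers landed before $t_1$ keep degree $d$ and that current-stage routers have average degree $dt/t_2$, then computes the same weighted average $\frac{t_1}{t}\,d+\frac{t-t_1}{t}\cdot\frac{dt}{t_2}$. It does not derive these invariants from the ShuffleBox mechanics, so your fallback of stating the stub model as the modeling assumption is exactly what the paper does.

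The cabling-level derivations you would try first do not go through as written:
\begin{itemize}
\item Every unused r-port and c-port carries a ShuffleBack, so your ``occupied or ShuffleBacked'' argument applies verbatim to new routers' lanes. A lane fails to form an adjacency when its fiber path bounces back to its own router or exceeds the connector limit, not by ending at an empty port.
\item The current panel holds only $d(t_2-t_1)$ router lanes, not $dt_2$, because a stage's routers attach only to that stage's ShuffleBoxes.
\item Matching old stubs only among themselves would give a live-partner probability of $(t-t_1)/(t_2-t_1)$, not $t/t_2$.
\end{itemize}

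A heuristic consistent with both invariants is the following. With probability $t_1/t_2$, a new lane exits through a c-port wired to an earlier, full panel. It then lands on an old lane that had been bridged through the ShuffleBack the new router just displaced, so no old link is lost. Otherwise, it reaches a same-panel r-port via a c-port ShuffleBack, and that r-port is occupied with probability $(t-t_1)/(t_2-t_1)$. The two cases sum to $t/t_2$.
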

The details are in \autoref{sec:appendix:incremental}, but informally, first consider the first stage. For this stage, $t_1$=0 and the formula simplifies to $d\frac{t}{t_2}$, which reflects that the probability of a ToR uplink finding a match increases linearly because the probability of bridging to an occupied r-port grows linearly. 

For subsequent stages, observe that at stage boundaries, the graph degree is $d$ because all routers would have found a match. During a stage, all routers from previous stages maintain their degree of $d$, and routers of the current stage find a match with probability $t/t_2$. The weighted average of the two types of routers yields the model above. 

\parab{Optimal phasing} When datacenter operators use phases to counter the poor early performance of room 1, the growth model can help determine the optimal number and size of phases. We formulate this problem as: Find the minimum number of phases and their sizes such that the average graph degree is at least $\alpha d$ when at least a $\beta$-fraction of the first room is deployed. $\alpha$ and $\beta$ are user-provided parameters. 

To solve this problem, we consider increasing number of phases and find the minimum number that can satisfy the constraint. Based on the growth model, we derive that the minimum degree in a stage $[t_1, t_2]$ is $d(2\sqrt{t_1/t_2} - t_1/t_2)$. This derivation helps us check if the constraint can be met and derive the sizes of the phases. See \autoref{sec:appendix:incremental} for details.

The analysis reveals that we can get average degree to be $0.8d$ as soon as 25\% of the routers of the first room have landed with two phases. The first phase should be 30\% of the routers and the second 70\%.
\end{omit}

\subsection{Designing fabrics}
\label{sec:fabric-planning}

While the exact procedure for determining the topology and routing parameters ($n$, $d$, $p$, and $h$) depends on desired trade-offs (e.g., between oversubscription and path length), we illustrate for a common target: minimize cost (fewest ToRs) of connecting $s$ servers with an end-to-end oversubscription ratio below $r_e$ and ToR-layer oversubscription below $r_t$ ($1{\leq}r_t{\leq}r_e$). Since ToRs are a localized congestion risk, operators usually want to control their oversubscription~\cite{alibaba-hpn2024}.

To meet the target above, we binary search for the minimal viable value of $d$ such that $d \geq \lceil \frac{P}{r_t + 1} \rceil$, $d \geq 2 ln(\lceil \frac{s}{P-d} \rceil) + 5)$, and $d < P$, where $P$ is the number of router ports.
The first constraint ensures that ToR layer oversubscription, which is $\frac{P-d}{d}$:1, is below $r_t$; the second ensures that the design is within the modeled operating regime; and the third ensures that at least one port is available for servers. 
Minimizing $d$ subject to these constraints minimizes ToR count because it maximizes $P$-$d$, the ports that connect to servers.

The viability of a value of $d$ is determined as follows. Based on $d$, we compute the number of required ToRs $n$ as $\lceil \frac{s}{P-d} \rceil$. We then determine the maximum value of $h$ that ToRs can support based on $n$, $d$, and the ECMP memory size (\autoref{sec:spraypoint}). 
For a given $d$, $n$, and $h$, the valid range of $p$ is $[n/d^2, d]$, and different values yield different oversubscription ratios. We compute the range of oversubscription ratios using \Mod{oversub} for extreme values of $p$, and deem $d$ as viable if $r_e/r_t$, which is the desired mesh-layer oversubscription, falls in this range.

Once we have the minimal viable $d$, and corresponding $n$, $h$, we use the maximum $p \in [n/d^2, d]$ with oversubscription less than $r_e/r_t$. Maximizing $p$ minimizes oversubscription.

\section{Production Fabrics}
\label{sec:implementation}

We have deployed the \sysname{} design in two fabrics that carry production workloads.
The first, called the "server mesh", connects ToRs as an expander. The second, called the "edge mesh," connects to the server mesh and to remote datacenters, and it provides transit between these networks. Ports that interconnect the two \sysname{} fabrics are spread across their routers.
We build separate fabrics because they have different oversubscription targets. The  edge mesh is non-blocking and server mesh has the same oversubscription ratio as \cloudone's latest fat tree fabrics.
We elide the oversubscription level and mesh sizes for confidentiality.

\begin{figure}[t!]
    \centering
    \includegraphics[width=\columnwidth,keepaspectratio]{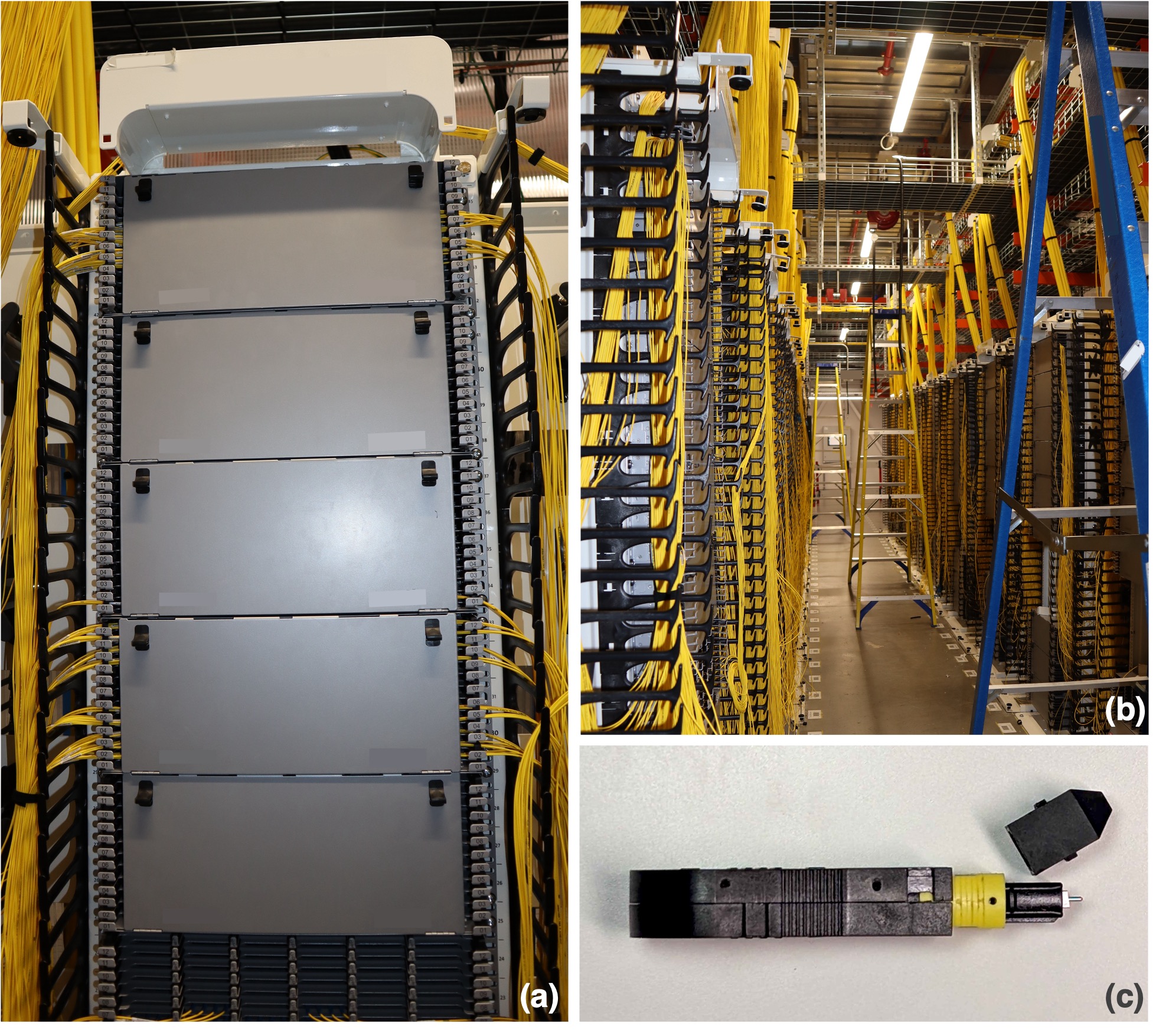}
    \caption{(a) One of the racks hosting the emulated ShuffleBoxes. (b) Rows of emulated ShuffleBoxes. (c) A ShuffleBack dongle with 4-FPs MPO connector.}
    \label{fig:combined_dc}
\end{figure}

At the time of deploying these fabrics, we did not have manufactured ShuffleBoxes. We emulated shuffling with a traditional patch panel that bridges individual FPs and custom ShuffleBacks. See Figure~\ref{fig:combined_dc}.

\parab{Spraypoint performance} We implemented Spraypoint by extending \cloudone's shortest-paths based link-state protocol. \newtext{We reused the topology dissemination component and modified next hop computation. For similarly-sized topologies, Spraypoint performs similarly to the current protocol along key metrics such as convergence time after a failure.}

\parab{Cabling experience} \newtext{Cabling \sysname{} using the scheme in \autoref{sec:cabling} was relatively smooth. One challenge was ensuring enough ShuffleBoxes are installed in each room. The exact number of racks, and thus router uplinks, that can land in a room varies, depending on their power consumption. We used an estimated upper bound on the number of router uplinks per room.\footnote{If this bound were breached, we plan to treat the remaining racks as belonging to a new logical room and perform the room-addition activity. Analogous provisioning challenges arise also for fat trees in large datacenters.}} 

Due to a lack of structure and physically identifiable patterns, a concern with random graphs has been operators connecting ports not intended to be connected~\cite{jellyfish2012,xpander2016}. The incidence of such miscabling in \sysname{} was below 1.5\%.

\parab{Operational challenges} \newtext{We give a short preview of these challenges, even though it is not a focus of this paper.} 
Operationalizing random graphs required upgrading many software tools that manage \cloudone's networks. The hierarchy of fat trees was embedded deeply into these tools, starting from device names itself to managing redundancy during maintenance. For instance, ToRs in fat trees do not rely on each other and can be upgraded based on concerns local to the rack, but in \sysname{} we need to account for inter-ToR dependencies and cannot simultaneously upgrade too many neighbors of any given ToR. \newtext{We ensured that fault localization functions properly and built new tools to easily determine the paths between ToRs for troubleshooting.}

\parab{Application performance}
A majority of traffic in hyperscale datacenters is moving toward multipath transport protocols~\cite{srd,mptcp,falcon}. The sender splits data across tens of "flowlets" that use different paths in the network (based on flowlet hash). The receiver assembles and orders the data, transparent to the applications. The sender uses latency as one of the signals when picking which flowlet to use.

Our main benchmarking goal is to validate that latency differential across sender-receiver paths (\autoref{sec:path-length}) does not impact performance. We study this by observing the raw throughput of the transport protocol and the performance of a latency- and throughput-sensitive block-storage application that uses the protocol.
We compare \sysname{} against a production fat tree fabric with identical oversubscription and identical server specs. We normalized results by the mean values observed for the fat tree to preserve confidentiality. 

\begin{figure}[t!]
    \centering
    \includegraphics[width=0.7\columnwidth]{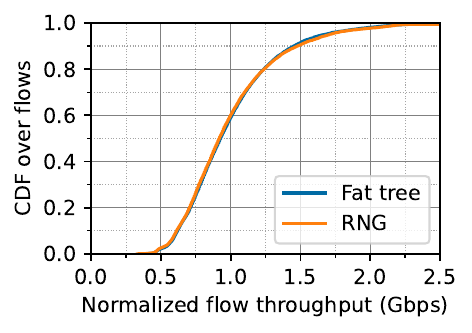}
    \caption{Per-flow throughput of multipath transport (normalized by the fat tree mean).}
    \label{fig:prod_flowthroughput}
\end{figure}

\begin{figure}[t!]
    \centering
    \includegraphics[width=0.9\columnwidth]{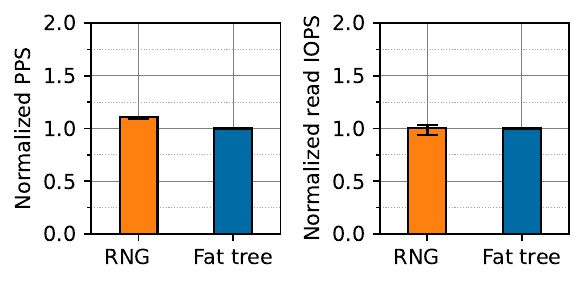}
    \caption{PPS of 64B packets (left) and IOPS of storage reads (right). The error bars are standard deviations.}
    \label{fig:prod_combined}
\end{figure}

\autoref{fig:prod_flowthroughput} shows the throughput distribution based on an experiment with 127 concurrent flows with infinite demand between pairs of random servers. The results are identical for \sysname{} and fat tree. \newtext{Any differences in path latencies of the two networks are immaterial for application performance.}

\autoref{fig:prod_combined} (left) compares the packets per second (PPS) for small, 64-byte packets between pairs of servers. \sysname{} is slightly higher because it has less background traffic.
\autoref{fig:prod_combined} (right) shows I/O operations per second (IOPS) for storage reads from clients to multiple servers. \sysname{}'s performance matches fat trees for this workload as well.

These benchmarks, and the absence of user-reported performance issues, confirm that \sysname{} matches fat tree performance (at lower cost and higher fault tolerance).

\section{Broader Evaluation}
\label{sec:evaluation}

Production fabric benchmarks can validate the performance of real applications but we need simulations to evaluate a broad range of design parameters and workloads. 
We benchmark \sysname's oversubscription ratio and the number of edge disjoint paths that Spraypoint produces. We also compare its throughput and cost relative to fat trees. 

\subsection{Oversubscription}
\label{sec:eval-oversub}

\begin{figure*}[t!]
    \centering
    \begin{subfigure}[b]{0.262\textwidth}       
        \includegraphics[width=\columnwidth]{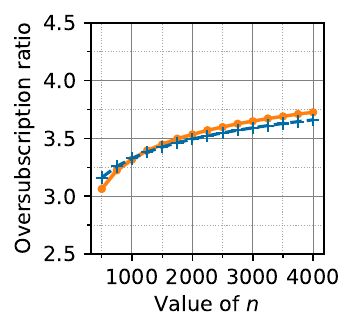}
    \end{subfigure}
    \hfill
    \begin{subfigure}[b]{0.23\textwidth}       
        \includegraphics[width=\columnwidth]{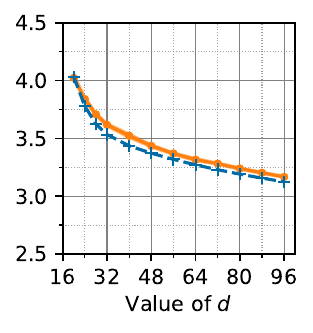}
    \end{subfigure}
    \hfill
    \begin{subfigure}[b]{0.23\textwidth}       
        \includegraphics[width=\columnwidth]{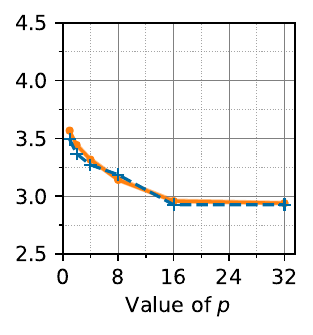}
    \end{subfigure}
    \hfill
    \begin{subfigure}[b]{0.23\textwidth}       
        \includegraphics[width=\columnwidth]{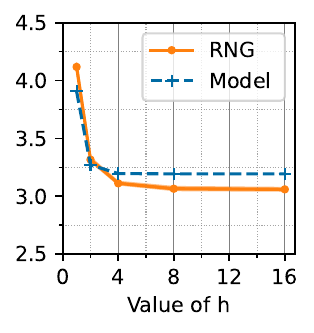}
    \end{subfigure}
    \hfill
    \caption{Oversubscription for different topology parameters. The defaults are $n$=1000, $d$=64, $p$=4, $h$=2.}
    \label{fig:oversub}
\end{figure*}

Oversubscription ratio characterizes worst-case throughput of a network and helps quantify congestion risk. 
Namyar et al.~\cite{throughput-centric2021} prove that worst-case throughput occurs for a perfect unidirectional matching, where each node sends at full rate to its counterpart.
The authors do not suggest a way to generate the worst matching among the many choices. We randomly generate 100 matchings and estimate oversubscription using the worst value, though we will see that the distribution is narrow because the topology and routing are random~\cite{FrMe12,MoRa-book}.
Similar to our oversubscription model, this process characterizes stochastic oversubscription.

For a given matching, the oversubscription ratio is $r$ if each transmitter can send at least $1/r$ fraction of its full rate without congesting any link. This measure is determined by the smallest blocked flow, irrespective of whether others can send more.  We compute $r$ by solving a linear program (LP) that encodes a multi-commodity flow problem~\cite{wikipedia_mcf}. Traffic for each flow is spread across Spraypoint paths, and the LP minimizes $r$ under link capacity constraints. These LPs are huge and each takes multiple hours to solve, which limits the largest fabrics we can practically study.

\autoref{fig:oversub} shows the results of simulations ("\sysname{}") and our model (\autoref{sec:model:oversub}). The line denotes the minimum value of $r$ for each setting. There is a shadow for the max but it is barely visible because the variance is near zero.
To study a broad range of settings, we vary one parameter while the others are set to default values ($n$=1000, $d$=64, $p$=4, $h$=2). This default can support 64K servers with 100 Gbps uplinks and offers an oversubscription ratio of 3.25. 
We see that the oversubscription ratio changes gracefully as we vary the parameters, highlighting the ability of random graphs to support fine-grained oversubscription levels. We also see that the impact of $h$ and $p$ flattens out beyond a point. 

The model matches empirical results well, enabling performance predictability. It allows operators to plan \sysname{} fabrics like they plan fat trees today, with the added advantage of lower cost, higher fault tolerance, and finer-grained performance tuning. 
Some settings in \autoref{fig:oversub} are technically outside the modeled regime, e.g., $d = 20 \not\geq 2(\ln(n=1K) + 5)$. 
We have defined the regime conservatively to get high probability results, and the models degrade gracefully outside the regime.

\subsection{Edge disjoint paths}
\label{sec:eval-spraypoint}

Spraypoint aims to find many edge disjoint paths. 
We evaluate this ability by computing the min cut across paths between endpoint pairs, which equals the number of edge disjoint paths~\cite{wikipedia_mengers_theorem}.
\autoref{fig:eval-spraypoint} shows the CDF of min cut across endpoint pairs for the same default fabric parameters as above ($n$=1000, $d$=64, $p$=4, $h$=2). For comparison, it also shows $k$-shortest-path routing (which is not implementable at scale with commodity switches). 
We consider $k$=8~\cite{jellyfish2012,xpander2016} and $k$=64, a high value that equals the node degree (and needs 8x more resources). 

For almost all endpoint pairs, Spraypoint finds over 50 edge disjoint paths. For half the pairs, it finds over 60 such paths out of the maximum possible 64. 
In contrast, the median for 8-shortest-paths is 5 and for 64-shortest paths is 35. This difference impacts oversubscription directly. The oversubscription for the two k-shortest-paths configurations are 21.3 and 4.7, while that for Spraypoint is 3.25.
Spraypoint is easier to implement \emph{and} finds a greater diversity of paths. It offers higher peak throughput between endpoint pairs, and offers higher network-wide throughput.

\begin{figure}
    \centering
    \includegraphics[width=\columnwidth]{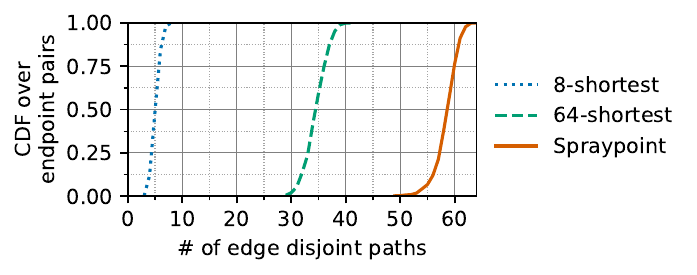}
    \caption{Edge disjoint path count.}
    \label{fig:eval-spraypoint}
\end{figure}

\begin{figure*}[t!]
    \centering
    \includegraphics[width=\textwidth]{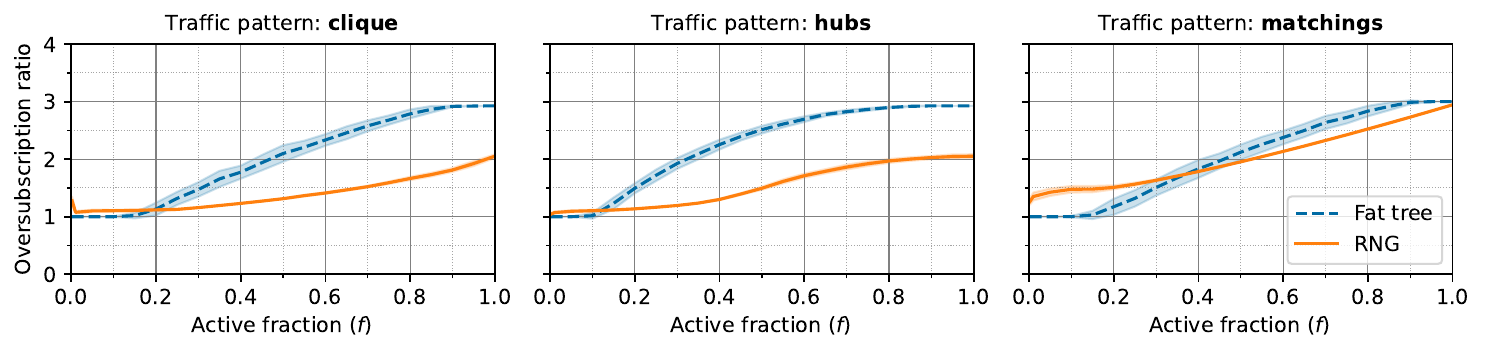}
    \caption{Oversubscription ratio (lower is better) for different traffic matrices. Both fat tree and \sysname topologies are configured for worst-case oversubscription ratio of 3:1. \sysname topologies use 45\% fewer switches.}
    \label{fig:throughput-patterns}
\end{figure*}

\subsection{Throughput relative to fat trees}
\label{sec:eval-throughput}

Beyond worst-case traffic patterns that help characterize oversubscription, operators want to also validate performance for other traffic patterns. This validation is hard because there are many possible traffic patterns.
Our evaluation is based on the observation that any traffic pattern may be viewed as a combination of three basic types, and we can get insight into the performance of \sysname{} for a range of traffic patterns by studying these types.
The three types are:
\begin{asparaitem}

\item {\em Clique:} A group of nodes exchanges traffic amongst themselves (e.g., collective communication like all-reduce, replication traffic among storage servers).  

\item {\em Hubs:} Some nodes (e.g., Web or storage servers) send to and receive from all other nodes. 

\item {\em Matchings:} Nodes send all traffic to exactly one other node, as we studied earlier.
    
\end{asparaitem}

The nodes in the traffic patterns are ToRs, not individual servers, because that stresses the fabric more~\cite{throughput-centric2021}. 

Each pattern type is parameterized by active fraction $f \in [0, 1]$ to capture the skew. 
Clique($f$=0.2) means that the clique size is 20\% of the nodes (randomly selected); hubs($f$=0.2) means that 20\% of the nodes in the network are hubs; and matchings($f$=0.2) means that 20\% of the nodes are involved in the matching.  
For each pattern type and $f$, all flows in the traffic matrix (between a sender-receiver pair) are equal and sum of rows and columns equal the local capacity of nodes. Thus, each matrix is routable in a non-blocking fabric. 

For each value of $f$, we generated 100 random traffic matrices. We quantify performance for a matrix using the fraction $r$ as before---the network can carry at least $1/r$ fraction of each flow. Thus, we are computing the oversubscription ratio of each matrix, instead of worst-case oversubscription ratio.

\autoref{fig:throughput-patterns} shows the results for \sysname{} and fat tree. The lines denote the mean and the shaded area denotes the min and max. 
Both topologies are designed to support 61.4K servers across 960 ToRs, with a worst-case oversubscription of 3:1. In this configuration, \sysname{} uses 45\% fewer switches.

We see that \sysname{} performs better by as much as 30\% across a wide regime for clique and hubs. The exceptions are cases with $f$<0.1, where fat tree is 5-10\% better. For these cases, shortest path routing on fat trees finds edge disjoint paths that equal the ToR degree. The equivalent number is 10\% lower in \sysname{} for $h$=2 (see \autoref{sec:eval-spraypoint}). The number of edge disjoint paths matters most when there are few senders in the network. For matchings, fat tree is better when $f$<0.4 and \sysname is better otherwise.

\cloudone operators prefer the broadly better performance of \sysname{}, especially given its lower cost and higher fault tolerance. The regimes with low values of $f$, where fat tree performs better, are acceptable because they will materialize only when all servers in the rack act in unison, a rarity for multi-tenant datacenters. 

\subsection{Cost relative to fat trees}
\label{sec:eval-cost}

Prior works report a range of values on the cost of expander topologies relative to fat trees. These values are difficult to compare because they use different performance measures and network sizes. It is also difficult to get a holistic view because of their experimental methodology. 
Our models enable systematic analysis of the relative cost as a function of oversubscription ratios and network sizes. 

We use the ratio of the number of switches used by the two topologies as the measure of relative cost. This measure automatically includes transceivers, whose count is proportional to switch count. It ignores passive optical components like cables, connectors, and patch panels; the cost of these components is in the minority and varies based on specific cabling methods (e.g., if cables are trunked) in the datacenter. Switch count is a general, reproducible measure. 

\begin{figure}[t!]
    \centering
    \includegraphics[width=\columnwidth]{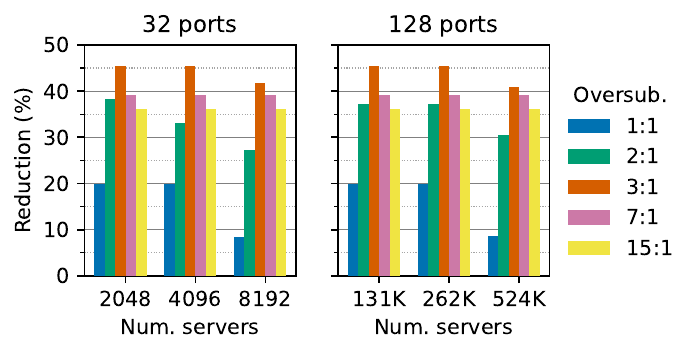}
    \caption{Reduction in number of switches in \sysname{} relative to a 3-tier fat tree for two different port counts.}
    \label{fig:cost}
\end{figure}

We compute the number of switches needed for \sysname{} using the procedure in \autoref{sec:fabric-planning} and compare it to generalized 3-tier fat tree with the same oversubscription level. ToRs are not oversubscribed in either \sysname or fat trees. 
\autoref{fig:cost} shows the results for different oversubscription levels and two possible values of switch port counts. The highest server count for each port count is the maximum supported by a non-blocking fat tree~\cite{taming2011}, and the other two are 50\% and 25\% of the servers. Oversubscription of 2:1 cannot be precisely supported in fat trees with 32 or 128 port switches, but we assume each port is 3-way splittable for this experiment. 

The cost reduction varies by 5x, between 9 and 45\%. It has a similar pattern across port counts and network sizes, and oversubscription ratio is the key determiner. 
The reduction is low for the 1:1 oversub case because non-blocking fat trees do not strand capacity; the reduction here primarily comes from shorter paths in \sysname. After increasing from oversubscription of 1:1 to 3:1, the cost advantage starts decreasing slowly. For high oversubscription, a fat tree has fewer switches at the aggregation and spine tiers, which lowers the potential of reducing cost by removing those tiers. 

Our analysis makes it clear when expander topologies like \sysname{} bring the most or least value, allowing operators to make informed choices about the type of topology to use. 
\section{Related Work}
\label{sec:related}

We build on the foundations laid by a long line of prior research. 
\change{The notion of an expander, or an "optimal" graph for routing purposes, was defined in the early 90s~\cite{HoLiWi07}, and it has been long known that random graphs are nearly-optimal expanders~\cite{Fr08}).} Several researchers have proposed expander topologies for datacenters and provided insights into their performance~\cite{jellyfish2012,xpander2016,slimfly2014,beyond-fat-trees2017,throughput-centric2021,jyothi2016measuring}.
Particularly influential works for us include Jellyfish~\cite{jellyfish2012}, which proposed (truly) random graphs for datacenters; Xpander~\cite{xpander2016}, which made the connection that random graphs do well for data center workloads because they are expanders; and Namyar et al.~\cite{throughput-centric2021}, who outlined throughput bounds of expander topologies.  
We extend this body of work by addressing unsolved challenges related to routing, cabling, and performance predictability. We also deploy the first expander-based networks in production.

Researchers have also proposed non-expander topologies that tackle some of the challenges of the fat trees (e.g., cost), such as HyperX~\cite{hyperx2008}, DCell~\cite{dcell2008}, and BCube~\cite{bcube2009}. We choose random graphs because they have significantly shorter path lengths.
Further, cabling these other topologies for large, multi-room datacenters is still an open challenge. 

Capacity fungibility, a key goal of the \sysname{} design, can be achieved using reconfigurable hardware as well. We use expander topologies because they need neither a (logically) centralized control plane nor non-standard hardware. Most designs with reconfigurable hardware require a control plane that predicts global traffic demands and dynamically reconfigures the topology~\cite{c-through2010,firefly2014,flyways2011,projector2016,helios2010}. The scale of our datacenters, presence of large amounts of bursty traffic (common for Web workloads), and reconfiguration delays complicate the development and operation of such control planes. 

There are "demand-oblivious" designs that do not require a centralized control plane~\cite{rotornet2024,Sirius2020,opera2020}. They use novel optical devices that rapidly cycle through a fixed schedule. All-to-all connectivity is not available at any given time, and the systems use forwarding via intermediate nodes based on Birkhoff-von Neumann traffic matrix decomposition~\cite{chang2002load}. In addition to relying on non-commodity hardware, these designs bring a host of software risks such as time synchronization, packet reordering within a TCP flow, some throughput degradation~\cite{rotornet2024}, and switch-level congestion control~\cite{Sirius2020}.

\section{Conclusions}
\label{sec:conclusions}

Flat expander topologies based on \change{quasi-}random graphs can be practically realized using the routing and cabling approaches we developed. They can be designed for the desired level of performance and cost using the models we developed.

\smallskip
\noindent

\bibliographystyle{ACM-Reference-Format}
\bibliography{reference}

\begin{appendix}
\section{Incremental cabling}
\label{sec:app:incremental}

The physical connectivity of \sysname described in \autoref{sec:cabling} can be achieved incrementally.
While the first room of the datacenter is being prepared, we deploy its patching panel. The number of ShuffleBoxes in the panel is based on the number of ToR uplinks expected in the room. Initially, all r-ports and c-ports have ShuffleBacks in them. When racks land in the room, ToRs are connected to random r-ports, as part of which the ShuffleBacks of selected r-ports are removed. The ToRs start forming a random graph via c-port ShuffleBacks, as in \autoref{fig:connectivity-patterns}(a). Not all ToR uplinks may find a match during this process (more on this below).

While the second room is being prepared, we land its shuffle panel and connect those c-ports to the panel in the first room. If panels are equal-sized, half of the c-ports in each must connect to the other panel; otherwise, we adjust proportionally. The ports are picked randomly. The c-port connections are enabled by the trunk cables between the two rooms.  Making these connections requires removing ShuffleBacks from c-ports, as a result of which the connectivity pattern for some ToRs may go from \autoref{fig:connectivity-patterns}(a) to \autoref{fig:connectivity-patterns}(c), except that both ToRs are in the same room.

When racks can land in the second room, those ToRs connect to random r-ports in the second room's panel. Some of these ToR uplinks will connect to ToRs in the first room via c-port connections, as in \autoref{fig:connectivity-patterns}(b); some will connect to other ToRs in the same room via c-port ShuffleBacks; and, at least initially, some will fail to find a match. 

When the third room starts, we land its panel and rebalance c-port connectivity. When there were two rooms, half of the first room's c-ports connected to the second room. When the third room starts, a third must go to the second room and a third to the third room (and a third have ShuffleBacks). To rebalance c-port connectivity, we break some existing c-port connections (selected randomly) between the first two rooms and use the freed up c-ports to connect to the third room. 
We follow the same process of landing panels and rebalancing connectivity for all subsequent rooms. 

Rebalancing connectivity breaks logical links that carry live traffic. We thus drain impacted links to minimize impact. Strictly speaking, logical connectivity changes when racks land and r-port ShuffleBacks are removed, but the blast radius of that operation is much smaller ($f_r$=4 versus $f_c$=32).

\subsection{Growing pains} 

An artifact of our incremental construction is that early in the growth of the datacenter, not all ToR uplinks form valid adjacencies. For an uplink in room 1 to form an adjacency, it must connect to an r-port FP that is bridged (by the ShuffleBox's c-port ShuffleBack) to an r-port connected to another uplink. When only a small fraction of r-ports are populated, this probability is low. 
A similar effect occurs for the second room but is less severe because half of ToR uplinks connect to r-port FPs that are connected to the first room via c-ports and find matches there.

\begin{figure}
    \centering
    \includegraphics[width=\columnwidth]{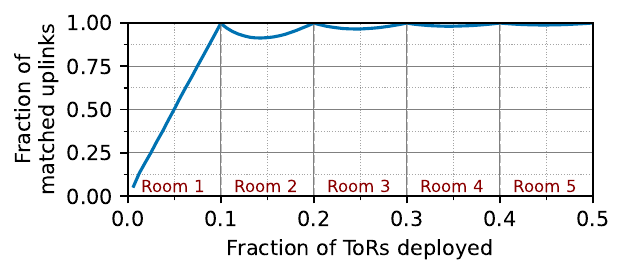}
    \caption{Matched uplinks as the datacenter grows.}
    \label{fig:incremental-fp}
\end{figure}

\autoref{fig:incremental-fp} plots the fraction of deployed uplinks with valid adjacencies as we deploy more ToRs. This simulated datacenter has 10 rooms with 100 ToRs each.  We see the fraction of uplinks in room 1 that find a match increases linearly.
After room 2 opens, there is an initial dip in the fraction of matched uplinks because room 2 ToRs have fewer matched uplinks and their ratio in the population is growing. But the lowest point in the dip is still above 90\%. The impact is negligible beyond room 2. 

We present a model of datacenter expansion that accurately predicts the average degree at each point in the datacenter growth. Operators can use the model to predict the average degree during expansion and plan accordingly.

\begin{figure}[t!]
    \centering
    \includegraphics[width=\columnwidth]{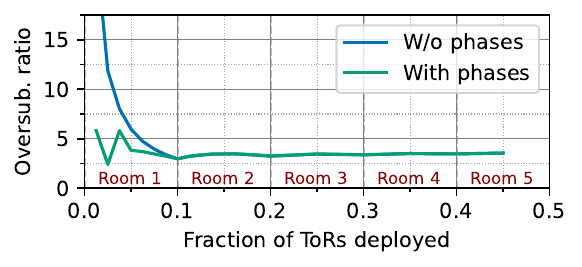}
        \caption{Oversubscription as the datacenter grows.}
        \label{fig:incremental-oversub}
\end{figure}

As the "W/o phases" curve of \autoref{fig:incremental-oversub} shows, these growing pains lead to poor early throughput for room 1 (and room 1 only). 
\cloudone operators view this behavior as acceptable for fast-growing datacenters. The window of poor performance is short and they can delay on-boarding applications. 

\label{sec:appendix:phases}

But the poor early performance can be problematic for slow-growing datacenters. For these cases, we propose a modified cabling scheme. 
We partition the ShuffleBoxes of the first room's shuffle panel into two or more phases. Initial ToRs connect only to phase 1 ShuffleBoxes. When these ShuffleBoxes are full, we connect the c-ports of phase 1 and phase 2 ShuffleBoxes and start mapping future ToRs to phase 2 ShuffleBoxes. This process is repeated for future phases. It essentially divides a room into smaller, logical rooms and shrinks the poor performance window. 

The "With phases" curve of \autoref{fig:incremental-oversub} shows the impact of phased cabling. We partitioned room 1 into two phases, and the size of the first phase was 30\% of the room. 
This configuration is optimal when operators want average degree of the datacenter to always stay above 0.8 as soon as 25\% of the racks in room 1 are deployed. 
Next, we show how to derive optimal phase sizes for user-defined performance criteria. 

\subsection{Modeling expansion} 
\label{sec:appendix:incremental}

We present a model of datacenter expansion that predicts the average degree of the fabric at each point in time. The expansion occurs in \emph{stages}, where each stage corresponds to the landing of new panels of ShuffleBoxes and associated re-cabling. Routers of a stage only connect to ShuffleBoxes of that stage. A stage could be a room or a phase within a room, and the analysis makes no assumption about the number or relative sizes of various stages.

\label{sec:appendix:average-degree}

We model datacenter expansion using ``time" $t$ as the primary variable, where $t \in [0,1]$ is the fraction (relative to the full DC) of routers currently landed. We denote a stage that begins at $t_1$ and ends at $t_2$ as $[t_1, t_2]$.
Abusing notation, we will simply denote a router by its timestamp of arrival. 
Each router has $d$ uplinks.

During the first stage, when there are no routers from previous stages, the probability of a ToR uplink finding a match increases linearly because the probability depends on bridging to an occupied r-port which grows linearly with $t$. The average degree of the fabric at time $t$ is d$\frac{t}{T_0}$, where $T_0$ is the size of the first stage. 

For subsequent stages, the following invariants hold about \sysname's datacenter expansion:
\begin{enumerate}
    \item At the end of any stage, the graph degree is $d$ because all routers would have found a match. 
    
    \item During a stage, all routers from previous stages maintain their degree of $d$.

    \item When a router on a stage lands, only some of its uplinks find a match. On expectation,
    that fraction is the fraction of routers present. Formally, for stage $[t_1, t_2]$, there are $t_1$ routers from previous stages, and we will land routers $t_1+1, t_1+2, \ldots, t_2$. The $i$-th router in this stage will make (on expectation) a $\frac{t_1 + i}{t_2}$ fraction of connections when it lands. 
\end{enumerate}

Based on these invariants, we can prove:

\begin{model} \label{mod:incremental} 
The average degree during the stage $[t_1, t_2]$ ($t_1 > 0$)  is $d \Big[ \frac{t_1}{t} + \frac{t-t_1}{t_2}\Big]$
\end{model}

\begin{proof} The stage $[t_1, t_2]$ begins with a $t_1$ fraction of routers connected (with degree $d$) and ends with a $t_2$ fraction of routers. 
At time $t \in [t_1, t_2]$, all routers with timestamp at most $t_1$ always have a degree of $d$, and the routers of the current stage will have, on average, a degree of $dt/t_2$.
The fraction of routers with timestamp at most $t_1$ is $t_1/t$, and the fraction of routers of the current stage is $(t-t_1)/t$.

Thus, the average degree at time $t$ is:
\begin{equation}
    \frac{t_1}{t} \cdot d + \frac{t-t_1}{t} \cdot \frac{dt}{t_2} = d \Big[ \frac{t_1}{t} + \frac{t-t_1}{t_2}\Big]
\end{equation}
\end{proof}

\begin{figure}[t!]
    \centering
    \includegraphics[width=\columnwidth]{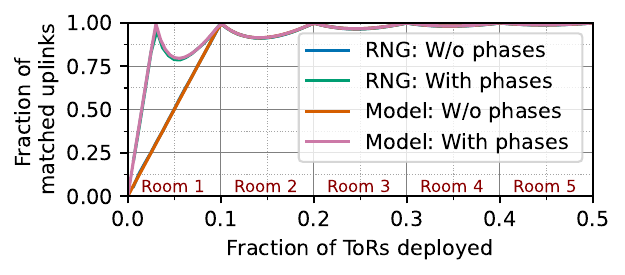}
    \caption{Comparison of fraction of uplinks that find a match based on simulation of \autoref{fig:incremental-fp} versus what is predicted by the model.}
    \label{fig:incremental-fp-model}
\end{figure}

\autoref{fig:incremental-fp-model} compares \Mod{incremental} with a simulated fabric from \autoref{fig:incremental-oversub}. 
We see that the model matches simulations well, and the curves are virtually indistinguishable. 

\subsection{Computing optimal phases} 
\label{sec:appendix:optimal-phases}
The expansion model enables us to find optimal phases. We formulate the problem as: Find the minimum number of phases and their sizes such that the average degree is at least $\alpha d$ when at least a $\beta$-fraction of the first room is deployed. The parameters $\alpha$ and $\beta$ are operator-provided inputs. 

We solve the problem by applying \Mod{incremental} to the first room, treating each phase as a stage. We know that the average degree increases linearly in the first stage. So, if $\alpha \leq \beta$, we only require a single phase (i.e., not using phases at all). If $\alpha > \beta$, our aim is to maximize $\alpha$ for a given $\beta$, or equivalently, minimize $\beta$ for a given
choice of $\alpha$. 

Let us now consider the case of two phases in a room. 

\begin{theorem} \label{thm:two-stage} With two phases in a room, for a given $\alpha$, the minimum possible $\beta$ is $ \alpha (1-\sqrt{1-\alpha})^2$.
Further, the minimum occurs when the size of the first phase is $\beta/\alpha$.
\end{theorem}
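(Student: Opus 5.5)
We normalize time to the first room, so the two phases are the stages $[0,T]$ and $[T,1]$ with $0<T<1$. The idea is to write the average degree over the whole room as a piecewise function of $t$, and then express the condition ``degree at least $\alpha d$ for all $t \ge \beta$'' as two separate constraints on $T$. Within the first stage, the discussion preceding \Mod{incremental} gives average degree $d\,t/T$. Within the second stage, \Mod{incremental} with $t_1=T$ and $t_2=1$ gives $d\,[T/t + t - T]$. Both expressions equal $d$ at $t=T$, so the degree curve is continuous. It rises linearly on $[0,T]$ and then dips before returning to $d$ at $t=1$.

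The first step treats the first phase. Because $d\,t/T$ is increasing, the requirement on $[\beta,T]$ is tightest at $t=\beta$. Assuming $\beta<T$, it reads $\beta/T \ge \alpha$, that is, $\beta \ge \alpha T$. If $\beta \ge T$, the first phase imposes no constraint. That case can be handled at the end, since the optimum will turn out to satisfy $\beta = \alpha T < T$.

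The second step treats the second phase. We minimize $g(t)=T/t+t-T$ over $[T,1]$. Since $g$ is convex with $g'(t)=1-T/t^2$, its minimizer is $t^*=\sqrt T$, and this point lies in $[T,1]$. The minimum value is $2\sqrt T - T = 1-(1-\sqrt T)^2$. Requiring this to be at least $\alpha$ is equivalent to $(1-\sqrt T)^2 \le 1-\alpha$, i.e. $T \ge (1-\sqrt{1-\alpha})^2$. We also need the dip to occur after $\beta$, or else the constraint is vacuous there. With $\beta=\alpha T$ and $\alpha<1$, we have $\beta < T \le \sqrt T$, so the dip always lies in the constrained region and this constraint genuinely binds.

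Combining the two steps, feasibility requires $\beta \ge \alpha T$ and $T \ge (1-\sqrt{1-\alpha})^2$. Hence $\beta \ge \alpha(1-\sqrt{1-\alpha})^2$, with equality exactly when $T=(1-\sqrt{1-\alpha})^2=\beta/\alpha$. This is the claimed trade-off: a small $T$ helps the first phase but deepens the dip in the second phase. For the converse we check that this $T$ is indeed feasible. We also dispose of the alternative $\beta \ge T$: then $\beta \ge T \ge (1-\sqrt{1-\alpha})^2$, which is larger than $\alpha(1-\sqrt{1-\alpha})^2$, so it cannot beat the bound.

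The main obstacle I expect is bookkeeping rather than analysis. One must justify that the interior minimum of $g$ falls within both the stage and the constrained interval $[\beta,1]$. One must also settle the $\alpha\le\beta$ and $\beta\ge T$ edge cases cleanly, so that the two constraints are shown to be both necessary and sufficient.
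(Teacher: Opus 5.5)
Your main line is correct and is essentially the paper's argument. The linear first phase forces $\beta \ge \alpha T$. The second-phase curve $T/t + t - T$ bottoms out at the geometric mean $\sqrt{T}$ with value $2\sqrt{T}-T$. Requiring that value to be at least $\alpha$ gives $\sqrt{T} \ge 1-\sqrt{1-\alpha}$. The paper fixes the phase boundary at $x=\beta^*/\alpha$ from the outset. Keeping $T$ free and deriving $\beta\ge\alpha T$ as a constraint, as you do, is slightly cleaner because it shows why that choice of boundary is optimal.

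One step in your treatment of the extra case $\beta \ge T$ (which the paper never considers) is wrong as written. You assert $T \ge (1-\sqrt{1-\alpha})^2$ there. But when $\beta > \sqrt{T}$, the dip of the second-phase curve lies before $\beta$, outside the constrained interval, so nothing forces $T$ to be large. For example, take $\alpha = 0.8$, $T = 0.01$, $\beta = 0.9$. The normalized degree $T/t + t - T$ is increasing on $[0.9,1]$ and equals $0.01/0.9 + 0.89 > 0.8$ at $t=0.9$. This choice is therefore feasible, yet $T = 0.01 < (1-\sqrt{0.2})^2 \approx 0.31$.

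The conclusion survives with a short sub-case split:
\begin{itemize}
\item If $\beta \le \sqrt{T}$, the dip lies in the constrained interval, and your argument gives $\beta \ge T \ge (1-\sqrt{1-\alpha})^2$.
\item If $\beta > \sqrt{T}$, the curve is increasing on $[\beta,1]$, and $T < \beta^2$ gives $T/\beta + \beta - T = \beta + T(1-\beta)/\beta \le 2\beta - \beta^2$. Feasibility then forces $(1-\beta)^2 \le 1-\alpha$, i.e.\ $\beta \ge 1-\sqrt{1-\alpha} \ge (1-\sqrt{1-\alpha})^2$.
\end{itemize}
Either way $\beta \ge (1-\sqrt{1-\alpha})^2 \ge \alpha(1-\sqrt{1-\alpha})^2$, so this case cannot beat the bound, as you intended.
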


\begin{proof} Let us first observe that the formula in \Mod{incremental} is a hyperbola ($t_1/t + t/t_2$). 
Its minimum is at the geometric mean $\sqrt{t_1 t_2}$, so the minimum average degree is:
\begin{equation}
\label{eq:min-degree}
    d(2\sqrt{t_1/t_2} - t_1/t_2)
\end{equation}

Let us now denote the minimum possible $\beta$ as $\beta^*$. In the first phase, the average degree increase linearly. At time $\beta^*$, it needs
to be at least $\alpha d$. This phase ends when the average degree is $d$, which happens
when time is $\beta^*/\alpha$, which we denote using $x$ for convenience. At this point, the next phase begins and continues to the end of the room.

If the end of room deployment as $t=1$, we have the stage being $[x, 1]$.
Applying \autoref{eq:min-degree}, the minimum average degree
is $d(2\sqrt{x} - x$). 
Since we require the average degree to be $\alpha d$ at all times, $2\sqrt{x} - x \geq \alpha$.

Consider the quadratic $f(x) = x - 2\sqrt{x} + \alpha$. 
We need to choose $x$ so that $f(x) \leq 0$,
which happens after the first root of the quadratic.
The roots of $f(x)$ are $$ \frac{2 \pm \sqrt{4 - 4\alpha}}{2} = 1 \pm \sqrt{1-\alpha} $$

The smaller root is $1-\sqrt{1-\alpha}$, and we need $x$ to be at least this value.
Plugging in the expression for $x$,
$$ \sqrt{\frac{\beta^*}{\alpha}} \geq 1 - \sqrt{1-\alpha} \ \ \ \ \Longrightarrow \ \ \ \ \beta^* \geq \alpha(1-\sqrt{1-\alpha})^2 $$
\end{proof}

\begin{figure}[t!]
\centering
 \includegraphics[width=\columnwidth]{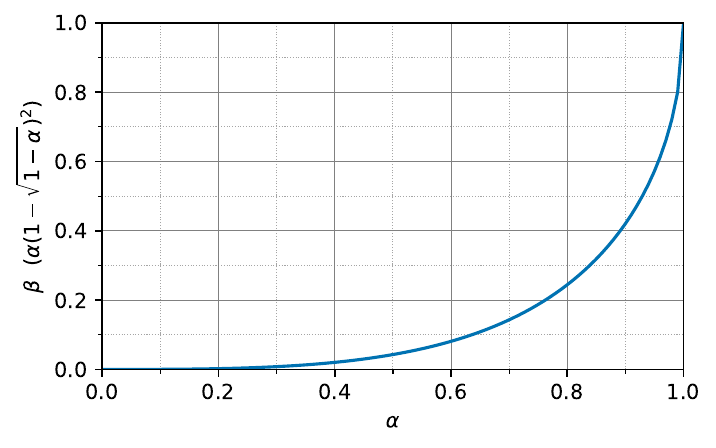}
\caption{The minimum $\beta$ achievable, for a given $\alpha$ in a two-phase deployment} \label{fig:alpha-beta}
\end{figure}

For a better understanding of the formula in \Thm{two-stage}, \autoref{fig:alpha-beta} plots the optimal
$\beta$ versus $\alpha$. We see if we want the degree to be at least $0.8d$, we can get that as soon as $\beta\approx25\%$ of the room is deployed. This performance can be obtained if the first phase is $\approx$30\% of the first room. 

If the $\beta^*$ of the two-phase deployment does not meet the operator criterion, we can perform a similar analysis for successively higher number of phases until we find the number of phases where the criterion is met. These iterations are guaranteed to terminate. In the extreme, each router is its own phase and the graph degree is always $d$.

\section{Latency Reduction}
\label{sec:latency}

While expander topologies have fewer hops than fat trees, we find that when deployed in a large datacenter their latency can be worse because hops can be long (half the datacenter length on average). The higher propagation delay exceeds what is saved in switching costs due to fewer hops.

\begin{figure}[t!]
    \centering
    \includegraphics[width=\columnwidth]{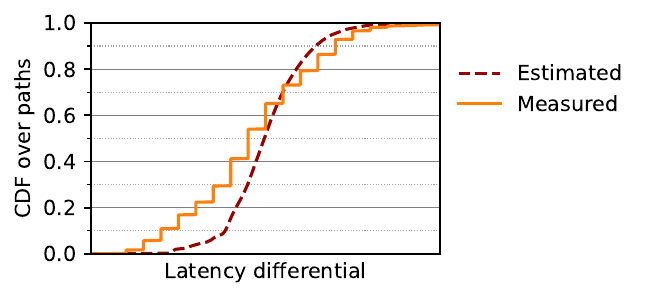}
    \caption{Comparison between estimated latency based on datacenter layout and measurements on the production fabric. Units omitted for confidentiality.}
    \label{fig:prod_latency}
\end{figure}

\begin{figure}[t!]
    \centering
    \includegraphics[width=\columnwidth]{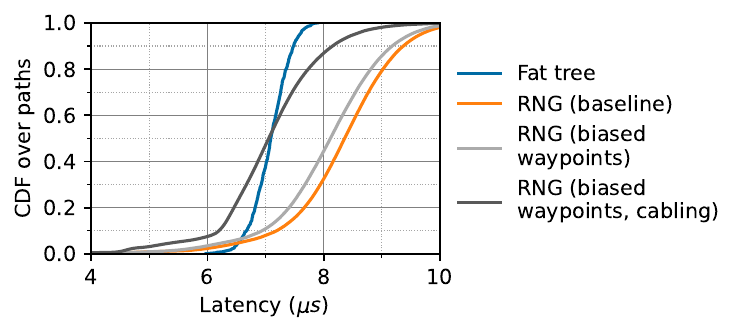}
    \caption{Latency distribution of ToR-to-ToR paths}
    \label{fig:latency}
\end{figure}

To demonstrate this challenge, we simulate a datacenter with a span of 300 meters and 4K ToRs spread across 10 equally-sized rooms. For \sysname{}, the patching panels are located in the center of each room. The fat tree has 4 tiers. 
ToRs connect to aggregation pods with two tiers of switches, where the bottom tier connects to ToRs and the top tier to spine routers. Aggregation pods are located in the center of each room to serve the ToRs in the room, and all spine routers are in one room in the middle of the datacenter~\cite{facebook-dc}. For both topologies, the link speed is 100 Gbps 
and cables travel along cable trays~\cite{cable-trays}. \autoref{fig:prod_latency} validates the accuracy of our latency estimation methodology by comparing what it estimates for the production server mesh (\autoref{sec:implementation}) with measurements on the fabric.

While the fat tree has 6 hops and \sysname{} has 4 hops for the vast majority of ToR pairs in the simulated datacenter, \autoref{fig:latency} shows that the median one-way latency of ("\sysname{} (baseline)") is 15\% higher (8.4 vs. 7.1 $\mu s$). This difference may seem small, but it compounds over round trips for applications and so datacenter operators are sensitive to it. 

Simple modifications to routing and cabling reduce the latency of \sysname{} without compromising performance.

\parab{Biased waypoint selection} When selecting waypoints, instead of picking $p$ random neighbors of $v \in Nbr(t)$, prefer waypoints closer to $t$. We implement this preference by ranking each candidate waypoint $w$ using $room\_dist(w, v) + room\_dist(v, t)$, where $room\_dist()$ is the distance between rooms of the two nodes. Using coarse, room-level distances, instead of actual fiber runs, avoids systematically preferring advantageously-located waypoints (e.g., close to cable trays). 

\parab{Biased cabling} We lower the number of inter-room connections, which are longer than intra-room ones. 
Specifically, a patching panel makes $\alpha \times \frac{1}{R}  \times C \times d_c$ c-port connections to another panel, where $C$, $R$, $d_c$ are the number of ShuffleBoxes, rooms, and c-ports per ShuffleBoxes, and $\alpha \in [0, 1]$ is a parameter that controls the amount of inter-room connections. For the baseline construction, $\alpha=1$. We use $\alpha=0.5$, which reduces number of inter-room connections by half. It reduces latency without hurting throughput because there is still enough inter-room capacity. Another positive side-effect of this optimization is that it reduces (by $1 - \alpha$) the number of inter-room cables, an important factor behind cabling costs. 

\autoref{fig:latency} shows that the combination of these modifications makes the median latency of \sysname{} similar to that of the fat tree. Biased waypoint selection reduces the median latency by 0.3 $\mu s$ and biased cabling by a further 1 $\mu s$. 

Latency can be further reduced by limiting spraying to a subset of neighbors that have shorter paths to the destination and by selecting the $h$ next hops in a distance-aware manner. But, unlike the two modifications above, these modifications reduce, respectively, the number of edge disjoint paths and throughput for some traffic matrices. They must thus be used only in settings where reduced performance is acceptable. 

We experimented with a few other ideas, such as spraying to only a subset of close neighbors and biasing the selection of $h$ random next hops, but they either reduced throughput or did not bring substantial benefit.

\section{Modeling Overview}
\label{sec:model:overview}

Our modeling of \sysname prioritizes simpler formulas over mathematical precision. It makes numerous modeling assumptions and performs asymptotic analysis (as $n$ gets large) that ignores lower order terms in some situations. 
We constrain the operating regime to where the inaccuracy of these assumptions and asymptotics is minimal.
Recall from \Sec{modeling} that this regime is (i) $2(\ln(n) + 5) \leq d \ll n$, (ii) $p \geq (n/d^2)^{1/\ell}$,
and (iii) $h \ll d$, where $n$ and $d$ are the node count and degree. 
Further, given random topology and routing, the analysis is inherently probabilistic, i.e., results hold with high probability and can fail with low probability.

The following sections derive the performance models in \Sec{modeling} in a stepwise manner.

{\bf Random graph preliminaries (\Sec{prelim}):} The first step of the modeling is to define a probabilistic
process that generates the random network topology. We give a precise formulation,
since this forms the foundation of the analysis. We prove some basic theorems
about the graph construction. This section is purely using random graph theory,
and does not involve any asymptotics or modeling of Spraypoint/routing.

{\bf Modeling path length (\Sec{level}):} We begin the actual analysis by
giving formulas for the sizes of various Spraypoint levels. 
The random graph theorems from
\Sec{prelim} are used to determine the structure of the Spraypoint levels
and their sizes. The path length
distributions follow quite directly from these formulas. These formulas use
asymptotics to get convenient expressions. This yields Model~\ref{mod:path-length}.

{\bf Modeling edge disjoint paths (\Sec{mincut}):} Armed with the tools from 
the previous sections, we can analyze the mincut properties between
a source-destination pair. For this analysis, we make some minor modeling
assumptions, to represent the flow as a tractable balls and bins problem.
A detailed, tight analysis of these problems has been done before~\cite{FrMe12},
but we observe that a simpler upper bound formula is close enough.
This bound is given in Model~\ref{mod:edge-disj}.

{\bf Modeling oversubscription (\Sec{app:oversub}):} The most difficult 
part of our analysis is bounding the oversubscription ratio. For this
analysis, we use many modeling assumptions (detailed in \Sec{oversub-principles}).
The oversubscription ratio is the output of a multicommodity flow
linear program, so it is much harder to get a simple formula. Moreover,
the mathematics of interacting flows is complex and we resort to various
simplifications based on intuitions on random graphs. Our final
formula is given by a collection of polynomial equations in the various
parameters, as given in \Sec{model:oversub}.

\section{Random graph Preliminaries} \label{sec:prelim}

The router graph $G = (V, E)$ has $n$ nodes and degree $d$. We model its construction as a random configuration graph (Chap. 2 of~\cite{Wo-book}). 
There are various ways to construct graphs in the configuration model, all of which are statistically equivalent and provide convenient analytical tools. 

\medskip

\begin{asparaitem}
    \item The standard method: consider each node as incident to $d$ ``half-edges". Paired
    half-edges result in a proper edge. We generate a pairing with a uniform random permutation of all half-edges,
    and matching the first to the second, the third to the fourth, and so on. 
    \item Getting the neighborhood of a set $S$: Condition on some edges within $S$. Each remaining half-edge
    in $S$ picks \emph{independently} another uar (uniformly at random) half-edge in graph to pair with. If multiple $S$ half-edges
    pick the same partner to pair with, only one them (picked randomly) will pair up. The others remain
    unpaired. At this stage, some of the half-edges in $S$ have paired up. Now, we simply pair up all remaining
    half-edges in the graph using the standard method (which has dependencies). 
    The first part makes a collection of independent decisions, allowing for easier analysis. 
\end{asparaitem}

\subsection{Probability preliminaries} \label{sec:prob}

We list some probability theory definitions and theorems used in the analysis.
In what follows, capital letters $X, Y, Z$ denote non-negative integer-valued random variables; $\EX[X]$ denotes the expectation of $X$. script letters $\cE, \cF, \ldots$ denote events; and $\Pr[\cE]$ denotes the probability of event $\cE$.

The first theorem will help us derive expected values of a collection of random variables. It makes no independence assumption on the random variables. 

\begin{theorem} \label{thm:lin} [Linearity of Expectation] For any finite collection
of random variables $X_1, X_2, \ldots, X_k$, $\EX[\sum_{i \leq k} X_i] $ $= \sum_{i \leq k} \EX[X_i]$.
\end{theorem}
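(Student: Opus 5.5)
The statement is the standard linearity of expectation for finitely many random variables. No independence is assumed. I will prove it in the setting of this appendix, where all random variables take non-negative integer values. The plan has two steps. First I establish the two-variable case $\EX[X+Y]=\EX[X]+\EX[Y]$ directly from the definition of expectation. Then I extend to $k$ variables by induction on $k$.

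\textbf{Two-variable case.} The graph construction of \Sec{prelim} uses finitely many random choices: a uniform random permutation of the $nd$ half-edges, together with finitely many uniform random selections for waypoints and next hops. So I can work on a finite probability space $\Omega$ with probability mass $\Pr[\omega]$, and write
\[
\EX[X]=\sum_{\omega\in\Omega}X(\omega)\Pr[\omega].
\]
Because $(X+Y)(\omega)=X(\omega)+Y(\omega)$ pointwise, splitting the finite sum gives the identity immediately.

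If one prefers the definition $\EX[X]=\sum_{x}x\Pr[X=x]$, I would first convert it to the sample-space form by grouping outcomes according to the value of $X$. Equivalently, I would use the joint distribution and marginalize:
\[
\sum_{x,y}(x+y)\Pr[X=x,Y=y]=\sum_x x\Pr[X=x]+\sum_y y\Pr[Y=y],
\]
which uses $\sum_y\Pr[X=x,Y=y]=\Pr[X=x]$. This identity holds whatever the dependence between $X$ and $Y$, and that is exactly why no independence is needed.

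\textbf{Induction.} For the inductive step, set $Z=\sum_{i<k}X_i$. This is again a non-negative integer-valued random variable. Applying the two-variable case to $Z$ and $X_k$, and then the inductive hypothesis to $Z$, gives the claim for $k$.

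\textbf{Main obstacle.} The only subtle point is rearranging sums if the space or the ranges were infinite, since then finiteness of the expectations matters. Non-negativity settles this: for non-negative terms, Tonelli's theorem (equivalently, rearrangement of non-negative series) justifies exchanging and splitting the sums. The identity then holds even when some expectation is $+\infty$. In the finite setting of the paper, this issue does not arise at all.
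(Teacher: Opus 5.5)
Your proof is correct and complete. Pointwise additivity of $X+Y$ on a finite sample space, induction on $k$, and Tonelli's theorem for non-negative variables on infinite spaces (where the identity holds even when the value is $+\infty$) together make up the standard textbook argument.

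The paper itself gives no proof of this statement. It records linearity of expectation as a standard preliminary, alongside the union bound and the Chernoff bound (the latter cited from Dubhashi--Panconesi). So there is no competing route to compare against.

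You restrict to non-negative integer-valued variables to match the paper's stated convention. The paper, however, later applies linearity to variables that are not integer-valued: the fractions $Z_r$ of unpaired half-edges in the proof of the level-size theorem. Your sample-space argument covers these unchanged, because it never uses integrality. It only needs the variables to be real-valued on a finite space, or non-negative in general.
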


The next theorem is convenient for dealing with ``error bounds" and bounding the probability of bad events.

\begin{theorem} \label{thm:ub} [Union bound] Given a finite collection of events $\cE_1, \cE_2, \ldots, \cE_k$,
$\Pr[\bigcup_{i \leq k} \cE_i] \leq \sum_{i \leq k} \Pr[\cE_i]$.
\end{theorem}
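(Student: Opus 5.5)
The union bound in \Thm{ub} follows from the inclusion bound for two events together with induction on $k$. The only properties of $\Pr$ it uses are non-negativity, finite additivity over disjoint events, and monotonicity, where monotonicity ($\cA\subseteq\cB \Rightarrow \Pr[\cA]\le\Pr[\cB]$) itself follows from additivity because $\cB$ is the disjoint union of $\cA$ and $\cB\setminus\cA$.

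The plan is to first disjointify the union. Set $\cF_1=\cE_1$ and, for $2\le i\le k$, set $\cF_i=\cE_i\setminus\bigcup_{j<i}\cE_j$. These sets have three properties, each checked directly from the definition:
\begin{itemize}
\item the $\cF_i$ are pairwise disjoint, because an outcome in $\cF_i$ avoids every earlier $\cE_j$ and hence every earlier $\cF_j\subseteq\cE_j$;
\item $\cF_i\subseteq\cE_i$;
\item $\bigcup_{i\le k}\cF_i=\bigcup_{i\le k}\cE_i$, because any outcome in the union lies in $\cF_i$ for the smallest index $i$ with that outcome in $\cE_i$.
\end{itemize}
Finite additivity then gives $\Pr[\bigcup_i\cE_i]=\sum_i\Pr[\cF_i]$. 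Monotonicity gives $\Pr[\cF_i]\le\Pr[\cE_i]$ termwise, and summing yields the claim.

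An alternative is induction on $k$. The case $k=1$ is trivial. The inductive step uses the two-event bound $\Pr[\cA\cup\cB]=\Pr[\cA]+\Pr[\cB\setminus\cA]\le\Pr[\cA]+\Pr[\cB]$ with $\cA=\bigcup_{i<k}\cE_i$ and $\cB=\cE_k$. Either route works; I would present the disjointification because it avoids nesting.

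I expect no real obstacle. The only point needing care is the verification that the $\cF_i$ cover the same union, handled by the minimal-index argument above. If the paper intends discrete probability spaces, as in the random configuration model of \Sec{prelim}, the same argument can be phrased by summing point masses: every outcome in the union is counted at least once on the right-hand side.
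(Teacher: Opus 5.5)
Your proof is correct. The disjointification $\cF_i=\cE_i\setminus\bigcup_{j<i}\cE_j$, combined with finite additivity and monotonicity, gives the bound, and your inductive two-event argument works equally well.

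The paper does not prove this statement at all. It lists the union bound among standard probability preliminaries, alongside linearity of expectation and the Chernoff bound, and uses it as a black box. So there is no paper proof to compare against: yours is the standard textbook proof of a fact the authors take for granted.
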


The following states classic \emph{concentration inequalities} that upper bound the probability of a random variable deviating significantly from its expectation.  

\begin{theorem} \label{thm:chernoff} [The Chernoff bound] (Theorem 1.1 of~\cite{DuPa-book})
Let $X = \sum_{i \leq k} X_i$, where the $X_i$'s are independent random variables in $[0,1]$.
Then,
\begin{asparaitem}
    \item For all $\eps \in (0,1)$, $\Pr[X < (1-\eps) \EX[X]]$, $\Pr[X > (1+\eps) \EX[X]]$
    $\leq \exp(-\eps^2 \EX[X]/3)$.
    \item For all $t$, $\Pr[X > \EX[X] + t]$, $\Pr[X < \EX[X] - t]$ $\leq \exp(-2t^2/k)$.
    \item If $t > 2e\EX[X]$, then $\Pr[X > t] \leq 2^{-t}$.
\end{asparaitem}
\end{theorem}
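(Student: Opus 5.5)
The plan is the standard moment-generating-function (Chernoff–Hoeffding) method. It uses only Markov's inequality and independence; none of the random-graph material is needed. Write $\mu = \EX[X]$ and $\mu_i = \EX[X_i]$.

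For any $\lambda > 0$, Markov's inequality applied to $e^{\lambda X}$ gives $\Pr[X \ge a] \le e^{-\lambda a}\,\EX[e^{\lambda X}]$. By independence, $\EX[e^{\lambda X}] = \prod_{i \le k} \EX[e^{\lambda X_i}]$. For the lower tail, apply the same argument to $-X$, which amounts to taking $\lambda < 0$. Each part then needs a bound on the individual factors $\EX[e^{\lambda X_i}]$, followed by an optimization over $\lambda$.

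\textbf{First and third bullets.} Since $e^{\lambda x}$ is convex and $X_i \in [0,1]$, we have $e^{\lambda x} \le 1 + (e^{\lambda}-1)x$ on $[0,1]$. Hence
\[
\EX[e^{\lambda X_i}] \le 1 + (e^\lambda - 1)\mu_i \le \exp\bigl((e^\lambda-1)\mu_i\bigr),
\]
and multiplying over $i$ gives $\EX[e^{\lambda X}] \le \exp((e^\lambda - 1)\mu)$.

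For the upper tail, take $a = (1+\eps)\mu$ and $\lambda = \ln(1+\eps)$. This gives $\Pr[X > (1+\eps)\mu] \le \bigl(e^{\eps}/(1+\eps)^{1+\eps}\bigr)^{\mu}$.

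For the lower tail, take $\lambda = \ln(1-\eps)$. This gives $\Pr[X < (1-\eps)\mu] \le \bigl(e^{-\eps}/(1-\eps)^{1-\eps}\bigr)^{\mu}$.

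Both expressions then have to be compared with $\exp(-\eps^2\mu/3)$, which reduces to two elementary inequalities on $(0,1)$:
\[
(1+\eps)\ln(1+\eps) \ge \eps + \eps^2/3, \qquad (1-\eps)\ln(1-\eps) \ge -\eps + \eps^2/2 .
\]
I would prove each by checking that the difference of the two sides vanishes at $\eps = 0$, and that its derivative (and, for the first, its second derivative) has the correct sign on $(0,1)$.

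For the third bullet, write $t = (1+\delta)\mu$ and reuse the upper-tail bound:
\[
\Pr[X > t] \le \frac{e^{\delta\mu}}{(1+\delta)^{(1+\delta)\mu}} \le \Bigl(\frac{e}{1+\delta}\Bigr)^{(1+\delta)\mu} = \Bigl(\frac{e\mu}{t}\Bigr)^{t}.
\]
When $t > 2e\mu$ the base $e\mu/t$ is below $1/2$, so the bound is at most $2^{-t}$.

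\textbf{Second bullet (additive form).} Here the convexity bound above is not sharp enough, and I would use Hoeffding's lemma instead. For a random variable $Y \in [0,1]$ and any real $\lambda$,
\[
\EX\bigl[e^{\lambda(Y - \EX Y)}\bigr] \le e^{\lambda^2/8}.
\]
To prove it, bound $e^{\lambda y}$ by the chord through $y=0$ and $y=1$, which gives $\EX[e^{\lambda Y}] \le (1-q) + q e^{\lambda}$ with $q = \EX Y$. Then set $\varphi(\lambda) = -\lambda q + \ln(1 - q + q e^{\lambda})$. One checks that $\varphi(0) = \varphi'(0) = 0$ and $\varphi'' \le 1/4$, since $\varphi''$ has the form $r(1-r)$ for some $r \in [0,1]$. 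Taylor's theorem then gives $\varphi(\lambda) \le \lambda^2/8$.

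Multiplying over the $k$ independent summands gives
\[
\Pr[X > \mu + t] \le \exp(-\lambda t + k\lambda^2/8),
\]
and choosing $\lambda = 4t/k$ yields $\exp(-2t^2/k)$. The lower tail follows by applying the same argument to the variables $1 - X_i$.

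\textbf{Main obstacle.} The main obstacle is the calculus in the two places where constants must come out exactly. The first is the inequality $(1+\eps)\ln(1+\eps) \ge \eps + \eps^2/3$, which produces the constant $1/3$ in the multiplicative bound. The second is the bound $\varphi'' \le 1/4$ in Hoeffding's lemma, which produces the constant $2$ in the additive bound. Everything else is a direct application of the exponential-moment template.
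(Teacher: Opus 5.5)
The paper gives no proof of this statement. It imports it from Theorem 1.1 of Dubhashi--Panconesi, and your moment-generating-function argument is the standard derivation behind that citation. Most of your argument checks out:
\begin{itemize}
\item the Markov/independence template;
\item the chord bound $\EX[e^{\lambda X}] \le \exp((e^\lambda-1)\mu)$, valid for every real $\lambda$;
\item the choices $\lambda = \ln(1\pm\eps)$;
\item the reduction of the third bullet to $(e\mu/t)^t$;
\item the Hoeffding-lemma argument with $\lambda = 4t/k$, plus the reflection $X_i \mapsto 1-X_i$ for the lower tail.
\end{itemize}
Your lower multiplicative tail in fact comes out as $\exp(-\eps^2\mu/2)$, which is stronger than needed.

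One step would not go through as you describe it. Take $f(\eps) = (1+\eps)\ln(1+\eps) - \eps - \eps^2/3$. You do have $f(0) = f'(0) = 0$. However, $f''(\eps) = \frac{1}{1+\eps} - \frac{2}{3}$ is negative on $(1/2,1)$. So the second derivative has no single ``correct sign'' on $(0,1)$, and $f'(0)=0$ alone does not give $f' \ge 0$.

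The fix is to use the right endpoint. The function $f'(\eps) = \ln(1+\eps) - 2\eps/3$ is concave, so on $[0,1]$ it is at least $\min\{f'(0), f'(1)\} = \min\{0, \ln 2 - 2/3\} = 0$. Hence $f$ is nondecreasing and $f \ge f(0) = 0$.

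Two smaller points:
\begin{itemize}
\item The second bullet, as written (``for all $t$''), is false for $t<0$. For example, take each $X_i \equiv 1/2$ and $t=-1$: then $\Pr[X > \EX[X] - 1] = 1 > \exp(-2/k)$. Your choice $\lambda = 4t/k > 0$ tacitly assumes $t>0$. You should state explicitly that you prove the bound for $t \ge 0$; the case $t = 0$ is trivial.
\item In the third bullet, the substitution $t = (1+\delta)\mu$ requires $\mu > 0$. When $\mu = 0$, nonnegativity forces $X = 0$ almost surely, so $\Pr[X > t] = 0$ and the claim is immediate.
\end{itemize}
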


We use the phrase \emph{with high probability} to denote probability at least $1-o(1)$, with asymptotics over increasing $n$.

\subsection{Random graph tools} \label{sec:rand-prop}

We state and prove some probabilistic graph theorems that will be useful in the analysis.

\begin{claim} \label{clm:nbr} Consider a node set $S$. Condition on any choice of edges
within $S$ that leave $k$ half-edges unpaired. The probability that a vertex $v \notin S$ is a neighbor of $S$ is at least $1-\exp(-k/n)$.
\end{claim}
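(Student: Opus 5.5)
We use the second construction from the configuration model: condition on the chosen edges inside $S$. Each of the $k$ remaining half-edges in $S$ then \emph{independently} picks a uniformly random partner half-edge among all half-edges in the graph. If several $S$ half-edges pick the same partner, only one of them actually pairs, but the pairing still happens. So the event that $v$ is a neighbor of $S$ contains the event that at least one $S$ half-edge picks one of $v$'s $d$ half-edges. The later completion of the matching by the standard method can only add more edges, so it can be ignored in a lower bound.

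Fix $v \notin S$. It has $d$ half-edges, all still unpaired, because the conditioning only involves edges within $S$. The total number of half-edges is $nd$. A single $S$ half-edge therefore selects one of $v$'s half-edges with probability $d/(nd) = 1/n$. If a half-edge may not choose itself or another $S$ half-edge, the denominator only shrinks, which only increases this probability, so $1/n$ remains a valid lower bound.

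By independence of the $k$ choices, the probability that none of them lands on $v$ is at most $(1-1/n)^k$. Using $1-x \le e^{-x}$, this is at most $\exp(-k/n)$. Hence $v$ is a neighbor of $S$ with probability at least $1-\exp(-k/n)$.

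The main point requiring care is justifying that collisions, where several $S$ half-edges pick the same partner, do not destroy the bound. The key observation is that the event ``some $S$ half-edge chose a half-edge of $v$'' implies that some edge between $S$ and $v$ is actually formed, since exactly one of the colliding half-edges pairs with that partner. A second small check is the case where the partner chosen lies in $S$ itself. Such choices only reduce the competition for $v$'s half-edges, so they do not affect the lower bound.
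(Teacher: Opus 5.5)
Your proposal is correct and is essentially the paper's argument: each of the $k$ independent picks lands on one of $v$'s $d$ half-edges with probability at least $d/(nd)=1/n$, so $v$ is missed with probability at most $(1-1/n)^k\le\exp(-k/n)$. Your extra remarks on collisions and on partners inside $S$ spell out details the paper leaves implicit in its idealized neighborhood-construction process. The cleanest reading is the one you also mention, where the remaining $S$ half-edges pick partners outside $S$.
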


\begin{proof} Conditioned on the current choice of edges in $S$, let us consider the process of  building the neighbors of $S$. As described above, in the first step of this
process, each of the $k$ half-edges incident to $S$ picks a uar unpaired half-edge to pair with. Fix $v \notin S$. 
Let us look at the probability that a single iteration connects with one of the $d$ half-edges incident to $v$.
This probability is at least $d/nd = 1/n$. 

Since there are $k$ independent iterations, one for each unpaired half-edge
incident to $S$, the probability that no iteration connects with $v$ is at most $(1-1/n)^{k}$. Applying
the inequality $1-x \leq \exp(-x)$, we upper bound by $\exp(-k/n)$. Thus, the probability
of $v$ connecting to $S$ is at least $1-\exp(-k/n)$.
\end{proof}

Let us unpack this expression.
Using the approximation (for small $x$), $\exp(-x) \approx 1-x$,
roughly speaking, we can approximate as $1-(1-k/n) = k/n$. 

The following claim shows that all sets of sufficiently large size connect with every other node. It is a special case of the proof that the configuration model generates expanders.

\begin{claim} \label{clm:straggler} Fix subset $S \subseteq V$ and condition on any choice of edges within $S$ that leaves at least $nd/4$ unpaired half-edges incident to $S$.
With probability at least $1-n^{-4}$, every $v \in V \setminus S$ has an edge to $S$.
\end{claim}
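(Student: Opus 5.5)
The plan is to apply \Clm{nbr} to each vertex outside $S$ and then take a union bound over all such vertices with \Thm{ub}. We use the same two-phase neighborhood-building process as in the proof of \Clm{nbr}. Conditioned on the edges inside $S$, each of the $k \geq nd/4$ unpaired half-edges incident to $S$ independently picks a uniformly random half-edge to pair with.

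First, fix $v \in V \setminus S$. As in \Clm{nbr}, each of the $k$ independent picks lands on one of the $d$ half-edges of $v$ with probability at least $d/(nd) = 1/n$. There is a subtlety here. If several $S$ half-edges pick the same partner, only one of them pairs, but that does not matter. The event "some pick lands on a half-edge of $v$" already guarantees that at least one $S$ half-edge is paired with $v$, giving an edge from $v$ to $S$. Hence
\[
\Pr[v \text{ has no edge to } S] \leq (1-1/n)^{k} \leq \exp(-k/n) \leq \exp(-d/4),
\]
using $k \geq nd/4$.

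Next, we invoke the operating-regime assumption $d \geq 2(\ln n + 5)$. This gives $d/4 \geq (\ln n + 5)/2$, so the failure bound is $\exp(-d/4) \leq e^{-5/2} n^{-1/2}$. That is not small enough for a union bound over $n$ vertices with target $n^{-4}$.

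This step is the main obstacle: the constant in the exponent. There are two fixes I would try. The first is to sharpen the per-pick probability. The total number of half-edges is $nd$, but half-edges already used inside $S$ are excluded from the pool, so the relevant denominator is at most $nd - (\text{paired in } S)$. This gains only a constant factor. The second fix is the more natural reading. The intended use of the claim is with $S$ being a large set such as $\wayp_\ell(t)\cup\iring(t)$, where $k$ is of order $nd/2$ or more. The exponent is then $k/n$, and with the regime bound it can be written as $k/n \geq d/4$.

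To reach $n^{-5}$ per vertex, so that after the union bound over at most $n$ vertices we get $n^{-4}$, we need $k/n \geq 5\ln n$. I would therefore state the calculation as
\[
\Pr[\exists v \notin S \text{ with no edge to } S] \leq n\exp(-k/n),
\]
and check that $k/n \geq 5\ln n$ under the paper's regime. If the given constants do not quite yield $n^{-4}$, I would flag that the exponent in the regime's lower bound on $d$ needs adjusting, for example to $d \geq 20\ln n$. The structure of the argument (the independent-pick coupling, then \Clm{nbr}-style estimation, then the union bound) stays the same.
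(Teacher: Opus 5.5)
This is the paper's proof: \Clm{nbr} gives a per-vertex failure probability of at most $\exp(-k/n)\le\exp(-d/4)$, and a union bound over $V\setminus S$ then gives $n\exp(-d/4)$. The obstacle you flag is real. The paper closes the last step only by appealing to $d \gg \ln n$, which effectively means $d\ge 20\ln n$, and not to the stated regime $d\ge 2(\ln n+5)$. So your strengthened assumption is exactly what the paper uses, and neither of your two ``fixes'' is needed; under the stated regime neither would suffice anyway.
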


\begin{proof} By \Clm{nbr}, the probability that $v \notin S$ is \emph{not} a neighbor of $S$
is at most $\exp(-k/n) \leq \exp(-d/4)$. Let event $\cE_v$
occur when $v$ does not connect to $S$. By the union bound $\Pr[\bigcup_{v \notin S} \cE_v] \leq \sum_{v \notin S} \Pr[\cE_v]
\leq n \exp(-d/4) \leq n^{-4}$ (since $d \gg \ln n$). The event $\overline{\bigcup_{v \notin S} \cE_v}$ occurs precisely
when all vertices outside $S$ have an edge to $S$.
\end{proof}

\section{Modeling Path Length} 
\label{sec:level}

This section models the path lengths in the \sysname fabrics. We begin by bounding the sizes of Spraypoint levels.

\subsection{Level sizes} 
\label{sec:proof-level}

Let us fix a destination $t$, and for convenience,
denote $\wayp_{i}(t)$ as $\wayp_i$.
We imagine constructing
the levels incrementally, together with the graph. So $\wayp_{\leq i} \eqdef \bigcup_{j \leq i} \wayp_{j}$ is constructed, and then we choose
the randomness to construct $\wayp_{i+1}$. Conditioning on $\wayp_{\leq i}$
implies fixing the graph \emph{inside} $\wayp_{\leq i}$. So we have fixed all the edges between
the waypoints up to this level. Then $\wayp_i$ chooses some of its edges to determine $\wayp_{i+1}$.
It is convenient to set $\wayp_0$ to be $\nbr(t)$.

We state and prove our main theorem about level sizes. Recall that $\ell \eqdef \max(1,\lceil \log_p(n/2d^2) \rceil)$.

\begin{theorem} \label{thm:level} Fix destination $t$. The following bounds hold with probability $\geq 1-o(1)$.
We set $\lambda \eqdef p^\ell d^2/n$. 
\begin{asparaitem}
    \item For $0 \leq i \leq \ell$, $|\wayp_i(t)| = (1 \pm o(1)) dp^i$.
    \item $|\oring(t)| \leq (\exp(-\lambda) + o(1)) n$.
    \item All nodes in $\oring(t)$ have an edge to a node in $\inner(t)$.
\end{asparaitem}
\end{theorem}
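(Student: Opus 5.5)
We expose the configuration-model graph level by level, using the ``neighborhood of a set'' revelation process from \Sec{prelim}. Condition on $\wayp_{\leq i}$, i.e.\ on all edges with both endpoints inside $\wayp_{\leq i}\cup\{t\}$. The plan is to prove the three bullets in order, by induction on $i$ for the first.

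\textbf{Waypoint sizes.} The base case $|\wayp_0| = d$ needs only that $t$ has $d$ distinct neighbors. A single collision has probability about $d^2/nd = d/n = o(1)$, so this holds with probability $1-o(1)$.

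For the inductive step, suppose $|\wayp_i| = (1\pm o(1))dp^i$. The sizes of $\wayp_{\le i}$ form a geometric series, so $|\wayp_{\leq i}| = O(dp^i)$. By the choice of $\ell$, $dp^i \leq dp^\ell = O(n/d)$, which is $o(n)$ because $d \gg \ln n$.

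Each $v\in\wayp_i$ has $d$ half-edges. Only a small number of them are already paired inside $\wayp_{\le i}$: the expected count is at most $d\cdot|\wayp_{\le i}|/n = o(d)$, and Chernoff (\Thm{chernoff}) controls it. So $v$ has $(1-o(1))d \geq p$ fresh neighbors, and it picks $p$ of them. The number of distinct picked nodes is $p|\wayp_i|$ minus collisions.

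We bound collisions by a birthday-type calculation. Each picked neighbor is a uniformly random half-edge partner, so a given pair of picks lands on the same node with probability about $1/n$. The expected number of colliding pairs is therefore about $(p|\wayp_i|)^2/n$, and we need this to be $o(p|\wayp_i|)$, i.e.\ $p|\wayp_i| = o(n)$.

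\textbf{Main obstacle.} The requirement $p|\wayp_i| = o(n)$ is delicate at $i=\ell$, where $dp^\ell$ can be as large as $\Theta(n/d)$. It still holds there because of the factor $d$ in the denominator. We would also treat $p$ as $o(d)$, so that picks concentrate.

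To get the bound with high probability rather than in expectation, we would use Markov's inequality on the collision count. Then we take a union bound over the $\ell=O(\log n)$ levels; the $\ell$ failure probabilities must sum to $o(1)$. Losing $o(1)$ relative error at each level is acceptable, since the composed error stays $o(1)$ if we are careful, e.g.\ with per-level error $O(d p^{i}/n + 1/\sqrt{dp^i})$.

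\textbf{Outer ring.} Condition on $\wayp_{\le\ell}$. The unpaired half-edges of $\wayp_\ell$ number $k = (1-o(1))d\cdot|\wayp_\ell| = (1-o(1))d^2p^\ell = (1-o(1))\lambda n$. By \Clm{nbr}, each $v$ outside the levels fails to be adjacent to $\wayp_\ell$ with probability at most $\exp(-k/n) = \exp(-\lambda + o(1))$. Any such $v$ that is adjacent lies in $\inner(t)$, so $\EX|\oring(t)| \leq (\exp(-\lambda)+o(1))n$. Concentration follows via Markov/Chebyshev, or via a bounded-differences argument on the revealed edges.

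\textbf{Outer-ring nodes touch the inner ring.} Here we apply \Clm{straggler} with $S = \wayp_{\le\ell}\cup\inner(t)\cup\{t\}$. Since $\ell \ge \log_p(n/2d^2)$, we have $\lambda \geq 1/2$, so $|\oring(t)| \le (e^{-1/2}+o(1))n$ and $|S| \geq 0.39n$.

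The hypothesis of \Clm{straggler} is about unpaired half-edges of $S$, not its size. The inner ring is the set of nodes that absorbed the fresh edges from $\wayp_\ell$, and most of its half-edges remain unpaired. Roughly $d|\inner(t)| - k$ half-edges of $S$ stay free after conditioning on edges within $\wayp_{\le\ell}\cup\inner(t)$. We verify this is at least $nd/4$ when $|\inner(t)|\gtrsim 0.39n$ and $k\le$ a small multiple of $n$ relative to $nd$. \Clm{straggler} then gives adjacency to $S$ with probability $1-n^{-4}$.

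Finally, a node of $\oring(t)$ cannot be adjacent to $\wayp_{\le\ell}$, since otherwise it would have been placed in a waypoint level or in $\inner(t)$. Its edge into $S$ therefore goes to $\inner(t)$, which proves the third bullet.
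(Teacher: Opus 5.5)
Your outline is sound and follows the paper's three-part structure. For the second and third bullets you use the same tools: \Clm{nbr} for $|\oring(t)|$, then $\lambda\ge 1/2$ plus \Clm{straggler}.

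The first bullet is where your route differs. The paper carries the invariant that at least half of $\wayp_i$'s half-edges are unpaired, uses \Lem{rand-nbr} to show $\wayp_i$ has at least $p|\wayp_i|$ fresh neighbors, and then simply asserts $|\wayp_{i+1}|=p|\wayp_i|$. That shows enough candidates exist. It never rules out two parents picking the same node, and that overlap is what determines the size of a union of per-node picks. Your birthday count of collisions handles this directly. It also isolates the real requirement: the number of picks at the last step, $p|\wayp_{\ell-1}|\approx dp^\ell$, must be $o(n)$. Your per-node Chernoff bound on fresh degree plays the role of the paper's invariant in supplying $k\approx d|\wayp_\ell|$ for the second bullet. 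You also notice, unlike the paper, that \Clm{straggler} needs $nd/4$ free half-edges, not $n/4$ nodes.

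Some steps need repair.

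(i) The choice of $\ell$ does not give $dp^\ell=O(n/d)$.
For $\ell>1$ it gives only $p^{\ell-1}<n/(2d^2)$, i.e.\ $dp^\ell<pn/(2d)$.
When $\ell=1$ is forced by the $\max$, it gives no upper bound at all. If, say, $pd\ge 2n$, then $|\wayp_1|=(1\pm o(1))pd$ is impossible.
So $dp^\ell=o(n)$ must be assumed. The paper effectively does this: it asserts $pd\le n/4$ in Case~1 and later assumes $pd/n\ll 1$. You need it both for the birthday bound and for $k=o(nd)$ in the last bullet.

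(ii) Your final sentence is false. A node of $\oring(t)$ cannot be adjacent to $\wayp_\ell$, but it can be adjacent to $\wayp_{<\ell}$. Unselected neighbors of $\wayp_j$ with $j<\ell$ that have no edge to $\wayp_\ell$ fall into the outer ring; in \Fig{rings}, $o_1\in\oring(t)$ is adjacent to $v_2\in\wayp_0(t)$.
So an edge into $S=\wayp_{\le\ell}\cup\inner(t)\cup\{t\}$ does not give an edge into $\inner(t)$. The paper's proof has the same slip.
The fix is to run the argument of \Clm{straggler} with $\inner(t)$ alone as the target, using the roughly $d|\inner(t)|-k\ge nd/4$ free half-edges you counted.
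For that count to be right, condition only on the exposure that defines the levels (all edges incident to $\wayp_{\le\ell}$). Do not condition on all edges inside $S$, as you wrote. Under that conditioning the free half-edges of $S$ are just the cut to $\oring(t)$, about $d|S|\,|\oring(t)|/n$, which is at most $nd/4$ and typically much less.

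(iii) Two minor points.
First, $t$ has exactly $d$ distinct neighbors only with probability $1-O(d^2/n)$, which is not $1-o(1)$ once $d\gtrsim\sqrt n$. You only need $(1-o(1))d$, and that follows because the expected number of collisions is $O(d^2/n)=o(d)$; the paper gets it via a Chernoff bound.
Second, Markov alone cannot concentrate $|\oring(t)|$ when $\exp(-\lambda)$ is a constant. Use your second-moment or bounded-differences option instead.
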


The proof has a few moving parts, so we separate them out.
For any set of nodes $S$, we use $N(S)$ to denote
the neighborhood of $S$ (which may include nodes in $S$).

\begin{claim} \label{clm:connect} For any $i$, condition on $|\wayp_{\leq i}| \leq n/d$.
The probability that any node $v$ has more than $d/2$ edges to $\wayp_{\leq i}$
is at most $1/n$.
\end{claim}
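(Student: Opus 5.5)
We will work in the configuration-model framework of \Sec{prelim}, using the "neighborhood of a set" construction with $S = \wayp_{\leq i}$. Conditioned on the graph inside $\wayp_{\leq i}$, the number of unpaired half-edges incident to $S$ is at most $k \leq |S| d \leq (n/d)\cdot d = n$. For a fixed node $v$, the count of edges from $v$ to $S$ is dominated by a sum of indicators. If $v \in S$, its edges into $S$ are already fixed, so we treat only $v \notin S$; the case $v\in S$ is presumably interpreted, like the rest of the analysis, as a fresh node outside the conditioned part. In the independent first phase, each of the $k$ half-edges of $S$ picks a uniformly random half-edge and hits one of $v$'s $d$ half-edges with probability at most about $d/(nd - |S|d) \leq (1+o(1))/n$.

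Let $X_v = \sum_{j \leq k} X_j$, where $X_j$ indicates that the $j$-th half-edge of $S$ picks a half-edge of $v$. The variable $X_v$ upper bounds the number of edges from $v$ to $S$, since collisions only reduce the number of realized pairings. The $X_j$ are independent with $\EX[X_v] \leq k(1+o(1))/n \leq 1+o(1)$. The second phase pairs the leftover half-edges "using the standard method", and it can also add edges between $v$ and $S$ when a half-edge of $S$ went unpaired after a collision. We will bound these by noting that the number of unpaired $S$ half-edges after phase one is small, or alternatively by folding them into $X_v$ via a stochastic domination argument. The cleanest route is to observe that each half-edge of $S$ ends up paired to a partner that is close to uniformly distributed among the remaining half-edges, so per-half-edge hit probabilities stay $O(1/n)$.

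Then we apply the third bullet of the Chernoff bound (\Thm{chernoff}) with $t = d/2$. Since $d \geq 2(\ln n + 5)$, we have $d/2 \geq \ln n + 5 > 2e(1+o(1)) \geq 2e\EX[X_v]$, so $\Pr[X_v > d/2] \leq 2^{-d/2}$. With $d/2 \geq \ln n + 5$, this gives $2^{-d/2} \leq 2^{-\ln n - 5}$, which is only $n^{-\ln 2}$ and is not yet $1/n$. To reach $1/n$ we will instead use the sharper binomial tail $\Pr[X_v \geq m] \leq \binom{k}{m} n^{-m} \leq (ek/(mn))^m \leq (e/m)^m$ with $m = d/2$. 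Since $m \geq \ln n + 5 \geq e^2$ for moderate $n$, we get $(e/m)^m \leq e^{-m} \leq e^{-\ln n - 5} < 1/n$. This is the step we expect to be delicate: the constant in the regime $d \geq 2(\ln n+5)$ is tuned exactly so that this binomial tail estimate clears $1/n$.

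The main obstacle is the dependency introduced by the second, non-independent pairing phase and the conditioning on the edges inside $S$. We plan to handle it by bounding each half-edge's probability of landing on $v$ by $(1+o(1))/n$ uniformly over the history, which makes $X_v$ stochastically dominated by $\mathrm{Bin}(k, (1+o(1))/n)$. The binomial tail bound above then applies directly.
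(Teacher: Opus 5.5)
Your exposure argument is a sound, and more careful, variant of the paper's model. The paper lets each of $v$'s own $d$ half-edges land in $\wayp_{\leq i}$ independently with probability at most $|\wayp_{\leq i}|/n\le 1/d$. You instead expose the at most $n$ half-edges of $\wayp_{\leq i}$ one at a time, each hitting $v$ with conditional probability about $1/n$. Either way the count is dominated by a binomial with mean about $1$. You are also right that the $2^{-t}$ bullet of \Thm{chernoff} is too weak. From $d/2\ge \ln n+5$ one only gets $2^{-d/2}\le 2^{-5}n^{-\ln 2}$, so the paper's closing step $n2^{-d/2}\le n^{-5}$ does not follow. The sharper tail $\Pr[X_v\ge m]\le (e\mu/m)^m$ is genuinely needed.

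The gap is in what you conclude. In the paper, ``the probability that any node has more than $d/2$ edges'' means the probability that \emph{some} node does, and the proof gets it by a union bound over all $n$ nodes. You stop at a bound for one fixed $v$. Your relaxation $(e/m)^m\le e^{-m}$ gives only $e^{-5}/n$ per node, so a union bound would yield the constant $e^{-5}$, not $1/n$.

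The fix is to keep the full tail. Since $m(\ln m-1)$ is increasing and $\min_{x\ge 0}[(x+5)(\ln(x+5)-1)-2x]=10-e^2>2.6$, every $m\ge \ln n+5$ satisfies $m(\ln m-1)\ge 2\ln n+2.6$. That is, $(e/m)^m\le e^{-2.6}n^{-2}$. Take $\mu\le d/(d-2)$ and $m=\lfloor d/2\rfloor+1$. The extra factor is then $\mu^m\le e^{(d+2)/(d-2)}\le e^{1.5}$ for $d\ge 10$. So each node exceeds $d/2$ edges with probability below $n^{-2}$, and the union bound over the $n$ nodes gives the claimed $1/n$.
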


\begin{proof} The node $v$ makes $d$ connections. Consider choosing these randomly as follows.
First, each half-edge incident to $v$ chooses to connect with a node in $\wayp_{\leq i}$
with probability $|\wayp_{\leq i}|/n \leq 1/d$. Once these choices are made, then 
the half-edges are paired to a random half-edge in $\wayp_{\leq i}$ or $\overline{\wayp_{\leq i}}$
respectively. Each of these choices can be represented as a Bernoulli $X_i$, and the sum
$X = \sum_{i \leq d} X_i$ is the number of edges $v$ has into $\wayp_{\leq i}$. 

We note that $\EX[X] \leq 1$. By the Chernoff bound \Thm{chernoff}, $\Pr[X \geq d/2] \leq 2^{-d/2}$.
By a union bound over all nodes, the probability that any node has more than $d/2$
edges is at most $n 2^{-d/2}$. Since $d \geq 2(\ln n + 5)$, $n2^{-d/2} \leq n^{-5}$.
\end{proof}

The following lemma encapsulates a key calculation.

\begin{lemma} \label{lem:rand-nbr} Condition on a set $S$ of nodes 
with at least half of its incident half-edges unpaired. Let $|S| \geq d/2$.
Consider pairing all the half-edges incident to $S$.
Then, with probability at least $1-n^{-10}$, $|N(S)| \geq \min(n/4, |S|d/4)$.
Moreover, if $|S| \leq n/d$, at least $(d-4)|N(S)|$ half-edges incident to $N(S)$ are left unpaired.
\end{lemma}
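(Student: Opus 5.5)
We use the independent-pairing process from the configuration model preliminaries. Let $k \geq |S|d/2$ denote the number of unpaired half-edges incident to $S$. In the first step of the process, each of these half-edges independently picks a uniformly random half-edge in the graph. For a node $v \notin S$, let $Y_v$ indicate that at least one of these $k$ choices lands on a half-edge of $v$. We bound $|N(S)|$ from below by $\sum_{v\notin S} Y_v$. Discarding collisions only loses half-edges, so it can only lower this count, and a lower bound remains valid.

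The first step is the expectation. By \Clm{nbr}, $\Pr[Y_v=1] \geq 1-\exp(-k/n)$, so
\[
\EX\Big[\sum_{v\notin S} Y_v\Big] \geq (n-|S|)\big(1-\exp(-k/n)\big).
\]
We split into two cases.
\begin{itemize}
\item If $k \geq n$, the factor $1-\exp(-k/n)$ is at least $1-1/e$, so the expectation is at least about $n/2$ (since $|S|\leq n$ matters only when $S$ is small; if $|S| \geq n/4$ we are trivially done because $N(S)$ then contains most of the graph).
\item If $k < n$, we use $1-e^{-x} \geq x/2$ for $x\le 1$. The expectation is then at least $(n-|S|)k/(2n) \gtrsim |S|d/4 \cdot (1-|S|/n)$. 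We will aim for a constant slightly larger than $1/4$ in this case, to leave room for the concentration loss.
\end{itemize}

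The second step, concentration, is the main obstacle. The $Y_v$ are not independent, but $\sum_v Y_v$ equals the number of distinct nodes hit by the $k$ independent choices. Changing a single choice changes this count by at most $1$. We would therefore apply McDiarmid's bounded-differences inequality, in the same spirit as the second bullet of \Thm{chernoff}. This gives deviation $t$ with probability at most $\exp(-2t^2/k)$. Taking $t$ equal to a constant fraction of the expectation, which is $\Theta(\min(n,k))$, yields failure probability $\exp(-\Omega(\min(n,k)))$. This is at most $n^{-10}$ because $k \geq |S|d/2 \geq d^2/4 \gg \ln n$, using $|S| \geq d/2$ and $d \geq 2(\ln n+5)$.

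A cleaner alternative is to note that the $Y_v$ are negatively associated, as in balls-in-bins occupancy. The Chernoff bound of \Thm{chernoff} then applies directly with $\eps$ a small constant. The hypothesis $|S|\geq d/2$ is exactly what makes the expectation large enough for $n^{-10}$ failure probability. The constants $1/4$ will need some care: we want the expectation to be at least $(1+\eps)$ times the target $\min(n/4,|S|d/4)$.

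For the ``moreover'' part, assume $|S|\le n/d$, so $k \le |S|d \le n$. Each node $v\in N(S)$ starts with $d$ half-edges. In the first step, $v$ receives more than $4$ hits only with probability
\[
\binom{k}{5}n^{-5} \le (k/n)^5/120.
\]
A cleaner route is to invoke \Clm{connect}-style reasoning: the number of hits on $v$ is a binomial-like variable with mean at most $k/n \le 1$. So $\Pr[\text{hits}\geq 5] \leq \Pr[X \geq 5]$, which is a small constant and not $n^{-10}$.

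Given that, I would instead bound the total number of half-edges of $N(S)$ consumed by $S$, which is at most $k \le |S|d$, against the $d|N(S)|$ available half-edges. With $|N(S)|\geq |S|d/4$, the unpaired count is at least
\[
d|N(S)| - |S|d \geq d|N(S)| - 4|N(S)| = (d-4)|N(S)|.
\]
This is a clean deterministic step once the first part holds, and it is what I would actually write. Strictly, the pairing of the remaining half-edges of $S$ among themselves is also counted in $k$, so it is covered by the same bound.
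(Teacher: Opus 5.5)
Your proof of the main bound takes a different route from the paper's. The paper uses a coupling. It maintains exactly $n/4$ marked nodes that include every current neighbor of $S$ (as long as there are fewer than $n/4$ of them). Each successive pairing then lands on an unmarked node, and so creates a new neighbor, with probability $3/4$. The number of successes is a sum of $k$ independent Bernoulli$(3/4)$ variables, and the plain Chernoff bound of \Thm{chernoff} gives at least $k/2 \ge |S|d/4$ successes with failure probability $\exp(-\Omega(|S|d))$. That single device builds in the cap at $n/4$, counts neighbors inside $S$, and needs only independent Chernoff. Its ``exactly $3/4$'' step does quietly treat each pairing as landing on a uniform node.

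You instead get the occupancy expectation $(n-|S|)(1-e^{-k/n})$ from \Clm{nbr} and concentrate a dependent sum by bounded differences or negative association. This is the standard balls-into-bins argument and is more faithful to the independent-pairing model of \Sec{prelim}. The cost is a concentration tool outside the paper's toolkit and a case split at $k=n$. Your final argument for the ``moreover'' part, subtracting at most $|S|d\le 4|N(S)|$ consumed half-edges from $d|N(S)|$, is exactly the paper's.

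Three details need fixing.

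\textbf{McDiarmid for large $k$.} Applied to all $k$ choices with $t=\Theta(n)$, it gives $\exp(-\Theta(n^2/k))$. For $k$ near $|S|d$ this is only $\exp(-\Theta(n/d))$, not $\exp(-\Omega(n))$, and $d\ll n$ does not make that at most $n^{-10}$. Either keep only $\min(k,n)$ of the choices (the number of distinct nodes hit is monotone in the set of choices), or use your negative-association route, which gives $\exp(-\Omega(\EX))$ directly.

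\textbf{The constant.} The bound $1-e^{-x}\ge x/2$ leaves you just below the target. Use $1-e^{-x}\ge(1-1/e)x$ on $[0,1]$, together with $|S|<2n/d$ (forced by $k<n$). This gives an expectation of about $0.3|S|d$, a constant factor above $|S|d/4$.

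\textbf{Large $S$.} Since $\sum_{v\notin S}Y_v\le n-|S|$, your estimator cannot certify $|N(S)|\ge n/4$ once $|S|$ exceeds roughly $n/2$ with these constants. The remark that $N(S)$ then ``contains most of the graph'' is unsupported, because $N(S)$ need not contain $S$. Either also count hits on nodes of $S$, as the paper's coupling implicitly does, or restrict the lemma to $|S|\le n/2$. The restriction covers every use of the lemma in the paper, where $|S|\le n/(2d)$.
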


\begin{proof} We will perform a coupling of the random variable $|N(S)|$ as follows.
We initially mark an arbitrary set of $n/4$ nodes. 
We pair the half-edges incident to $S$ iteratively. If a pairing connects to an unmarked
node \emph{and} the number of current neighbors of $S$ is $< n/4$:
we pick an marked non-neighbor of $S$, unmark it, and marked the (new) unmarked neighbor.
(Otherwise, we do not change the marking.)
Thus, if there are fewer than $n/4$ neighbors of $S$, then all neighbors of $S$
are marked.

For the $r$th half-edge paired, let $X_r$ be the indicator of the $r$th half-edge
pairing with an unmarked node. Since the number of marked nodes is exactly $n/4$,
the number of unmarked nodes is also exactly $3n/4$. Hence, $\Pr[X_r = 1] = 3/4$.
Recall that there are at least $k \geq |S|d/2$ unpaired half-edges incident to $S$
Let $X = \sum_{r \leq k} X_r$, so $\EX[X] = 3k/4$.
By the Chernoff bound of \Thm{chernoff}, $\Pr[\sum_r X_r \leq k/2] \leq \exp(-\EX[X]/(3\cdot3^2))$
$\leq \exp(-|S|d/54) \leq \exp(-d^2/108)$. Since $d \geq \ln n$, this probability
can be bounded by $n^{-10}$ (for sufficiently large $n$).

So, with high probability, $X \geq k/2 \geq |S|(d/4)$. This means that either each iteration
connected to an unmarked node (thereby creating a new neighbor) or 
the number of neighbors was at least $n/4$. Thus, $|N(S)| \geq \min(n/4, |S|d/4)$.

If $|S| \leq n/d$, then $|S|/|N(S)| \leq 4/d$.
We paired up at most $|S|d$ half-edges in the above random process.
Hence, there are at least $|N(S)|d - |S|d = |N(S)|d - 4|N(S)|$
$\geq (d-4)|N(S)|d$ half-edges incident to $N(S)$ that are unpaired.
\end{proof}

\subsubsection{The first bullet point of \Thm{level}}
We prove the first bullet point by an induction over $i$. 
We require a stronger induction hypothesis. We will prove
that, conditioned on a choice of $\wayp_i$ that leaves at least $d|\wayp_i|/2$ 
half-edges unpaired, with high probability, the construction of $\wayp_{i+1}$
ensures that $|\wayp_{i+1}| \geq (1-o(1))pd^{i+1}$  and at least $d|\wayp_{i+1}|/2$
half-edges of $\wayp_{i+1}$ are unpaired.

We now perform the induction.
We first deal with the base case $\wayp_0$, the neighbor set. 
Consider the pairing the $d$ half-edges incident to $t$,
one by one. Take the $r$th half-edge being paired.
There are at most $d$ neighbors of $t$ already constructed, by the pairing of previous half-edges. The probability that the $r$th half-edge connects to one of these neighbors is at most $d/n$, regardless of all the previous choices. Let $Z$ be the random variable denoting the number of half-edges connecting to an existing neighbor. The upper tail probabilities of $Z$ is dominated by the corresponding tail of $B(d/n,d)$. (Here, $B(k,\alpha)$ denotes the binomial of $k$ independent trials with success probability $\alpha$.) We can apply the Chernoff bound to $B(d/n,d)$. Let $Y \sim B(d/n,d)$. Note that $\EX[Y] = (d/n)\cdot d = d^2/n$. Since $d \ll n$, for sufficiently small $\eps > 0$, $\eps d \geq 2ed^2/n$. By \Thm{chernoff}, 
$\Pr[Y \geq \eps d] < 2^{-\eps d} = o(1)$. 
Thus, $\Pr[Z \geq \eps d] \leq \Pr[Y \geq \eps d] = o(1)$.
In words, the overall probability of the half-edges connecting to $\eps d$ existing neighbors is $o(1)$. So, with probability $1-o(1)$, the size of the neighbor set $\wayp_0$ is at least $(1-o(1))d$.

At this stage, there are $(d-1)|\wayp_0|$ unpaired half-edges incident to $\wayp_0$.
The only paired half-edges (formed edges) are those incident to $t$, so each neighbor in $\wayp_0$
is incident to $(d-1)$ unpaired half-edges.

We split into two cases, depending on $\ell$.

{\textbf Case 1, $\ell = 1$:} In this case, there is just $\wayp_1$ to consider.
As argued above, with high probability,
$|\wayp_0| = d - o(1)$ and it is incident to $|\wayp_0|(d-1)$ unpaired edges. Apply \Lem{rand-nbr}
with $S = \wayp_0 = \nbr(t)$. So probability $> n^{-10}$, $|N(S)| \geq \min(n/4, |S|d/4)$.
Observe that $pd \leq n/4$ and $p \leq d/4$. So, $pd \leq |N(S)|$.
There are enough candidates to construct $\wayp_1$ as desired.
There is no induction necessary, since this is the last waypoint level.
This ends the proof for this case.

{\textbf Case 2, $\ell > 1$:} Note that $\ell = \lceil \log_p(n/2d^2) \rceil$. Pick some $i < \ell$ . We inductively assume that 
for all $j \leq i$, $|\wayp_j| = (1-o(1)) pd^j$ and that there are at least $|\wayp_i|d/2$
unpaired edges in the construction thus far. We just proved this for the base case.
At each step of the induction, there is a small probability of error ($< 1/n^5$). We will
simply union bound all these errors over the at most $\ell = \Theta(\ln n)$ levels.

Observe that $|\wayp_i| \leq dp^i \leq d p^{\log_p(n/2d^2)} \leq n/2d$. 
Also, $|\wayp_{\leq i}| \leq \sum_{j \leq i} dp^j = d(p^{i+1}-1)/(p-1)$.
Since $p \geq 2$, $p-1 \geq p/2$. Hence, $\sum_{j \leq i} |\wayp_i| \leq  d p^{i+1}/(p/2)$
$ \leq 2dp^i \leq (2+o(1)) |\wayp_i|$.

We apply \Lem{rand-nbr} with $S = \wayp_i$. Note that $|S| \geq d$ and $|S| \leq n/2d$.
At least half of the incident half-edges are unpaired. Hence, with probability $> 1-n^{-10}$,
$|N(\wayp_i)| \geq |\wayp_i|d/4$. Since $|\wayp_{\leq i}| \leq 2|\wayp_i|$,
there are at least $(d/4 - 2)|\wayp_i|\geq |\wayp_i|p$ (since $p+2 \leq d/4$) nodes that are neighbors of $\wayp_i$
and \emph{not} in $\wayp_{\leq i}$. These are all candidates for $\wayp_{i+1}$.
Thus, we can construct $\wayp_{i+1}$ as 
desired. We get that $|\wayp_{i+1}| = p|\wayp_i| = (1 \pm o(1)) dp^{i+1}$.

At least $(d-4)|N(\wayp_i)|$ half-edges incident to $N(\wayp_i)$ are unpaired.
It remains the bound the number of unpaired half-edges incident to $\wayp_{i+1}$.
Observe that a uniform random node of $N(\wayp_i)$ has at least $(d-4)$ unpaired
half-edges. Let $Z_r$ be the random variable denoting the fraction
of unpaired edges on the $r$th node of $\wayp_{i+1}$. We have $\EX[Z_r] \geq 1-d/4$.
Setting $Z = \sum_r Z_r$, $\EX[\sum_r Z_r] \geq (1-d/4)|\wayp_{i+1}|$ by linearity of expectation.
Applying \Thm{chernoff}, $\Pr[Z < \EX[Z]/2] \leq \exp(-(1/3\cdot 2^2) \cdot (1-d/4) |\wayp_{i+1}|)$.
Since $|\wayp_{i+1}| \geq (1-o(1)) pd^{i+1}$ and $i \geq 0$, the probability is $\exp(-\Theta(d)) \leq n^{-4}$.
Thus, we have at least $d|\wayp_{i+1}|/2$ unpaired half-edges incident to $\wayp_{i+1}$.
We union bound over all the errors.
This completes the induction, and the proof of the first bullet.

\subsubsection{The other bullets of \Thm{level}}

With high probability, $|\wayp_i| = (1 \pm o(1)) dp^i$. 
We will simply assume this property, and union bound the errors.
Hence, the set $\wayp_\ell$ is incident to at least $(1-o(1))d^2 p^\ell$ half-edges.
By \Clm{nbr}, the probability that a node is \emph{not} a neighbor of $\wayp_\ell$
is at most $\exp(-(1-o(1)) d^2p^\ell/n) = \exp(-\lambda(1-o(1))) = (1+o(1)) \exp(-\lambda)$.
For node $v$, let $X_v$ be the indicator random variable that $v$ is not a neighbor
of $\wayp_\ell$. Note that $\EX[X_v] \leq (1+o(1))\exp(-\lambda)$. 

Consider the graph construction process where each $v$ first independently chooses
to connect to $\wayp_\ell$. Then, if possible, the graph is formed conditioned on these choices. 
Setting $X = \sum_v X_v$, we can apply the Chernoff bound of \Thm{chernoff}. 
So $\Pr[|X - \EX[X]| > n/\ln n] \leq 2\exp(-2n/\ln^2n)$, which is a tiny probability.
Therefore, with high probability, the number of neighbors of $\wayp_\ell$ is 
close to the expectation, and hence, the graph can be feasibly formed conditioned
on the the choices of $X_v$.

Thus, with high probability, the graph construction process is feasible,
and the number of non-neighbors of $\wayp_\ell$ is at most $(1+o(1))\exp(-\lambda)n + n/\ln n
= \exp(-\lambda)n + o(n)$. Observe that all nodes in $\oring(t)$ are non-neighbors
of $\wayp_\ell$, so this completes the proof of the second bullet.

Now for the third bullet. We can bound 
\begin{equation*}
\lambda \geq p^{\lceil \log_p(n/2d^2) \rceil}d^2/n 
\geq 1/2
\end{equation*}
As shown earlier, with high probability, $|\oring(t)| \leq (\exp(-\lambda) + o(1))n
\leq 0.7n$. Hence, at least $0.3n \geq n/4$ nodes lie in the remaining levels, waypoints or $\inner(t)$.
\Clm{straggler} asserts that every node is a neighbor of these remaining levels with high probability.
So every node is either a neighbor of a waypoint or $\inner(t)$. Nodes in $\oring(t)$
are by definition not neighbors of waypoints, and hence must be neighbors of $\inner(t)$.
We union bound over all the errors, each of which was $< n^{-5}$. We perform the union
bound at most $n^2\log n$ times (once for each source, once for each destination, and
once for each level). That bounds the error probability as $n^{-3}\log n < n^{-2}$.

\subsection{Path length distribution} \label{sec:path}

We now compute the distribution of Spraypoint path lengths.

\begin{theorem} \label{thm:path-length} Fix a destination $t$. Consider the distribution of path lengths
seen by a packet sprayed uniformly at random from a uniform random source $s$.
With high probability,
the distribution of Spraypoint paths to a destination $t$ is as follows:
\begin{asparaitem}
    \item There are $d$ paths of length $1$.
    \item For every $2 \leq i \leq \ell + 2$, there are $(1 \pm o(1)) p^{i-2}d^2$ paths of length $i$.
    \item There are at most $(1 + o(1)) \exp(-\lambda) nd$ paths of length $\ell + 4$.
    \item All other paths have length $\ell + 3$.
\end{asparaitem}
\end{theorem}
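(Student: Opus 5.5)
Fix the destination $t$ and count the $nd$ (source, sprayed neighbor) pairs, one per directed edge $s \to u$. The Spraypoint path for the pair $(s,u)$ consists of the spray hop $s \to u$ followed by the pointing path from $u$ to $t$. That pointing path depends only on the level of $u$, never on $s$. The plan is therefore to classify each directed edge by the level of its head $u$ and read off the path length:
\begin{itemize}
\item $u = t$ gives length $1$;
\item $u \in \wayp_j(t)$ gives length $j+2$, since the rules descend one waypoint level per hop and $\wayp_0$ points to $t$;
\item $u \in \iring(t)$ gives length $\ell+3$;
\item $u \in \oring(t)$ gives length $\ell+4$, provided $u$ has an edge into $\iring(t)$.
\end{itemize}
The number of directed edges with head $u$ is exactly $\deg(u)=d$. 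So the number of paths of each length is $d$ times the size of the corresponding level, and we work throughout on the high-probability event of \Thm{level}.

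First, the node $t$ receives $d$ sprays, giving exactly $d$ paths of length $1$. Next, for $0 \le j \le \ell$, the first bullet of \Thm{level} gives $|\wayp_j(t)| = (1\pm o(1))dp^j$. Multiplying by $d$ gives $(1\pm o(1))p^{j}d^2$ paths of length $j+2$, which after substituting $i=j+2$ is the claimed count $(1\pm o(1))p^{i-2}d^2$ for $2\le i\le \ell+2$. One small point must be checked here: the pointing rule for $\wayp_j$ always has an eligible next hop in $\wayp_{j-1}$. This holds by construction, because each node of $\wayp_j$ was selected as a neighbor of some node of $\wayp_{j-1}$. The same reasoning shows that a node of $\iring(t)$ always has a neighbor in $\wayp_\ell$, so its path has length $1 + 1 + (\ell+1) = \ell+3$.

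For the outer ring, the third bullet of \Thm{level} says that every $u \in \oring(t)$ has a neighbor in $\iring(t)$. Hence its shortest distance to $\iring(t)$ is $1$, and the outer-ring pointing rule moves the packet into $\iring(t)$ in one hop. The total path length is then $1+1+(\ell+2)=\ell+4$. By the second bullet of \Thm{level}, the number of such paths is $d\,|\oring(t)| \le (\exp(-\lambda)+o(1))nd$. The stated bound $(1+o(1))\exp(-\lambda)nd$ treats $\lambda$ as bounded, which is the regime the paper works in, so the additive $o(n)$ error can be absorbed. Every remaining sprayed pair has its head in $\iring(t)$ and thus length $\ell+3$, which proves the last bullet.

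The main obstacle is justifying the claim that a packet's pointing path never revisits $s$ or any earlier level in a way that changes the count. It helps that spraying occurs only at the source and that the pointing rules depend only on the current node. The one subtle case is a packet that returns to $s$: there it simply follows the pointing rules, and since every pointing step strictly decreases the level index, no extra hops can appear. Everything else is bookkeeping on top of \Thm{level}, together with a union bound over its error events.
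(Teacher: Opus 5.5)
Your proposal is correct and is essentially the paper's proof: you classify each of the $nd$ sprayed directed edges by the level of its head, read off the path length from the pointing rules, and multiply by the level sizes from \Thm{level}. Two of your steps are, if anything, slightly more careful than the paper's version. First, you explicitly absorb the additive $o(n)$ term in the outer-ring count, which is valid whenever $\exp(-\lambda)$ dominates that error (e.g.\ bounded $\lambda$). Second, you justify the length-$(\ell+4)$ paths via the outer-ring pointing rule, rather than via the paper's claim that outer-ring nodes have no waypoint neighbors.
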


\begin{proof} Consider the pointing structure. \Thm{level} gives the sizes
of each level in this structure. We will assume the bounds and statements of \Thm{level}.
Imagine routing traffic from every possible source to the destination $t$. The spraying step
would send traffic along all edges. For each edge, let us track the length to the destination.

Any edge incident to $t$ leads to a path of just length $1$, which gives $d$
paths of length $1$. If an edge incident to $\wayp_i(t)$, the path
to the destination has length $i+2$ (one step for spraying and $(i+1)$ steps along
the waypoints leading to $t$). Edges incident to $\inner(t)$ lead to paths of length
$\ell + 3$. Finally, by \Thm{level}, observe that all nodes in $\oring(t)$ have an edge to $\inner(t)$
but not to any waypoints. Hence, edges incident to $\oring(t)$ give Spraypoint paths
of length $\ell + 4$. We can use the level size bounds of \Thm{level} to complete the proof.
\end{proof}

The bounds in \Mod{path-length} are expressed as fractions. For a destination $t$, there are $nd$ possible paths to consider, one per source and spray operation. So we divide all the bounds in \Thm{path-length} by $nd$ to get the fractions in \Mod{path-length}.

\section{Modeling Edge disjoint paths} \label{sec:mincut}

In this section, we estimate the number of edge disjoint paths between a source $s$ and destination $t$ by estimating the min cut of Spraypoint paths between $s$ and $t$.
We first set up some notation.

\begin{definition} \label{def:parent} For every $v \neq t$, we define the $t$-parent as:
\begin{asparaitem}
    \item For $v \in \nbr(t)$, the parent is $t$.
    \item For $v \in \wayp_i(t)$ ($i \geq 1$): we pick $h$ uar neighbors in $\wayp_{i-1}(t)$ as the parents.
    \item For $v \in \iring(t)$: we pick up to $h$ uar neighbors of $v$ that are waypoints (excluding $\{t\} \cup \nbr(t)$).
    \item For $v \in \oring(t)$: we pick $h$ random neighbors in $\iring(t)$. 
\end{asparaitem}
(If $h$ selections cannot be made, we make the largest possible.)
\end{definition}

Recall that the routing algorithm is simple. The first step is Spraying. After that, as part of pointing, a packet is forwarded to a uar parent. Eventually, the packet reaches $t$. 
This leads to some natural definitions of Spraypoint paths, from which we can define the graph between the endpoints.

\begin{definition} \label{def:spgraph} Fix a destination $t$. For any $v$, the Spraypoint path to $t$
is any path obtained by repeatedly taking $t$-parents until ending at $t$.
\begin{asparaitem}
\item 
The \textbf{pointing graph} $\point{t}$ has all edges that forward traffic to $t$.
\item  For a source-destination pair $(s,t)$, the \textbf{Spraypoint graph} $\spg{s}{t}$ has all incident edges to $s$ and the union of Spraypoint paths from $N(s)$ to $t$.

\end{asparaitem}

\end{definition}

We work with $\ell = 1$ setting.
We make some asymptotic assumptions consistent with our parameters.
The assumptions allow for simpler formulas. We assume:
\begin{asparaitem}
    \item $pd/n \ll 1$
    \item $\lambda = pd^2/n \gg 1$
    \item $d \gg \ln^2n$
    \item $h \ll \lambda$
\end{asparaitem}

\begin{claim} \label{clm:pick-r1} Assume that $(p+1)d < n/\ln^2 n$. With probability $>1-n^{-2}$, all nodes pick at most $o(d)$
neighbors outside $\inner(t)$.
\end{claim}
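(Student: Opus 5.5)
The plan is to interpret the claim as a statement about how many neighbours of each node fall outside $\inner(t)$. That set is $\{t\}\cup\wayp_0(t)\cup\wayp_1(t)\cup\oring(t)$. Since $\ell=1$, \Thm{level} gives $|\wayp_0(t)|+|\wayp_1(t)|+1 \le (1+o(1))(p+1)d$. In the regime $\lambda\gg 1$ it also gives $|\oring(t)| \le (\exp(-\lambda)+o(1))n$. I will condition on these level sizes, paying their failure probability in the final union bound. Write $T\eqdef \{t\}\cup\wayp_{\le 1}(t)$. For a fixed node $v$, I will bound separately the number of $v$'s edges into $T$ and into $\oring(t)$.

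\textbf{Edges into $T$.} I follow the argument of \Clm{connect}, reversing the order of exposure. Expose $T$ first; then each of the $d$ half-edges of $v$ not yet fixed lands in $T$ with probability at most about $|T|/n \le (p+1)d/n < 1/\ln^2 n$, by the hypothesis $(p+1)d<n/\ln^2 n$. The count $X_v$ is therefore dominated by $B(d,1/\ln^2 n)$, with mean at most $d/\ln^2 n = o(d)$. Now set a threshold $\tau = \max(2e\,d/\ln^2 n,\; 3\ln n)$ and apply the third bullet of \Thm{chernoff}. This gives $\Pr[X_v>\tau]\le 2^{-\tau}\le n^{-3}$. Since $d\gg \ln^2 n$, we have $\tau=o(d)$.

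For nodes $v\in T$ themselves, the edges fixed while $T$ was built are the problem. These number at most $p+1$ (the $p$ chosen children plus the parent edge) together with whatever the exposure revealed. I will handle them using $p \le d/4$ from the regime; if $p=o(d)$ fails, I will restrict attention to the "random" half-edges. I expect this bookkeeping to be the fiddly part, and the step where the claim's "$o(d)$" may need the regime assumptions stated just before it.

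\textbf{Edges into $\oring(t)$.} Here $|\oring(t)|/n \le \exp(-\lambda)+o(1)$. Since $\lambda\gg1$, this density is $o(1)$, and possibly not below $1/\ln^2 n$. So I run the same exposure argument with success probability $q=o(1)$. The mean becomes $qd=o(d)$, and the deviation term from the second bullet of \Thm{chernoff} is $\sqrt{d\ln n}=o(d)$, because $d\gg\ln^2 n$ implies $\sqrt{d\ln n}\ll d$. The main obstacle is that $\oring(t)$ is defined through the graph itself, as the non-neighbours of $\wayp_1(t)$, so it is not independent of $v$'s edges. My first attempt is to expose the neighbourhood of $\wayp_1(t)$ as in the proof of \Thm{level}, where each node independently decides whether to attach. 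I then argue that, conditioned on $v$'s own membership, the other nodes' memberships are still nearly independent Bernoullis with the same $\exp(-\lambda)$ bound.

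Finally, I union bound over all $n$ nodes $v$ (failure at most $n\cdot n^{-3}$) and add the failure probability of the level-size event. This gives total failure below $n^{-2}$ once the constants in $\tau$ are adjusted, for example by taking $\tau\ge 4\ln n$.
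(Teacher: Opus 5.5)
Your proposal is correct and follows essentially the paper's proof. The paper also treats $v$'s remaining (unfixed) picks as independent trials, dominates the count landing in $\nbr(t)\cup\wayp_1(t)$ by a binomial with success probability at most $(p+1)d/n<1/\ln^2 n$, and applies the large-deviation bullet of \Thm{chernoff} with a union bound over nodes. It then repeats the argument with density $\exp(-\lambda)+o(1)$ for $\oring(t)$. Your Hoeffding bound with deviation $O(\sqrt{d\ln n})$ and your reading of ``pick'' as the unfixed half-edges are, if anything, tidier than the paper's sketch, which ignores the dependence of $\oring(t)$ on the graph.

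One small arithmetic fix is needed. Since $2^{-c\ln n}=n^{-c\ln 2}$, neither $\tau\ge 3\ln n$ nor $\tau\ge 4\ln n$ gives $2^{-\tau}\le n^{-3}$. Take $\tau\ge 4\log_2 n$ instead, or the paper's threshold $d/\ln n$, for which $n2^{-d/\ln n}\le n^{-2}$ follows from $d\gg\ln^2 n$; both are still $o(d)$.
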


\begin{proof} Fix a node $v$. We will treat each of the at most $d$ neighbors that $v$ has to pick
as independent choices among $\nbr(t) \cup \wayp_1(t) \cup R_1(t)$. The probability that a neighbor
lies in $\nbr(t) \cup \wayp_1(t)$ is at most $|\nbr(t) \cup \wayp_1(t)|/n \leq d(p+1)/n \leq 1/\ln^2n$. Let $X_v$
be the random variable denoting the number of neighbors of $v$ in $\nbr(t) \cup \wayp_1(t)$, 
which is a sum of iid Bernoullis. Note that $\EX[X_v] \leq d/\ln^2 n$.
By \Thm{chernoff}, $\Pr[X_v > d/\ln n] \leq 2^{-d/\ln n}$. By a union bound over all vertices,
the probability that any $X_v > d/\ln n$ is at most $n2^{-d/\ln n} \leq n^{-2}$.

By \Thm{path-length}, with probability $> 1-n^{-4}$, the size of $\oring(t)$ is at most $(\exp(-\lambda) + o(1))n$.
The probability that $v$ has a single neighbor in this set is $o(1)$. Using a Chernoff bound identical
to the one above, the probability that $v$ has more than $\delta n$ neighbors (for any constant $\delta > 0$)
is $n^{-\Theta(1)}$. 

Take a union bound, with high probability, all nodes pick at most $o(d)$ neighbors outside $\inner(t)$.
\end{proof}

\subsection{The matching connection} \label{sec:matching}

For any subset $S \subseteq \inner(t)$, let us construct the bipartite graph $\desc{S} = (S,\nbr(t),E)$
as follows. We connect $s \in S$ to $v \in \nbr(t)$ if $s$ is a descendant of $v$ in the pointing graph
$\point{t}$. 

\begin{claim} \label{clm:mincut} For any $S \subseteq \inner(t)$: connect a source $s$ to all nodes in $S$
and consider the resulting pointing graph.
The size of the $s$-$t$ mincut is the size of the largest matching in $\desc{S}$.
\end{claim}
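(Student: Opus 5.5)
We prove the claim by sandwiching the $s$-$t$ mincut between the largest matching in $\desc{S}$ from both sides, using Menger's theorem (mincut equals the maximum number of edge-disjoint $s$-$t$ paths). Throughout, the graph in question is $H$: the new edges from $s$ to every node of $S$, together with the pointing graph $\point{t}$ restricted to edges reachable from $S$. In the $\ell=1$ setting, every node of $\inner(t)$ points only into $\wayp_1(t)$, every node of $\wayp_1(t)$ points only into $\nbr(t)$, and every node of $\nbr(t)$ points only to $t$. So each $s$-$t$ path in $H$ has the form
$s\to u\to \cdots \to v\to t$ with $u\in S$ and $v\in\nbr(t)$, and its last edge $(v,t)$ certifies that $u$ is a descendant of $v$, i.e.\ $(u,v)$ is an edge of $\desc{S}$.

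\textbf{Upper bound.} Take any family of edge-disjoint $s$-$t$ paths in $H$. Path $P$ uses a first edge $(s,u_P)$ and a last edge $(v_P,t)$. Edge-disjointness makes the $u_P$ distinct, since each $s$-edge is used once. It also makes the $v_P$ distinct, since each node of $\nbr(t)$ has a single edge to $t$. Hence $\{(u_P,v_P)\}$ is a matching in $\desc{S}$, so the mincut is at most the maximum matching. This direction is routine.

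\textbf{Lower bound.} Given a matching $M$ in $\desc{S}$, we must produce $|M|$ edge-disjoint $s$-$t$ paths. For each $(u,v)\in M$ choose a pointing path $u\to w\to v\to t$, which exists because $u$ is a descendant of $v$. The first and last edges are automatically distinct across matched pairs. The danger is the middle edges: two pairs $(u,v),(u',v')$ may route through the same waypoint $w$, and could then share the edge $(w,v)$ only if $v=v'$, which $M$ forbids. Edges $(u,w)$ are determined by $u$, and the $u$'s are distinct. So within the layered structure $S\to\wayp_1\to\nbr(t)\to t$, each edge $(x,y)$ is pinned down by either its tail in $S$ or its head in $\nbr(t)$. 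For $\ell=1$ this gives disjointness directly.

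\textbf{Main obstacle.} The real difficulty is the general statement, where the pointing graph has longer chains, or where an $\inner(t)$ node's path might revisit shared intermediate edges such as $\wayp_i\to\wayp_{i-1}$. Here the shortcut argument fails, and mincut need not equal matching in general. My plan is to rely on the $\ell=1$ assumption fixed just before \Clm{pick-r1}, under which the layered argument above applies verbatim. I would state explicitly that internal edges $\wayp_1\to\nbr(t)$ are distinguished by their head and edges $S\to\wayp_1$ by their tail. If the authors intend something broader, I would fall back to a Hall-type/max-flow argument: show that any cut smaller than the maximum matching yields, by König's theorem, a vertex cover of $\desc{S}$ of that size, using that a cut edge in the middle layer kills only paths through one node of $S$ or one node of $\nbr(t)$.
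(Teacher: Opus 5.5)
Your proof is correct and is essentially the paper's argument. For mincut $\le$ matching, you use edge-disjoint paths (the paper phrases it via max-flow integrality) together with the facts that each $u\in S$ has a single edge from $s$ and each $v\in\nbr(t)$ has a single edge to $t$. For the reverse direction, you route each matched pair along $u\to w\to v\to t$ in the $\ell=1$ layered pointing graph, exactly as the paper does.

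Your tail/head argument for the middle edges is more explicit than the paper's one-line justification. Your caveat is also right: for $\ell\ge 2$, several nodes of $S$ can funnel through fewer $\wayp_2\to\wayp_1$ edges than the matching size, so the claim genuinely depends on the $\ell=1$ setting the paper fixes. For the same reason, your K\"onig fallback could not extend it beyond that setting.
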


\begin{proof} We first prove that the largest matching in $\desc{S}$ is at least the mincut size (say $r$).
By the maxflow-mincut theorem, there is a flow of value $r$ from $S$ to $t$, with unit edge capacities.
By the integrality of flow, this flow consists of $r$ edge disjoint paths from $s$ to $t$. Note that each node
in $S$ has a single edge to $s$, and every node in $\nbr(t)$ has a single edge to $t$. So 
a node in $S \cup \nbr(t)$ participates in at most one flow path. Thus, the $r$ flow paths provide
edge disjoint paths from $S$ to $\nbr(t)$ in the pointing graph $\point{t}$, which constitute a matching in $\desc{S}$.

Now, we prove that the mincut is at least the size of the largest matching in $\desc{S}$. Consider a matching
given by $\phi:S \to \nbr(t)$. For each $s' \in S$, follow a parent to $\wayp_1(t)$, and then a parent to get to $\phi(s) \in \nbr(t)$.
For two nodes $s' \neq s''$ in $S$, $\phi(s') \neq \phi(s'')$. So the corresponding paths are edge disjoint.
Hence, we can send one unit of flow on each such path, and get a flow with value at least the matching size.
The maxflow value (and hence the mincut value) is at least the matching size.
\end{proof}

We can now connect the $s$-$t$ mincut of the Spraypoint graph $\spg{s}{t}$ to matching sizes. Note that $\spg{s}{t}$
is formed by adding the edges connecting $s$ to its neighborhood to $\point{t}$. For convenience,
we will only add the edges connecting $s$ to $\inner(t)$. By \Clm{pick-r1}, this reduces
the mincut by at most $o(d)$.

To bound the matching sizes, we use the following notation, based on random matching sizes.

\begin{definition} \label{def:random-match} For positive integers $\ell, r, h$, consider 
the random bipartite graph $H(L,R,E)$ formed as follows. We have $|L| = \ell$, $|R| = r$, and
each node in $L$ picks $h$ uar neighbors in $R$ with replacement. We use $\mu(\ell,r;h)$ to denote
the expected maximum matching in $H$.
\end{definition}

We now state our main theorem relating $s$-$t$ mincuts in $\spg{s}{t}$ to matching sizes.

\begin{theorem} \label{thm:mincut} Assume $h = o(d)$. The expected $s$-$t$ mincut size of $\spg{s}{t}$
has the following values:
\begin{asparaitem}
    \item If $s \in \nbr(t)$: the size is at least $\mu(d-p,d;h) \pm o(d)$.
    \item If $s \notin \nbr(t)$: the size is at least $\mu(d,d;h) \pm o(d)$.
\end{asparaitem}
\end{theorem}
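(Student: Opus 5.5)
Our plan is to reduce both cases to \Clm{mincut} and then show that the bipartite graph $\desc{S}$, for the relevant $S$, is distributed (up to $o(d)$ discrepancies) like the random graph of \Def{random-match}.

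\emph{Choosing $S$.} Take $S = N(s) \cap \inner(t)$. Since $\spg{s}{t}$ contains the pointing graph and all edges at $s$, restricting to the edges from $s$ into $\inner(t)$ only deletes edges. Hence the mincut of $\spg{s}{t}$ is at least the mincut of this restricted graph. By \Clm{pick-r1}, $|S| \geq d - o(d)$ when $s \notin \nbr(t)$.

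When $s \in \nbr(t)$, $p$ (or slightly more) of $s$'s neighbors lie in $\wayp_1(t)$, by construction of the waypoint level. By \Clm{pick-r1} the remaining non-$\inner(t)$ neighbors number $o(d)$, so $|S| \geq d - p - o(d)$. Throughout we discard a set of size $o(d)$ and absorb it into the $\pm o(d)$ term, using the fact that removing $k$ left vertices changes a maximum matching by at most $k$. Applying \Clm{mincut}, the mincut is at least the maximum matching of $\desc{S}$.

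\emph{Modeling $\desc{S}$ as a random bipartite graph.} Each $x \in S$ picks $h$ uniformly random waypoint parents in $\wayp_1(t)$. Each $w \in \wayp_1(t)$ is itself a descendant of at least one node of $\nbr(t)$, its selecting neighbor, and also of its $h$ uniformly random parents in $\nbr(t)$. So each $x$ is adjacent in $\desc{S}$ to a set containing the parents-of-parents of $x$.

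To get a lower bound, we keep only one $\nbr(t)$-neighbor per chosen waypoint (its first parent). Then each $x \in S$ has at least $h$ neighbors in $\nbr(t)$, and these are the images of $h$ uniformly random waypoints. The key step is to argue that these are close to $h$ independent uniform choices in $\nbr(t)$, i.e.\ \Def{random-match} with $\ell = |S|$ and $r = d$. Three facts support this:
\begin{asparaitem}
\item the waypoint-to-$\nbr(t)$ map is near-uniform, since $|\wayp_1(t)| = (1\pm o(1))pd$ by \Thm{level} and each $v\in\nbr(t)$ selected exactly $p$ waypoints;
\item distinct $x, x' \in S$ rarely share waypoints, since $|S|h \ll pd$ under $h = o(d)$, and $p$ large enough that collisions are $o(d)$ in expectation;
\item the randomness of the configuration model makes the choices of different $x$ essentially independent.
\end{asparaitem}
Monotonicity of maximum matching under adding edges then gives a lower bound of $\mu(|S|, d; h) - o(d)$.

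For $s \in \nbr(t)$, one more adjustment is needed. The node $s$ itself is in $\nbr(t)$, and its direct edge to $t$ is a cut edge available to it, so we can still treat $r = d$. Finally $\mu(d-p-o(d),d;h) \geq \mu(d-p,d;h) - o(d)$.

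\emph{Main obstacle.} The hard step is the decorrelation argument: justifying that $\desc{S}$ stochastically dominates the independent random bipartite graph up to $o(d)$ edges. Dependencies arise from shared waypoints, from the graph being conditioned on the level structure, and from the waypoint-to-$\nbr(t)$ map not being a uniform function. We would first try a coupling. Expose the parents of each $x\in S$ sequentially. Each choice is uniform over at least $(1-o(1))pd$ waypoints, of which at most $o(pd)$ have been ``used'', so the induced $\nbr(t)$-label is within total variation $o(1)$ of uniform. A Chernoff bound (\Thm{chernoff}) then controls the number of bad exposures to $o(d)$ with high probability, and we take expectations to conclude.
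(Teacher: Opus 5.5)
Your proposal follows the paper's argument: take $S$ to be the neighbors of $s$ in $\inner(t)$ (of size $d-p-o(d)$ or $d-o(d)$ by \Clm{pick-r1}), apply \Clm{mincut}, keep one $\nbr(t)$-parent per chosen waypoint, identify $\desc{S}$ with the random bipartite graph of \Def{random-match}, and absorb the $o(d)$ discrepancies into the matching size. The paper simply asserts that identification (``ignoring multiple parents''), so your decorrelation sketch goes further than the paper does; note, though, that the exact $p$-to-one uniformity you cite holds for the waypoint \emph{selection} map, which coincides with the (first) pointing parent only in the paper's idealized $p$-ary picture.
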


\begin{proof} Consider $s \in \nbr(t)$. As discussed earlier, it picks $d-p-1$ random neighbors
after conditioning on the pointing structure. By \Clm{pick-r1}, at most $o(d)$ of these neighbors
lie outside $\inner(t)$. So, it picks $d-p-1-o(d) = d-p-o(d)$ neighbors in $\inner(t)$. Call this set
$S$. By \Clm{mincut}, the $s$-$t$ mincut size is the size of the largest matching in $\desc{S}$.
Each node $s'$ in $S$ (which is in $\inner(t)$) picks $h$ random parents in $\wayp_1(t)$, each of which
has a parent in $\nbr(t)$. (We ignore the fact that nodes in $\wayp_1(t)$ might have multiple parents.)
Thus, each $s'$ effectively picks $h$ random ancestors in $\nbr(t)$.
So $\desc{S}$ is distributed exactly as the random bipartite graph in \Def{random-match},
and the expected matching size is $\mu(d-p-o(d),d;h)$. We can express this quantity
as $\mu(d-p,d;h) \pm o(d)$, since changing any of the sets (in a  bipartite graph)
by $o(d)$ can affect the matching size by at most $o(d)$.

Consider $s \in \wayp_1(t)$. It picks $d-o(d)$ random neighbors in $\inner(t)$. Applying the same
logic as above, the expected mincut has value $\mu(d,d;h) \pm o(1)$.
For $s \in \inner(t)$, it picks $d-h-o(d)$ random neighbors in $\inner(t)$. Since $h=o(d)$,
$s$ basically picks $d-o(d)$ random neighbors, and the value is the same as for $s \in \wayp_1(t)$.
\end{proof}

This theorem gives a lower bound on the size of the mincut. As mentioned in the proof,
we ignore the fact that $\wayp_1(t)$ has potentially $h$ neighbors into $\nbr(t)$, and only
use one of the neighbors for routing. We believe this only loses
a lower order term, and so we ignore it for the sake of cleaner expressions.

\subsection{Expressions for random matching sizes} \label{sec:rand-match}

A good rule of thumb is that $\mu(d,d;h) \approx d(1-\exp(-h))$. If $d-p = \alpha d$,
then: 
\begin{equation}
    \mu(d-p,d;h) = \mu(\alpha d, d;h) = \min\Big[\alpha d, d[1-\exp(-\alpha h)]\Big]
\end{equation}

For $h=1$, these bounds are tight. For larger $h$, there
are only upper bounds. 
Nevertheless, they match up quite accurately with experiments.
The exact lower bound is a more complex expression given by Frieze-Mellsted~\cite{FrMe12},
as explained later.

We show the \emph{upper} bound in the next lemma, and then give the complicated exact expression. (The proof for $\mu(\alpha d, d; h)$ is analogous and omitted.)

\begin{lemma} \label{lem:match} For all $h$, $\mu(d,d;h) \leq d(1-\exp(-h)) - o(d)$.
Also, $\mu(d,d;1) = d(1-1/e) - o(d)$.
\end{lemma}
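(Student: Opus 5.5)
The plan is to bound the maximum matching in $H$ from above by the number of right-side nodes that receive at least one edge, and then compute that quantity exactly for $h=1$. Fix the random bipartite graph $H(L,R,E)$ of \Def{random-match} with $|L|=|R|=d$. Every edge of a matching uses a distinct node of $R$, and such a node must have positive degree. So the maximum matching is at most the number $Y$ of nodes of $R$ that are chosen by at least one node of $L$.

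To compute $\EX[Y]$ we use linearity of expectation (\Thm{lin}). The $L$-side makes $dh$ independent uniform choices in total. A fixed $v\in R$ is therefore missed with probability $(1-1/d)^{dh}$, and
\[
\EX[Y] = d\bigl(1-(1-1/d)^{dh}\bigr).
\]
Since $(1-1/d)^{d}=e^{-1}(1+O(1/d))$, we get $(1-1/d)^{dh}=e^{-h}(1+O(h/d))$. This gives $\EX[Y]=d(1-e^{-h})\pm O(h)$, which is $d(1-e^{-h})$ up to $o(d)$ because $h=o(d)$. Hence $\mu(d,d;h)\le \EX[Y]$ yields the first claim, with the lower-order term absorbed in the stated $o(d)$ slack; the sign convention of that slack is immaterial at this precision.

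For $h=1$ we also need the matching lower bound. Each $u\in L$ has exactly one edge. Choose, for every covered node $v\in R$, one $u$ that picked it; this produces a matching of size exactly $Y$. So the maximum matching equals $Y$ pointwise, and $\mu(d,d;1)=\EX[Y]=d(1-(1-1/d)^d)=d(1-1/e)\pm O(1)$, which is $d(1-1/e)-o(d)$ as stated. If a high-probability version is wanted rather than an expectation, $Y$ is a function of $d$ independent choices and each choice changes $Y$ by at most $1$. McDiarmid's inequality, or the Chernoff-type bound of \Thm{chernoff} applied to the negatively associated indicators, then gives concentration within $O(\sqrt{d\log n})=o(d)$.

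The only delicate step is the first claim. The bound by covered right nodes is genuinely loose for $h\ge2$: two $L$-nodes may share all their choices, and some covered nodes cannot be matched. This is why only an upper bound is claimed there and the exact value is deferred to Frieze--Melsted. The remaining work is bookkeeping the $(1-1/d)^{dh}$ versus $e^{-h}$ error under the regime assumption $h\ll d$. The analogous statement for $\mu(\alpha d,d;h)$ follows the same way, with $\alpha dh$ throws replacing $dh$ and the trivial bound $|L|=\alpha d$.
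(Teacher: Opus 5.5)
Your proposal is correct and follows essentially the same route as the paper. Both arguments bound the maximum matching by the number of non-isolated nodes of $R$, compute that count's expectation by linearity and the estimate $(1-1/d)^{dh}\approx e^{-h}$, and for $h=1$ show that every covered node of $R$ can be matched because each node of $L$ has a single edge; your choice of one preimage per covered node is just a one-shot version of the paper's greedy deletion process. Your concentration remark and your handling of the sign of the error term go slightly beyond the paper without changing the argument, and the error is in fact $O(he^{-h})=O(1)$ uniformly, so you do not even need $h=o(d)$.
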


\begin{proof} Recall that $\mu(d,d;h)$ is the expected maximum matching in the following
random bipartite graph. There are two sets $L,R$ of size $d$. Each node in $L$ makes $h$
random connections with $R$.

Consider a vertex $r \in R$. The probability that an edge does not connect with $r$
is $(1-1/d)$. The probability that no edge makes a connection is $(1-1/d)^{dh}$,
since there are $dh$ random (multi)edges in the graph. Since $1-x \geq \exp(-x - x^2)$,
we lower bound this probability by $\exp(-(1/d + 1/d^2)\cdot dh) = \exp(-h)(1-\Theta(h/d))$.

Consider indicator random variable $X_r$ for $r$ having degree zero.
By linearity of expectation and the calculation above, $\EX[\sum_{r \in R} X_r] = \sum_{r \in R} \EX[X_r]
\geq d \exp(-h) - o(d)$. We can also upper bound $(1-1/d) \leq exp(-1/d)$,
and get the lower bound $d\exp(-h)$. Thus, the expected number of degree zero
nodes in $R$ is $d\exp(-h) \pm o(d)$. These nodes cannot be matched, and hence the
matching size is at most $d(1-\exp(-h)) \pm o(d)$.

When $h=1$, we can match this bound. 
Take the iterative process that picks an arbitrary node in $R$ with non-zero degree, and matches
it arbitrarily to some neighbor in $L$. Then, we delete this node and its neighbors from the graph,
and iterate. When $r \in R$ is removed, it removes its neighbors in $L$. Since they all have degree $1$,
all edges removed are incident to $r$. Hence, degrees of other nodes in $R$ are unaffected. All in all,
every node of non-zero degree in $R$ gets matched.
\end{proof}

\textbf{The Frieze-Mellsted bound:} The optimal value of $\lim_{d \to \infty}$ $\mu(d,d;h)/d$
was computed in~\cite{FrMe12}. We state the bound here, and show that it
is fairly close to the simpler expression of \Lem{match}. 

(Refer to Thm. 3 of~\cite{FrMe12}.) Let $z^*$ be the largest non-negative solution of $(z/h)^{1/(h-1)} + \exp(-z) + 1 = 0$.
Then,
\begin{eqnarray}
    \lim_{d \to \infty} \mu(d,d;h)/d & = & 2 - (1-\exp(-z^*))^h \nonumber \\
    & & - (1+z^*)\exp(-z^*) \\
    & \approx & 1 - (1+z^* - h)\exp(-z^*)
\end{eqnarray}
For $h=2$, the bound is $\approx 0.838$, while the bound of \Lem{match} is $\approx 0.865$.
For $h=4$, the bound is $\approx 0.979$, while the bound of \Lem{match} is $\approx 0.982$.
As $h$ becomes larger, $z^*$ tends to $h$, and the bound above tends to $1-\exp(-h)$.

\subsection{The mincut bound, summarized} \label{sec:mincut-summary}

We summarize the bounds in this section. Fix a destination $t$.
The $s$-$t$ mincut value of the Spraypoint graph $\spg{s}{t}$ is given by the following 
formulas:

\begin{itemize}
    \item For most $s$, the average mincut is $d(1-\exp(-h))$.
    \item The lowest (expected) mincut occurs for $s \in \nbr(t)$. In that case, setting $d-p = \alpha d$, the bound is
    $\min\Big[\alpha d, d[1-\exp(-\alpha h)]\Big]$. When $\alpha$ is close to $1$, the bound is close to $d(1-\exp(-h))$.
\end{itemize}
To get a sense of the benefits of $h$, we plug in different values. The best possible
mincut is $d$. Let us see what fraction of $d$ is achieved.
For $h=1$, we get a $1-1/e \approx 0.63$ fraction. For $h=2$, we get a $1-\exp(-2) \approx 0.86$ fraction.
For $h=4$, we get a $1-\exp(-4) \approx 0.98$ fraction. At this point, the statistical variation of the mincut
value is bigger than any potential improvements. 

The following table gives the various fractions. Note that $p=0$ gives the bound for most $s$.

\begin{center}
    \begin{tabular}{l || l | l | l |}
    $p$ & $h=1$ & $h=2$ & $h=4$ \\ \hline
    $0$ & $0.63$ & $0.86$ & $0.98$ \\ \hline
    $d/4$ & $0.53$ & $0.75$ & $0.75$ \\ \hline
    $d/3$ & $0.49$ & $0.66$ & $0.66$ \\ \hline
    $d/2$ & $0.39$ & $0.5$ & $0.5$ \\ \hline
    \end{tabular}
\end{center}

\section{Modeling Oversubscription} 
\label{sec:app:oversub}

To help us characterize stochastic oversubscription, 
we set up some notation and definitions. We set up a multicommodity flow
problem. Each edge of the graph $G = (V,E)$ is treated as two directed arcs,
one in each direction. The capacity of each arc is one. There is a doubly
stochastic demand matrix $M$ that specifies that amount of flow
that each source needs to send to each destination. So $M_{s,t}$
is the amount of flow that $s$ needs to send to $t$. The total flow in or out of any node is at most one since $M$ is doubly stochastic.

\begin{definition} 
\label{def:oversub} (1) A demand matrix $N$ is a \textbf{feasible matrix} if the multicommodity flow specified by $N$ can be routed while honoring link capacity constraints. (2) The \textbf{demand multiplier} of a doubly stochastic demand matrix $M$,
denoted $c(M)$, is the largest $c$ such that $cM$ can be satisfied. (3) The \textbf{oversubscription ratio (oversub)} is the value of $\max_{M} d/c(M)$.
\end{definition}

Thus, the oversub is the scaled reciprocal of the lowest possible
demand multiplier.

Our first lemma connects the oversub to the fraction of optimal flow along different lengths.

\begin{lemma} \label{lem:oversub} Consider the traffic matrix $M$ that achieves
the oversub bound. For the corresponding optimal flow, let $\delta_i$
be the fraction of the flow that takes length $i$ paths. 
Then, the oversub is at least $\sum_{i} i \delta_i$. 
\end{lemma}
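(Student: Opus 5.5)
This is a capacity-counting argument: compare the total arc capacity the routed flow consumes with the total arc capacity available. Fix the worst-case doubly stochastic matrix $M$ from \Def{oversub}, write $c = c(M)$, and consider an optimal multicommodity flow routing $cM$. Each source $s$ has outflow $\sum_t M_{s,t} = 1$ (doubly stochastic), so the total demand routed is $c\sum_{s,t} M_{s,t} = cn$. Decompose the optimal flow into path flows. A unit of flow on a path of length $i$ occupies $i$ arcs, each for one unit of capacity. By definition of $\delta_i$, the amount of flow on length-$i$ paths is $\delta_i cn$. So the total arc capacity consumed is $\sum_i i\,\delta_i\, cn = cn\sum_i i\delta_i$.

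Next, bound the available capacity. Each undirected edge gives two directed arcs of capacity one, so there are $nd$ arcs and the total capacity is $nd$. Feasibility of the flow gives
\[ cn\sum_i i\delta_i \le nd, \qquad\text{i.e.,}\qquad \frac{d}{c} \ge \sum_i i\,\delta_i . \]
Since $M$ attains the oversub, the oversub equals $\max_{M'} d/c(M') = d/c(M)$, which is at least $\sum_i i\delta_i$. That is the claim.

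The main obstacle is bookkeeping rather than a hard estimate. First, the decomposition should use the optimal flow for $M$. Flow decomposition of a feasible multicommodity flow may also produce cycles, but these can be dropped: removing them only frees capacity and does not change the $\delta_i$ as they are defined over paths. Second, the normalization must be consistent. The paper's oversub is scaled by $d$, which matches exactly the per-node arc capacity $d$ against a per-node demand of $1$. I would state this explicitly, noting that the inequality is tight when every arc is saturated. As a remark, I would add that this is why the shortest-path-heavy flows minimize oversub, which motivates the $\mu_i$-based estimate in \Mod{oversub}.
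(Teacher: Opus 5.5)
Your proof is correct and is the same total-capacity argument the paper uses: the routed flow totals $cn$ (the paper writes this as $nd/\omega$ with $\omega = d/c$), length-$i$ flow consumes $i$ units of arc capacity per unit of flow, and the total is bounded by the $nd$ unit-capacity arcs. Your remarks on discarding cycles in the flow decomposition and on the $d$-normalization refine the paper's argument without changing it.
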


\begin{proof} The proof goes via a simple ``total capacity" argument.
Suppose the oversub is $\omega$. This means that the corresponding
demand multiplier is $d/\omega$, which is the total flow that each node
sends/receives. The total flow in the network is $nd/\omega$, since
each node is a source. 

Let us bound the sum $F \eqdef \sum_{\textrm{arc} \ e} f_e$, where $f_e$ is the total flow on the arc $e$. 
By definition, the amount of flow along length $i$ paths is $\delta_i nd/\omega$. 
Observe that this flow takes up $i \delta_i nd/\omega$ capacity. Thus,
the contribution of this flow to the sum $F$ is $i \delta_i nd/\omega$,
and we bound $F = \sum_i i \delta_i nd/\omega$.

Since each arc has unit capacity and there are $nd$ arcs, $F \leq nd$.
So $\sum_i \delta_i nd/\omega \leq nd$ and $\omega \geq i \sum_i \delta_i$.
\end{proof}

\subsection{Principles underlying the model} 
\label{sec:oversub-principles}

Based on \Lem{oversub}, we will devise a model for oversub in the next section as a function of system parameters, $n, d, p, h$.  The model is based on a number of principles.

{\bf The greedy shorter path principle:} The oversub lower bound of \Lem{oversub} is $\sum_i i \delta_i$,
so it is preferable to have larger values of $\delta_i$ for small $i$, rather than the other way around. 
Based on this intuition, we construct
the demand multipliers for a demand matrix by greedily maximizing the amount
of flow on shorter paths first. For the $\ell = 1$ setting, all flows
paths have length between $1$ to $5$ (\Thm{path-length}). We can ignore path length $1$,
which form a neglible fraction of paths. The main calculations compute the maximum
amount of flow that goes along paths of length $2$, then conditioned on that, maximize
flow along paths of length $3$, so on and so forth. We compute $\mu_2, \mu_3, \mu_4$,
and $\mu_5$ defined as follows. The average amount of flow that a source sends
along length $i$ paths is $\mu_i d$. The total demand multiplier is $(\mu_2 + \mu_3 + \mu_4 + \mu_5)d$,
and hence the oversub is $(\mu_2 + \mu_3 + \mu_4 + \mu_5)^{-1}$.

{\bf The average case principle:} By Theorem 2.1 of~\cite{throughput-centric2021}, the worst-case traffic 
matrix is a permutation matrix, which is a perfect matching of source to sinks. We will actually analyze the average demand multiplier, and hence oversub, of a \textit{random} permutation matrix. In many problems in random graph theory, the worst-case behavior can be shown to be close to the average-case behavior. Specifically, suppose an adversary picks a traffic matrix and then we choose the graph $G$. We can compute the probability that the demand multiplier deviates significantly from the average value. If this probability is small enough (like $< n^{-n}$), we could union bound over all permutation matrices. This means, that with non-trivial probability, a random graph will have high demand multipliers for \emph{all} permutation matrices. This is a common paradigm in random graph/matrix theory: show that the probability of large deviations from average are exponentially small, and then union bound over all the exponential possibilities (proof of expansion of random graphs, Chap. 5 of~\cite{MoRa-book}). 

{\bf The random deletion principle:} Once (say) we have determined $\mu_2$,
this flow uses up some capacity. When computing $\mu_3$,
the capacity constraints have changed. We consider a simplified
model for the change. The length $2$ flow uses up $2\mu_2 nd$ capacity
from the network, so we assume that each edge is fully congested
with probability $2\mu_2nd/nd = 2\mu_2$. We delete these edges, since they are unavailable to carry flow. This 
deletion principle is less accurate for larger length flows. Larger length flows do not use up the capacity of constituent edges. Hence, for most calculations,
we will only assume this principle for smaller length flows.

{\bf The viable path principle:}  For any length $i$, we estimate the number of paths of length $i$ that can be used
for routing. For concreteness, for the remaining principles, we will consider length $3$ (the first non-trivial case).
Such a path goes $s \rightarrow \wayp_1(t) \rightarrow \nbr(t) \rightarrow t$.
\Thm{path-length} tells us how many neighbors of $s$ lead to such paths
(in this case, it is $pd^2/n$ for an average source $s$). 
Because of the deletion principle, some of these edges (from $s$ to $\wayp_1(t)$)
get deleted. Consider some $v \in \wayp_1(t)$ such that $(s,v)$ is not deleted.
The pointing paths from $v$ to $t$ may get removed \emph{after} the deletions.
We need to model the probability of this event. The parameter $h$
ensures that each $v$ has $h$ (edge disjoint) pointing paths to $t$.
We calculate the probability that some path survives the deletion.
Putting it all together, we can estimate the number of paths of length $3$
that allow for routing. 
The next principles account for the intersections/congestion among these paths,
from which we compute the flow that can be sent along these paths.

{\bf The unique position principle:} 
By the mincut bounds of \Lem{match}, there should be enough edge disjoint
paths length between a given source-destination pair to carry the desired flow.
(We do not expect an oversub less than $2$, so even an mincut of $d/2$
suffices for a source-destination pair.) 
Consider the entire collection of edge disjoint length $3$ paths, over all source-destination pairs.
Congestion occurs because some edges
are present among length $3$ paths for multiple source-destination pair.

The concept of \emph{position} helps us bound the congestion. For a path,
an edge has one of three positions: first, middle, or last. \emph{An edge $(u,v)$
cannot be in the first position for more than one path.} When $(u,v)$
is the first edge, $u$ must be the source, and all length $3$ paths
from $u$ (in the collection) are edge disjoint. Similarly, an edge
cannot be in the last position for more than one path. Extrapolating,
we assume that no edge can be in a specific position for more than one path.
(This is technically false, since an edge could be in the middle for multiple
paths, but we believe this only leads to a lower order term.) Overall,
for length $3$, an edge can participate in at most $3$ flow paths of length $3$. Similarly, we assume that for length $i$,
an edge participates in at most $i$ flow paths, one in each position.

{\bf The binomial congestion principle:} We want
to compute the amount of flow along length $3$ paths to estimate $\mu_3$.
Using the viable path principle, suppose we determine that there are $\phi_3 d$ edge disjoint
paths of length $3$ between every source-destination pair. Consider the total
collection of $\phi_3 nd$ paths. We want to model, for a given edge, the number 
of paths that this edge participates in. This is the congestion on that edge,
focusing only on length $3$ flow. By the unique position principle,
the congestion is at most $3$. We will assume that an edge is present in a particular
position independently with probability $\phi_3$. Consider an edge on a particular
length $3$ path. It already occupies a position on this path, and can take
up two other positions (on different paths). The congestion on the edge
is distributed as the binomial $B(2,\phi_3) + 1$ (the binomial  $B(2,\phi_3)$ is the sum of $2$ independent $0$-$1$
random variables with expectation $\phi_3$) plus one. 

{\bf The reciprocal principle:} 
On each edge of the path, using the binomial model, we have a distribution
for the number of \emph{other} paths that use this edge. The maximum value among
all the edges of the path is the congestion of the path.
We use the following simple claim to get a flow bound.

\begin{claim} \label{clm:flow-cong} Consider a family $\cP$ of paths. For
each path in $P \in \cP$, let $c(P)$ be the congestion of $P$ among the paths
$\cP$. (So $c(P)$ is the maximum, among all edges of $P$, of the number
of paths that use the edge.) Then each path $P \in \cP$ can simultaneously route $1/c(P)$ units
of flow without violating the (unit) capacity constraints.
\end{claim}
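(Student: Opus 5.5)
The plan is to verify the capacity constraint edge by edge, directly from the definition of congestion. Assign to each path $P \in \cP$ a flow of value $1/c(P)$, sent along every edge of $P$. Fix an arc $e$ and let $\cP_e \subseteq \cP$ be the set of paths that use $e$, with $k_e \eqdef |\cP_e|$. The total flow on $e$ is then $\sum_{P \in \cP_e} 1/c(P)$, and we must show this is at most $1$, the unit capacity.

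The key observation is that for every $P \in \cP_e$, the congestion $c(P)$ is the maximum over the edges of $P$ of the number of paths using that edge. Since $e$ is one of those edges, $c(P) \geq k_e$, and so $1/c(P) \leq 1/k_e$. Summing over the $k_e$ paths in $\cP_e$ gives
\[
\sum_{P \in \cP_e} \frac{1}{c(P)} \leq k_e \cdot \frac{1}{k_e} = 1 .
\]
If $k_e = 0$, the arc carries no flow. Every arc therefore respects its capacity, and all paths route their $1/c(P)$ units simultaneously.

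One subtlety needs care. If a path traverses the same arc more than once, it contributes its flow to that arc more than once. For paths with repeated edges, we would count multiplicity in the definition of congestion, so that $k_e$ counts traversals and the same inequality holds. In the paper's setting, Spraypoint paths are simple after spraying, because the pointing graph is loop-free, so this does not arise. I do not expect a real obstacle: the whole argument is the single inequality $c(P) \geq k_e$ for $P \ni e$. The only care needed is to treat each direction of an edge as its own unit-capacity arc, matching the multicommodity flow setup of \Def{oversub}.
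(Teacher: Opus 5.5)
Your proposal is correct and follows the paper's own argument: fix an edge $e$ used by $k$ paths, observe that each such path has $c(P) \geq k$, and sum $1/c(P) \leq 1/k$ over the $k$ paths to get a load of at most $1$. Your extra remarks on directed arcs and counting repeated traversals with multiplicity are harmless refinements; the paper does not address them.
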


\begin{proof} Suppose every path $P \in \cP$ routes $1/c(P)$ units of flow.
Consider any edge $e$ that is contained in path $P_1, P_2, \ldots, P_k$.
The congestion of $e$, denoted $c(e)$, is $k$. Observe that for every path $P_i$,
$c(P_i) \geq c(e) = k$. The total amount of flow in $e$
is $\sum_{i \leq k} 1/c(P_i) $ $\leq \sum_{i \leq k} 1/c(e) = \sum_{i \leq k} 1/k = 1$.
\end{proof}

This claim allows us to get the total flow that can be sent along, say, length $3$
paths. Over these paths, we have computed the distribution of congestion.  The average value of the reciprocal of the congestion is the average flow that can be sent along paths of this length. We multiply this average by the total number
of paths to get the total flow.

To move to the next path length, we use the random deletion principle to delete
edges that are used by this flow. We run this entire calculation again for the next path length.
For larger paths lengths $i$ (like $4$ and above), the congestion values of the paths 
are quite close to their maximum value $i$. So we just send $1/i$ units of flow along paths of these
length. This simplification helps us avoid complicated probability calculations that are 
lower order terms.

\subsection{The oversub formula} 
\label{sec:oversub-model}
\label{sec:formula}

We now compute the formula based on the principles above. Given the parameters, $n$, $d$, $p$, and $h$, the final formula is a polynomial function of $p$ and $d/n$, with exponential
dependencies on $h$. There are minor corrections needed for extreme parameter
settings (when $d$ is comparable to $n$, or $p$ is close to $d$).
As discussed earlier, we will estimate $\mu_2, \mu_3, \mu_4$, and $\mu_5$.

\underline{$\mu_2$:} By \Thm{path-length}, for a given destination $t$,
a typical source $s$ would have a $d/n$ fraction of its neighbors in $\nbr(t)$. 
The total number of such paths from $s$ to $t$ is $d^2/n$ and the total number
of edges on these paths is $2d^2/n$. As a fraction of the total number of arcs (directed edges),
we get $2d/n$ ($= (2d^2/n)\cdot n/nd$). We typically assume that $2d \ll n$.
Assuming that an edge is present on these paths with probability $2d/n$,
the probability that an edge is on two (or more) paths is $(2d/n)^2$, which is
negligible. So we assume that no edge is present on two of these paths,
or equivalently, the congestion of every path/edge is at most one.
So all the length $2$ paths across all source-destination pairs are edge disjoint.
Each such path can carry one unit of flow, so $s$ can send $d^2/n$ units
of flow along length $2$ paths to $t$. Thus, $\mu_2 = d/n$

\underline{$\mu_3$:} This is where most of the work goes. By the unique position principle
and the random deletion principle, each edge has probability $\mu_2$ of being in the first
position of a length $2$ flow path. (Same for the last position.) So we will delete
a $\mu_2$ fraction of edges (for the first position), and delete another $\mu_2$ fraction
for the last position. 

We now estimate the number of length $3$ paths that can be used to route flow.
We compute a fraction $\phi_3$ such that each source-destination pair has $\phi_3 d$
viable length $3$ paths to route flow. 
By \Thm{path-length}, a typical source $s$ has a $pd/n = p\mu_2$ fraction of neighbors in $\wayp_1(t)$.
(For extreme large values of $p$, $p\mu_2$ could be larger than one, so we ``correct"
by making this fraction $\min(p\mu_2, 1-\mu_2)$. The latter term just assumes that everything that
is not in $\nbr(t)$ is in $\wayp_1(t)$.) 
Of these corresponding edges incident to $s$, some of them might be in the last position of a length $2$ flow.
They cannot be in the first position, since all first position edges go from $s$ to $\nbr(t)$,
while these edges under consideration go from $s$ to $\wayp_1(t)$. By the deletion principle, we expect a $\mu_2$-fraction
of the edges to be deleted.

So we get that a $\min(p\mu_2, 1-\mu_2) \cdot (1-\mu_2)$ fraction of neighbors of a source $s$
lead to length $3$ paths to the corresponding destination $t$. Consider such a neighbor $v$ of $s$
that is in $\wayp_1(t)$. It has $h$ pointing (likely edge disjoint) paths of length $2$ to the destination $t$. 
Each edge gets deleted with independent probability $2\mu_2$, so a path survives
with probability $(1-2\mu_2)^2 \approx 1-4\mu_2$ (assuming $\mu_2 << 1$). 
The path gets removed with probability $\approx 4\mu_2$.
The probability that at least one of $h$ edge disjoint paths survives is $1 - (4\mu_2)^h$. Overall, the probability that $v \in \wayp_1(t)$ still
has a path to $t$ is $1 - (4\mu_2)^h$. 

Combining with the number of such $v$'s, we get that there are $\phi_3 d$ neighbors
of $s$ that lead to viable $3$ length flows, where:
\begin{equation}
    \phi_3 = \min(p\mu_2, 1-\mu_2) \cdot (1-\mu_2) \cdot (1 - (4\mu_2)^h)
\end{equation}
Our next step is to deal with the congestion. Consider a length $3$ flow path.
Each edge on this path could potentially take up two other positions (on different paths),
using the unique position principle. The congestion on the path
is the maximum of three independently chosen random variables according to the shifted binomial $B(2,\phi_3) + 1$.

\begin{claim} \label{clm:max-bin} Define $Y \eqdef \max(X_1, X_2, X_3)$, where each $X_i$ is independently chosen from $B(2,\phi_3)$.
Then, $\Pr[Y = 0] = (1-\phi_3)^6$, $\Pr[Y = 1] = [(1-\phi^2_3)^3 - (1-\phi_3)^6]$,
and $\Pr[Y = 2] = [1-(1-\phi^2_3)^3]$.
\end{claim}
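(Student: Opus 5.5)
The plan is to compute the distribution of $Y$ through its cumulative distribution function. Since $Y=\max(X_1,X_2,X_3)$ with the $X_i$ independent, we have $\Pr[Y\le k]=\prod_{i\le 3}\Pr[X_i\le k]=\Pr[X_1\le k]^3$. The point probabilities then follow by differencing: $\Pr[Y=k]=\Pr[Y\le k]-\Pr[Y\le k-1]$.

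First, write down the cumulative distribution of a single $X\sim B(2,\phi_3)$.
\begin{itemize}
\item $\Pr[X\le 0]=(1-\phi_3)^2$.
\item $\Pr[X\le 1]=1-\Pr[X=2]=1-\phi_3^2$.
\item $\Pr[X\le 2]=1$.
\end{itemize}
Cubing each value gives $\Pr[Y\le 0]=(1-\phi_3)^6$, $\Pr[Y\le 1]=(1-\phi_3^2)^3$ and $\Pr[Y\le 2]=1$.

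Next, difference consecutive values.
\begin{itemize}
\item $\Pr[Y=0]=(1-\phi_3)^6$.
\item $\Pr[Y=1]=(1-\phi_3^2)^3-(1-\phi_3)^6$.
\item $\Pr[Y=2]=1-(1-\phi_3^2)^3$.
\end{itemize}
These are exactly the three expressions in the claim. As a check, the three probabilities telescope to sum to $1$, and $\Pr[Y=1]\ge 0$ because $1-\phi_3^2\ge(1-\phi_3)^2$ for $\phi_3\in[0,1]$.

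There is no real obstacle in this argument. The only point to state explicitly is that independence of $X_1,X_2,X_3$ is assumed, as in the binomial congestion principle, so that the joint cumulative distribution factorizes.

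After the claim, this distribution feeds directly into $\kappa_3=\EX[1/(Y+1)]$. By the reciprocal principle (\Clm{flow-cong}), the congestion of a path is $Y+1$, so
\[
\kappa_3=(1-\phi_3)^6+\tfrac12\bigl[(1-\phi_3^2)^3-(1-\phi_3)^6\bigr]+\tfrac13\bigl[1-(1-\phi_3^2)^3\bigr].
\]
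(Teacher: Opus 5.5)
Your proposal is correct and is essentially the paper's argument. The paper gets $\Pr[Y=0]$ from all six underlying Bernoullis being zero and $\Pr[Y=2]$ as the complement of no $X_i$ equaling $2$, then obtains $\Pr[Y=1]$ by complement; your formulation via $\Pr[Y\le k]=\Pr[X_1\le k]^3$ followed by differencing is the same computation written uniformly.
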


\begin{proof} We can treat each binomial as the sum of two independent Bernoullis, each
with parameter $\phi_3$. For $Y$ to be zero, all the Bernoullis need to take value zero.
This happens with probability $(1-\phi_3)^6$.

Each $X_i$ takes value $2$ with probability $\phi^2_3$. The probability that no $X_i$
takes value $2$ is $(1-\phi^2_3)^3$. Thus, $\Pr[Y = 2] = [1-(1-\phi^2_3)^3]$.

We deduce $\Pr[Y = 1]$ as $1 - \Pr[Y=0] - \Pr[Y=2]$, and applying the above formulas.
\end{proof}

The congestion on a path is distributed as $Y+1$, where $Y$ is distributed
as in \Clm{max-bin}. As discussed in the earlier section, the average
flow that can be sent is $\EX[1/(Y+1)]$. Using \Clm{max-bin},
this average can be computed easily:
\begin{eqnarray}
    \kappa_3 & = & (1-\phi_3)^6 + [(1-\phi^2_3)^3 - (1-\phi_3)^6]/2 \nonumber \\
    & & + [1-(1-\phi^2_3)^3]/3 
\end{eqnarray}
Finally, we set $\mu_3 = \phi_3 \kappa_3$. 

\underline{$\mu_4$ and $\mu_5$:} We create variables
for the relative sizes of the Spraypoint levels, based on \Thm{path-length}.
\begin{equation}
    \sigma_4 = 1 - (p+1)d/n - \exp(-pd^2/n) \ \ \ \ \ \sigma_5 = \exp(-pd^2/n)
\end{equation}
Let us start with $\mu_4$. The quantity $\sigma_4$ is the fraction of $s$'s
neighbors that lead to a length $4$ flow path. The probability
that these corresponding edges are removed by previous flows is $\mu_2 + 2\mu_3$.
(This edge can take up the last position in length $2$ flows, and
the middle or last position in length $3$ flows.)

Consider a neighbor $v$ of $s$ leading to length $4$ path. As in the previous
$\mu_3$ calculations, we need to compute the probability that $v$ still has
a path to $t$ after the deletions. We will model $v$'s pointing paths as follows.
It has $h$ neighbors in $\wayp_1(t)$, each of which has $h$ edge disjoint paths
to $t$. Since we are considering longer paths, we will go with the weaker deletion
principle of removing edges with probability $2\mu_2$. The probability
that $v$ still has a path to $t$ can be approximated as $\beta$,
defined below.
\begin{equation}
    \beta = 1 - [1- (1-2\mu_2)(1-(4\mu_2)^h)]^h
\end{equation}
The fraction of viable paths is $\phi_4 = \sigma_4 \beta$.
For paths of length $4$, we will simply assume the worst case congestion of $4$,
so we can only send $1/4$ units of flow on each path. So we set $\kappa_4 = 1/4$ and
 $\mu_4 = \sigma_4 \kappa_4 $ $=\sigma_4 \beta (1-\mu_2 - 2\mu_3)/4$.

For $\mu_5$, we only look at deletions incident to the source $s$.
This is because there will be many paths (polynomial in $h$)
from a neighbor of $s$ (in $\oring(t)$) to the destination $t$. The odds
are quite high that some path will still be available for routing, so we do not bother
to model it carefully. 
Following the logic for $\mu_4$, we estimate
$\mu_5 = \sigma_4 (1-\mu_2 - 2\mu_3 - 3\mu_4)/5$.

Let $\mu^+_i = \max(\mu_i, 0)$. This deals with extreme situations ($n$ is too small,
$p$ or $d$ are large).
Finally, our oversub bound is
\begin{equation}
(\mu^+_2 + \mu^+_3 + \mu^+_4 + \mu^+_5)^{-1}
\end{equation}

\end{appendix}

\end{document}